\newcites{App}{References}
\renewcommand\vec{\bm}
\newcommand{\simfn}{\mathtt{sim}} 
\newcommand{\truncsimfn}{\underline{\simfn}} 
\newcommand{\blockfn}{\mathtt{BlockFn}} 
\newcommand{\distfn}{\mathtt{dist}} 
\newcommand{\valset}{\mathcal{V}} 
\newcommand{\entset}{\mathcal{R}} 
\newcommand{\partset}{\mathcal{E}} 
\newcommand{\1}[1]{\mathbb{I}\!\left[#1\right]} 
\newcommand{\euler}{\mathrm{e}} 
\newcommand{\dblink}{\texttt{\upshape \lowercase{d-blink}}} 
\newcommand{\blink}{\texttt{\upshape \lowercase{blink}}} 
\def\spacingset#1{\renewcommand{\baselinestretch}%
  {#1}\small\normalsize} \spacingset{1}
\newtheorem*{remark}{Remark}
\newtheorem{proposition}{Proposition}
\newtheorem*{definition}{Definition}
\newcommand{\arxiv}{1}
\title{\bf \dblink: Distributed End-to-End Bayesian Entity Resolution}
\author{Neil G.~Marchant\textsuperscript{a} \and
	Andee Kaplan\textsuperscript{b} \and 
	Daniel N.~Elazar\textsuperscript{c} \and
	Benjamin I.~P.~Rubinstein\textsuperscript{a} \and 
	Rebecca C.~Steorts\textsuperscript{d}}
\date{
	\textsuperscript{a}School of Computing and Information Systems, University 
	of Melbourne\\
	\textsuperscript{b}Department of Statistics, Colorado State University\\
	\textsuperscript{c}Methodology Division, Australian Bureau of Statistics\\
	\textsuperscript{d}Department of Statistical Science and Computer Science, Duke University\\Principal Mathematical Statistician, United States Census Bureau\\
	DRB \#: CBDRB-FY20-309\\[2ex]
	\today}
\begin{document}
\maketitle

\bigskip
\begin{abstract}
Entity resolution (ER; also known as record linkage or de-duplication) is 
the process of merging noisy databases, often in the absence of unique 
identifiers. 
A major advancement in ER methodology has been the application of Bayesian 
generative models, which provide a natural framework for inferring latent 
entities with rigorous quantification of uncertainty. 
Despite these advantages, existing models are severely limited in practice, 
as standard inference algorithms scale quadratically in the number of records. 
While scaling can be managed by fitting the model on separate blocks of 
the data, such a na\"{i}ve approach may induce significant error 
in the posterior. 
In this paper, we propose a principled model for scalable Bayesian ER, 
called ``\underline{d}istributed \underline{B}ayesian 
\underline{link}age'' or \dblink, which jointly performs 
blocking and ER without compromising posterior correctness. 
Our approach relies on several key ideas, including: 
(i)~an auxiliary variable representation that induces a partition of 
the entities and records into blocks; 
(ii)~a method for constructing well-balanced blocks based on k-d trees; 
(iii)~a distributed partially-collapsed Gibbs sampler with improved mixing; and 
(iv)~fast algorithms for performing Gibbs updates. 
Empirical studies on six data sets---including a case study 
on the 2010 Decennial Census---demonstrate the scalability and 
effectiveness of our approach.
\end{abstract}

\noindent%
{\it Keywords:} auxiliary variable, distributed computing, Markov chain Monte 
Carlo, partially-collapsed Gibbs sampling, record linkage

\newpage
\if0\arxiv
\spacingset{1.5}
\fi

\section{Introduction}
\label{sec:introduction}

﻿When information about a statistical population is scattered across multiple 
databases, there may be immense value in combining them. 
A combined database can provide a more accurate and complete view of the 
population by improving coverage, bringing together analytic variables, and 
resolving erroneous and missing values. 
This allows statisticians to draw  richer and more reliable conclusions.  
Among the types of questions that can be addressed by combining such databases 
are the following: 
How accurate are census enumerations for minority groups 
\citep{winkler_overview_2006}? 
How many of the elderly are at high risk for sepsis in different parts of the 
country \citep{saria_trillion_2014}?  
How many people were victims of war crimes in recent conflicts in Syria 
\citep{price_2013}? 

An important step when combining databases is identifying records that refer 
to the same statistical unit. 
This is challenging in practice because consistent identifiers, such as 
social security numbers, are often not available. 
Identifiers may be omitted due to privacy concerns, they may be inconsistent 
across the databases, or they may have never been recorded. 
In such cases, practitioners must rely on  {\em entity resolution} (ER) to 
infer the relationships between records and statistical units 
(entities) using linking variables in the observed data. 
This problem is studied in the statistics, machine learning, database and natural language processing 
communities, and is also known as entity disambiguation, merge-purge, 
record linkage, deduplication and co-reference resolution \citep{christen_data_2012, dong_big_2015, soon_machine_2001}. 

ER is not only a crucial tool for statistical analysis, it is also a 
challenging statistical and computational problem in itself. 
This is because many databases lack reliable linking variables, the record 
comparison space scales quadratically in the number of records, and the 
number of parameters to be estimated grows with the number of records 
\citep{Herzog_2007, lahiri_2005, winkler_state_1999, winkler_2000}. 
To meet present and near-future needs, ER methods must be flexible and 
scalable to large databases. 
Furthermore, they must be able to handle uncertainty and be easily 
integrated with post-ER statistical analyses, such as regression. 
All of this must be done while achieving low error rates. 

Bayesian models offer a promising framework for ER as they support 
natural uncertainty propagation, flexible modeling assumptions, and 
incorporation of prior information. 
However, existing Bayesian ER models either ignore scalability 
\citep{steorts_entity_2015, zanella_flexible_2016, sadinle_bayesian_2017} or 
manage scalability in an unprincipled manner by applying blocking 
outside the Bayesian framework
\citep{fortini_bayesian_2001, larsen_advances_2005, larsen_experiment_2012, 
tancredi_hierarchical_2011, gutman_bayesian_2013, sadinle_detecting_2014, 
steorts_bayesian_2016}. 
Blocking improves scalability by partitioning records into blocks and 
assuming records in different blocks do not refer to the same entity 
\citep{christen_survey_2012}. 
However, when blocking is performed as a separate deterministic step it is 
not possible to propagate the uncertainty. 
Moreover, since the blocks are fixed, a poor blocking design may compromise 
the accuracy of the entire ER process. 
In other words, one sacrifices uncertainty propagation and accuracy 
for scalability. 

In this paper, we propose a principled approach to scaling Bayesian ER 
models, which does not suffer from the limitations of ad-hoc deterministic 
blocking. 
Using the \blink\ ER~model~\citep{steorts_entity_2015} as a foundation, 
we propose a scalable and distributed extension called 
``\underline{d}istributed \blink'' or \dblink\ for short, which integrates 
probabilistic blocking in a fully Bayesian framework. 
To our knowledge, \dblink\ is the first Bayesian ER model which 
supports propagation of uncertainty between the blocking and matching\slash 
linking stages of ER, without compromising the correctness of the 
posterior. 
In addition, \dblink\ supports distributed\slash parallel inference at 
the block level to further improve scalability to large databases.

We make several contributions to the literature. 
First, we propose an auxiliary variable representation of \blink, which 
induces a partitioning of the entities and records into blocks. 
These play a similar role as traditional deterministic blocks, however the 
assignments of latent entities and records to blocks are \emph{random} 
and inferred \emph{jointly} with the other model parameters. 
Second, we prove that our auxiliary variable representation 
preserves the marginal posterior distribution over the model parameters. 
This is a desirable property, as it means our inferences are theoretically 
independent of the blocking design. 
Third, we propose a method for constructing well-balanced blocks based 
on $k$-d trees. 
Fourth, we design a distributed partially-collapsed Gibbs sampler to perform 
inference, and demonstrate superior mixing times when compared to a 
standard Gibbs sampler. 
Fifth, we propose algorithms for improving computational efficiency of the 
Gibbs updates which leverage indexing data structures and a novel 
perturbation sampling algorithm. 

We implement our proposed methodology as an open-source Apache Spark 
package\footnote{Spark package source code available at 
  \url{https://github.com/cleanzr/dblink}.}
and provide an R interface for broad accessibility\footnote{R package 
  source code available at \url{https://github.com/cleanzr/dblinkR}.}. 
We conduct empirical evaluations on two synthetic and three real data sets, 
demonstrating efficiency gains in excess of 300$\times$ compared to 
\blink. 
To illustrate the effectiveness of our approach for realistic ER tasks, 
we present a case study using Census and administrative data from the 
U.S.\ state of Wyoming.

The paper is organized as follows. 
In Section~\ref{sec:related} we review related work in ER methodology 
and approximate inference algorithms. 
We then formulate ER in a Bayesian setting in Section~\ref{sec:model}, and 
present the \dblink\ model with integrated probabilistic blocking.
In Section~\ref{sec:partition-fn} we provide guidelines for selecting 
blocking functions. 
We then discuss inference and propose a distributed partially-collapsed 
Gibbs sampler in Section~\ref{sec:inference}. 
We suggest additional methods for improving computational efficiency of 
inference in Section~\ref{sec:tricks}.
Section~\ref{sec:experiments} provides a comprehensive empirical evaluation, 
and Section~\ref{sec:decennial} presents a case study to U.S.\ Census and 
administrative data. 
We make closing remarks in Section~\ref{sec:conclusions}.

\section{Related work}
\label{sec:related}
We review related work across three main areas---ER methodology, inference 
for Bayesian ER models, and distributed Markov chain Monte Carlo (MCMC).

\paragraph{Entity resolution methodology.}
The first probabilistic approach to ER was due to 
\citet{newcombe_automatic_1959}, who applied matching rules to 
pairs of records. 
This idea was later formalized in a seminal paper by 
\citet{fellegi_theory_1969} within a decision-theoretic framework.
Many variations of the Fellegi-Sunter (FS) approach have been proposed 
(for surveys, see~\citealp{winkler_overview_2006, winkler2014matching}), 
including a generalization to multiple 
databases~\citep{sadinle_generalized_2013}.
Others have addressed scalability of FS-type approaches using 
blocking\slash indexing methods (see 
\citealp{christen_data_2012,steorts_comparison_2014} for surveys) and 
efficient data structures \citep{enamorado_using_2019}.
However, traditional FS approaches do not naturally support propagation of 
ER uncertainty, and existing methods for scaling make approximations that 
sacrifice accuracy.

While the FS approach has been highly influential, it has also been 
criticized due to its lack of support for duplicates within databases; 
misspecified independence assumptions; and its dependence on subjective 
thresholds \citep{tancredi_hierarchical_2011}.
These limitations have prompted development of more sophisticated 
Bayesian models, including models for bipartite 
matching~\citep{fortini_bayesian_2001, larsen_advances_2005, 
  larsen_experiment_2012, tancredi_hierarchical_2011, gutman_bayesian_2013, 
  sadinle_bayesian_2017, mcveigh_scaling_2019}, 
deduplication~\citep{sadinle_detecting_2014, tancredi_unified_2020}
and matching across multiple databases~\citep{steorts_entity_2015, 
  steorts_bayesian_2016}.
Several of these models operate on attribute-level comparisons between 
pairs of records in a similar vein as the FS approach 
\citep{larsen_advances_2005, larsen_experiment_2012, gutman_bayesian_2013, 
  sadinle_detecting_2014, sadinle_bayesian_2017, mcveigh_scaling_2019}.
This contrasts with entity-centric generative models which assume the 
records arise as distortions to some latent entity attributes 
\citep{tancredi_hierarchical_2011, steorts_entity_2015, 
  steorts_bayesian_2016, tancredi_unified_2020}.

In scenarios where training data is scarce or unavailable, Bayesian 
generative models tend to be more robust than discriminative or 
likelihood-based methods, as the priors have a regularizing effect. 
Bayesian generative models are also amenable to theoretical analysis: 
recent work has obtained lower bounds on the probability of misclassifying 
the entity associated with a record \citep{steorts_performance_2017}.
However, a major downside of Bayesian ER models is the computational 
cost of performing inference (see discussion below).

Apart from these advances in Bayesian models for ER (largely undertaken 
in statistics), there have been an abundance of contributions from the 
database and machine learning communities 
\citep[see surveys by][]{getoor_entity_2012, christen_data_2012}.
Their focus has typically been on rule-based 
approaches~\citep{fan_reasoning_2009, singh_generating_2017}, 
supervised learning approaches~\citep{mudgal_deep_2018}, 
hybrid human-machine approaches~\citep{wang_crowder:_2012, 
  gokhale_corleone:_2014}, 
and scalability~\citep{papadakis_comparative_2016}.
Broadly speaking, all of these approaches rely on either humans in-the-loop 
or large amounts of labelled training data, which is not generally the 
case in the Bayesian setting.

\paragraph{Inference for Bayesian ER models.}
Most prior work on Bayesian generative models for ER
\citep[e.g.][]{tancredi_hierarchical_2011, gutman_bayesian_2013,
  steorts_entity_2015} has relied on Gibbs sampling for inference. 
Compared to other Markov chain Monte Carlo (MCMC) algorithms, 
Gibbs sampling is relatively easy to implement, however it may suffer 
from slow convergence and poor mixing owing to its highly local 
moves~\citep{liu_monte_2004}. 
Scalability is also a challenge, as a na\"{i}ve Gibbs update for the 
linkage structure requires all-to-all comparisons between records 
(or between records and entities for entity-centric models). 
This issue is often managed by applying deterministic blocking prior to 
Gibbs sampling, thereby sacrificing accuracy and proper 
treatment of uncertainty \citep{larsen_advances_2005, larsen_experiment_2012, 
  tancredi_hierarchical_2011, gutman_bayesian_2013, sadinle_detecting_2014}.

In the broader context of clustering models, the \emph{split-merge 
	algorithm}~\citep{jain_split-merge_2004} has been proposed 
as an alternative to Gibbs sampling.
It is a Metropolis-Hastings algorithm, which traverses the space of 
clusterings via proposals that split individual clusters or merge pairs 
of clusters. 
Since multiple cluster items are updated in a single move, it is less 
susceptible to becoming trapped in local modes.
\cite{steorts_bayesian_2016} applied this algorithm, in combination with 
deterministic blocking, to update the linkage structure in an ER model 
similar to \blink. 
A close relative of the split-merge algorithm is the \emph{chaperones 
  algorithm}, which was proposed for inference in microclustering models
\citep{zanella_flexible_2016}. 
The chaperones algorithm is expected to be more efficient, as it 
preferentially focuses on more likely cluster reassignments, 
through a user-specified biased distribution on the product space of 
cluster items. 
However, the biased distribution must be designed so that random 
item pairs can be drawn efficiently, without explicitly constructing 
the product space. 

More recently, \citet{zanella_informed_2020} proposed a general 
framework for designing informative proposals in a Metropolis-Hastings 
setting, which is suited for discrete spaces (e.g.\ the space of 
possible linkage structures). 
They show that \emph{locally-balanced proposals} are asymptotically-optimal 
within the class of pointwise informative proposals, and demonstrate 
significant improvements in efficiency when compared to a split-merge-type
algorithm.
However, computing a locally-balanced proposal for the linkage structure 
is computationally challenging due to quadratic scaling. 
This can be mitigated to some extent by running locally-balanced 
updates within randomly-selected sub-blocks of records. 
However to avoid poor mixing, care must be taken to ensure that  
randomly-selected sub-blocks contain likely matching records.

In contrast to much of the literature on Bayesian ER models, 
\citet{mcveigh_scaling_2019} proposed a method that combines deterministic 
blocking and restricted MCMC (based on earlier work by 
\citealp{mcveigh_practical_2017}). 
They balance approximation error by performing coarse-grained 
deterministic blocking\slash indexing as an initial step, followed by 
data-dependent post-hoc blocking.
During inference, the linkage structure is updated using locally-balanced 
proposals, restricted to the post-hoc blocks. 
They demonstrate improved scalability---to data sets with several hundred 
thousand of record---with minimal risk of approximation error. 
However, their approach is not directly compatible with distributed 
inference (see below) and may require modification for use with an 
entity-centric model.

\paragraph{Parallel\slash distributed MCMC.}
Recent literature has focused on using parallel and 
distributed computing to scale up MCMC algorithms, 
where applications have included Bayesian topic 
models~\citep{newman_distributed_2009, smola_architecture_2010, 
  ahn_distributed_2014} 
and mixture models~\citep{williamson_parallel_2013, 
  chang_parallel_2013, lovell_clustercluster:_2013, 
  ge_distributed_2015}.
We review the application to mixture models, as they are 
conceptually similar to ER models.

Existing work has concentrated on Dirichlet process (DP) 
mixture models and hierarchical DP mixture 
models.
The key to enabling distributed inference for these 
models is the realization that a DP mixture model can 
be reparameterized as a mixture of DPs.
Put simply, the reparameterized model induces a 
\emph{partitioning} of the clusters into blocks, such that clusters 
assigned to \emph{distinct blocks} are conditionally 
independent. 
As a result, variables within blocks can be updated in parallel.
\cite{williamson_parallel_2013} exploited this idea at the 
thread level to parallelize inference for a DP mixture model.
\cite{chang_parallel_2013} followed a similar approach, but 
included an additional level of parallelization within blocks 
using a parallelized version of the split-merge algorithm.
Others~\citep{lovell_clustercluster:_2013, ge_distributed_2015} 
have developed distributed implementations in the MapReduce 
framework. 

We do not consider DP mixture models in our work, as their behavior 
is ill-suited for ER applications.\footnote{With a DP prior, the 
number of clusters grows logarithmically in the number of records, 
but empirical observations call for near-linear growth~\citep{zanella_flexible_2016}.
}
However we do borrow the reparameterization idea, albeit with a more flexible 
partition specification which permits similar entities to be co-blocked, 
while facilitating load balancing.
It would be interesting to see whether similar ideas can be applied to 
microclustering models~\citep{zanella_flexible_2016}, 
however preserving the marginal posterior distribution seems 
challenging in this case.

\section{A scalable model for Bayesian ER}
\label{sec:model}

In this section, we present our scalable ER model called \dblink, 
which integrates probabilistic blocking in a fully Bayesian framework.
Our model can be viewed as an extension of the \blink\ model 
\citep{steorts_entity_2015} that incorporates an auxiliary partition 
of the latent entity parameter space into blocks. 
Unlike ad-hoc blocking approaches used previously in the literature 
\citep{larsen_advances_2005, larsen_experiment_2012, 
  tancredi_hierarchical_2011, gutman_bayesian_2013, sadinle_detecting_2014, 
  steorts_bayesian_2016}, 
the blocks in \dblink\ are random, and inferred jointly with the other 
model parameters. 
This enables propagation of uncertainty between the blocking and ER stages.
In addition, \dblink\ extends \blink\ with support for missing values and 
user-defined attribute similarity measures.

We describe notation and assumptions in Section~\ref{sec:problem}, before 
presenting \dblink\ in Section~\ref{sec:model-specification}.
We define attribute similarity measures in Section~\ref{sec:attribute-sim-measure}, 
including an optional truncation approximation which can improve scalability.
In Section~\ref{sec:posterior-dist}, we prove that the marginal posterior of  
\dblink\ (integrated over the blocks) reduces to \blink\ under certain 
conditions.
This is a desirable property, as it means our inferences are theoretically independent of the blocking design. 
Finally, in Section~\ref{sec:partition-motivation} we explain how the auxiliary 
blocks are beneficial in scaling and distributing inference.

\subsection{Notation and problem formulation}
\label{sec:problem}
In this section, we define notation and formulate ER in a Bayesian 
setting.
Consider a collection of $T$ tables\footnote{We define a \emph{table} 
as an ordered (indexed) collection of records, which may contain
duplicates (records for which all attributes are identical).} 
(databases) indexed by $t$, each with $R_t$ records (rows) indexed by $r$
and $A$ aligned attributes (columns) indexed by $a$.
Associated with the records is a fixed population of entities of size $E$ 
indexed by $e$. 
Each entity $e$ is described by a set of attributes 
$\vec{y}_e = [y_{ea}]_{a = 1 \ldots A}$, which are aligned with the 
record attributes. 
The population of entities is partitioned into $B$ blocks for 
computational convenience, using a blocking function $\blockfn$ that maps an 
entity $e$ to a block based on its attributes $\vec{y}_e$.
We assume each record $(t,r)$ belongs to a block $\gamma_{tr}$ and 
is associated with an entity $\lambda_{tr}$ within that block.
The value of the $a$-th attribute for record $(t,r)$ is 
denoted by $x_{tra}$, and is assumed to be a noisy observation 
of the associated entity's true attribute value $y_{\lambda_{tr}a}$.
We allow for the fact that some attributes $x_{tra}$ may be missing
completely at random through a corresponding indicator variable 
$o_{tra}$~\cite[p.~12]{little_statistical_2002}.

Table~\ref{tbl:notation} summarizes our notation, including 
model-specific parameters which will be introduced shortly.
We adopt the following rules to compactly refer to sets of variables:
\begin{itemize}
  \item A boldface lower-case variable denotes the set of 
  \emph{all attributes}: e.g.~$\vec{x}_{tr} = [x_{tra}]_{a=1\ldots A}$.
  \item A boldface capital variable denotes the set of 
  \emph{all index combinations}: 
  e.g.~$\vec{X} = [x_{tra}]_{t=1\ldots T; r=1 \ldots R_t; a=1 \ldots A}$.
\end{itemize}
We also define notation to separate the record attributes $\mathbf{X}$ 
into an observed part $\mathbf{X}^{(o)}$ (those $x_{tra}$'s for which 
$o_{tra} = 1$) and a missing part $\mathbf{X}^{(m)}$ (those 
$x_{tra}$'s for which $o_{tra}=0$).

After specifying a generative model (see next section), we perform 
ER by inferring the \emph{joint} posterior distribution over:
\begin{itemize}
  \item the block assignments $\vec{\Gamma} = [\gamma_{tr}]_{t = 1 \ldots T; r = 1 \ldots R_t}$,
  \item the linkage structure $\vec{\Lambda} = [\lambda_{tr}]_{t = 1 \ldots T; r = 1 \ldots R_t}$, and 
  \item the true entity attribute values $\vec{Y} = [y_{ea}]_{e = 1 \ldots E; a = 1 \ldots A}$,
\end{itemize}
conditional on the observed record attribute values $\mathbf{X}^{(o)}$. 
Note that we operate in a fully unsupervised setting, since we do 
not condition on ground truth data for the links or entities. 
Inferring $\vec{\Gamma}$ is equivalent to the \emph{blocking} stage 
of ER, where the records are partitioning into blocks to limit the 
comparison space.
Inferring $\vec{\Lambda}$ is equivalent to the \emph{matching\slash linking} 
stage of ER, where records that refer to the same entities are linked together. 
Inferring $\vec{Y}$ is equivalent to the \emph{merging} stage, where 
linked records are combined to produce a single representative record. 
By inferring $\vec{\Gamma}$, $\vec{\Lambda}$ and $\vec{Y}$ jointly, we are 
able to propagate uncertainty between the three stages.

\subsection{Model specification}
\label{sec:model-specification}
We now present our proposed model \dblink\ by describing the generative 
process.
We provide a visual representation of the model in 
Figure~\ref{fig:plate-diagram}, with key differences from \blink\ 
highlighted in a dashed blue line style.

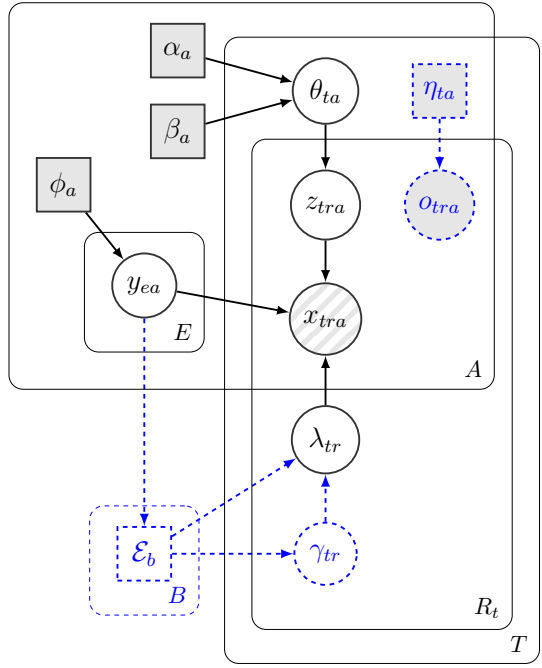
\begin{figure}
  \centering
  \resizebox{0.45\linewidth}{!}{%
    \begin{tikzpicture}
	\tikzset{
		hatch distance/.store in=\hatchdistance,
		hatch distance=8pt,
		hatch thickness/.store in=\hatchthickness,
		hatch thickness=2pt
  }

	\makeatletter
	\pgfdeclarepatternformonly[\hatchdistance,\hatchthickness]{flexible hatch}
	{\pgfqpoint{0pt}{0pt}}
	{\pgfqpoint{\hatchdistance}{\hatchdistance}}
	{\pgfpoint{\hatchdistance-1pt}{\hatchdistance-1pt}}%
	{
		\pgfsetcolor{\tikz@pattern@color}
		\pgfsetlinewidth{\hatchthickness}
		\pgfpathmoveto{\pgfqpoint{0pt}{0pt}}
		\pgfpathlineto{\pgfqpoint{\hatchdistance}{\hatchdistance}}
		\pgfusepath{stroke}
	}
	\makeatother

	\tikzstyle{main}=[circle, minimum size=9mm, thick, draw=black!80, node distance=17mm]
	\tikzstyle{hyparam}=[rectangle, minimum size=5mm, thick, draw=black!80, fill=black!10, node distance=17mm]
	\tikzstyle{param}=[rectangle, minimum size=8mm, thick, draw=black!80, node distance=17mm]
	\tikzstyle{connect}=[-latex, thick]
	\tikzstyle{shortconnect}=[-latex, thin]
	\tikzstyle{plate}=[rectangle, inner sep=4mm, yshift=-1mm, draw=black, rounded corners=6pt]
	\tikzstyle{highlight}=[draw=blue,text=blue, dash pattern=on 2pt off 2pt]
	\tikzstyle{platelab}=[anchor=south east, xshift=-0.4mm, yshift=0.4mm]
	\tikzstyle{switch}=[circle, minimum size=1mm, fill=black, draw=black]
	\node[param, fill=black!10] (alpha) {$\alpha_{a}$};
	\node[main] (theta) [below right of=alpha, xshift=10mm, yshift=6mm] {$\theta_{ta}$};
	\node[param, fill=black!10] (beta) [below left of=theta, xshift=-10mm, yshift=6mm] {$\beta_{a}$};
	\node[main] (z) [below of=theta] {$z_{tra}$};
	\node[main, pattern=flexible hatch, pattern color=black!10] (x) [below of=z] {$x_{tra}$};
	\node[main] (lambda) [below of=x, yshift=-1mm] {$\lambda_{tr}$};
	\node[main, highlight] (gamma) [below of=lambda] {$\gamma_{tr}$};
	\node[main] (y) [left of=x, xshift=-10mm, yshift=5mm] {$y_{ea}$};
	\node[param, fill=black!10] (phi) [above left of=y, yshift=3mm] {$\phi_{a}$};
	\node[param, highlight] (P) [left of=gamma, xshift=-10mm] {$\partset_{b}$};
	\node[param, fill=black!10, highlight] (eta) [right of=theta] {$\eta_{ta}$};
	\node[main, fill=black!10, highlight] (o) [below of=eta] {$o_{tra}$};
	\path (alpha) edge [connect] (theta)
	(beta) edge [connect] (theta)
	(theta) edge [connect] (z)
	(z) edge [connect] (x)
	(lambda) edge [connect] (x)
	(gamma) edge [connect, highlight, fill=blue] (lambda)
	(eta) edge [connect, highlight, fill=blue] (o)
	(y) edge [connect] (x)
	(y) edge [connect, highlight, fill=blue] (P)
	(P) edge [connect, highlight, fill=blue] (gamma)
  (P) edge [connect, highlight, fill=blue] (lambda)
	(phi) edge [connect] (y);
	\node[plate, fit=(y)] (plate-e) {};
	\node[platelab] at (plate-e.south east) (e) {\footnotesize$E$};
	\node[plate, fit=(z) (gamma) (x) (lambda) (o), inner sep=5.5mm] (plate-r) {};
	\node[platelab] at (plate-r.south east) {\footnotesize$R_t$};
	\node[plate, fit=(theta) (eta) (plate-r), inner sep=4mm] (plate-t) {};
	\node[platelab] at (plate-t.south east) {\footnotesize$T$};
  \node[plate, fit=(P), highlight] (plate-P) {};
  \node[platelab, text=blue] at (plate-P.south east) {\footnotesize$B$};
	\node[plate, fit=(phi) (alpha) (x) (eta) (plate-e), yshift=0mm, inner sep=4mm] (plate-a) {};
	\node[platelab] at (plate-a.south east) {\footnotesize$A$};
\end{tikzpicture}
  }
  \caption{Plate diagram for \dblink. 
    Extensions to \blink\ are highlighted in a dashed blue line style.
    Circular nodes represent random variables; square nodes represent 
    deterministic variables; (un)shaded nodes represent (un)observed variables; 
    arrows represent conditional dependence; and plates represent replication 
    over an index.
  }
  \label{fig:plate-diagram}
\end{figure}

\begin{table}
  \caption{Summary of notation.}
  \label{tbl:notation}
  \begin{center}
  \newcolumntype{Z}{>{\arraybackslash}X}
  \spacingset{1}
  \footnotesize
  \begin{tabularx}{\linewidth}{@{}l Z @{} l Z@{}}
  	\toprule
  	Symbol                       & Description & Symbol & Description\\
  	\midrule
    $t \in 1 \ldots T$           & index over tables 
    & $\gamma_{tr}$              & assigned block for record $r$ in table $t$ \\
    $r \in 1 \ldots R_t$         & index over records in table $t$ 
    & $\lambda_{tr}$             & assigned entity for record $r$ in table $t$ \\
    $e \in 1 \ldots E$           & index over entities 
    & $\theta_{ta}$              & prob.\ attribute $a$ in table $t$ is distorted\\
    $b \in 1 \ldots B$           & index over block 
    & $\alpha_a, \beta_a$        & distortion hyperparams.\ for attribute $a$ \\
    $a \in 1 \ldots A$           & index over attributes 
    & $\eta_{ta}$                & prob.\ attribute $a$ in table $t$ is observed \\
    $v \in 1 \ldots |\valset_a| $ & index over domain of attribute $a$ 
    & $\valset_a$                & domain of attribute $a$ \\
    $R = \sum_{t} R_t$           & total number of records 
    & $\phi_{a}(\cdot)$          & distribution over domain of attribute $a$ \\
    $x_{tra}$                    & attribute $a$ for record $r$ in table $t$     
    & $\simfn_a(\cdot, \cdot)$   & similarity measure for attribute $a$ \\
    $z_{tra}$                    & distortion indicator for $x_{tra}$ 
    & $\entset_e$                & set of records assigned to entity $e$\\
    $o_{tra}$                    & observed indicator for $x_{tra}$ 
    & $\partset_b$               & set of entities assigned to block $b$ \\     
    $y_{ea}$                     & attribute $a$ for entity $e$ 
    & $\blockfn(\cdot)$           & block assignment function \\
    \bottomrule
  \end{tabularx}
  \end{center}
\end{table}

\paragraph{Entities.} 
The population of entities is assumed to be of fixed size $E$.
Each entity $e$ is described by a vector of ``true'' attributes 
$\vec{y}_e \in \bigotimes_{a=1}^{A} \valset_{a}$.
The value of the $a$-th attribute $y_{ea}$ is assumed to be drawn 
independently from a distribution 
$\phi_{a}$ over the attribute domain $\valset_{a}$:
\begin{equation}
y_{ea} \overset{\mathrm{ind.}}{\sim}
  \operatorname{Discrete}_{v \in \valset_{a}} [\phi_{a}(v)].
\end{equation}
Following the \blink\ model, we set the population size $E$ and 
the distributions over the attribute domains $\phi_{a}$ empirically.
Recommendations for setting these parameters are provided in 
Appendix~\ref{app-sec:parameter-settings}.

\paragraph{Blocks.} 
The parameter space associated with the entities 
$\bigotimes_{a=1}^{A} \valset_{a}$ is partitioned into $B$ blocks. 
The partition is parameterized using a deterministic 
\emph{blocking function}:
\begin{equation}
  \blockfn: \bigotimes_{a} \valset_{a} \to \{1,\ldots, B\},
  \label{eqn:part-fn}
\end{equation}
which is a free parameter and may be selected for inferential 
  convenience. 
We provide recommendations for selecting the blocking function in 
Section~\ref{sec:partition-fn}, including an example based on $k$-d trees.

We shall often need to refer to the entities assigned to a particular block. 
To do this concisely, we introduce the notation 
$\partset_{b}(\vec{Y}) = \{e: \blockfn(\vec{y}_{e}) = b\}$ to denote 
the set of entities assigned to block $b$. 
This is random due to the dependence on $\vec{Y}$, however we shall often 
omit the dependence for brevity.


\paragraph{Distortion.} Associated with each table $t$ and 
attribute $a$ is a distortion probability $\theta_{ta}$, 
with assumed prior distribution:
\begin{equation}
\theta_{ta}|\alpha_{a}, \beta_{a} \overset{\mathrm{ind.}}{\sim}
\operatorname{Beta}[\alpha_{a}, \beta_{a}],
\end{equation}
where $\alpha_{a}$ and $\beta_{a}$ are hyperparameters.
We provide recommendations for setting $\alpha_{a}$ and $\beta_{a}$ in 
Appendix~\ref{app-sec:experiments-setup}.
The distortion probabilities feed into the record-generation process below.

\paragraph{Records.} We assume a record is generated 
by selecting an entity uniformly at random and 
copying the entity's attributes subject to distortion. 
The process for generating record $r$ in table $t$ is outlined below. 
Steps (i), (ii), and (v) deviate from \blink.
\begin{enumerate}[(i)]
  \item Choose a block assignment $\gamma_{tr}$ at random 
  in proportion to the block sizes:
  \begin{equation}
  \gamma_{tr}|\vec{Y} \overset{\mathrm{ind.}}{\sim}
    \operatorname{Discrete}_{b \in \{1 \ldots B\}}[|\partset_b|/E].
  \end{equation}
  \item Choose an entity assignment $\lambda_{tr}$ uniformly 
  at random from block $\gamma_{tr}$:
  \begin{equation}
  \lambda_{tr}|\gamma_{tr}, \vec{Y} \overset{\mathrm{ind.}}{\sim}
    \operatorname{DiscreteUniform}[\partset_{\gamma_{tr}}].
  \end{equation}
  \item For each attribute $a$, draw a distortion indicator
  $z_{tra}$:
  \begin{equation}
  z_{tra}|\theta_{ta} \overset{\mathrm{ind.}}{\sim}
    \operatorname{Bernoulli}[\theta_{ta}].
  \end{equation}
  \item For each attribute $a$, draw a record value $x_{tra}$:
  \begin{equation}
  x_{tra}|z_{tra}, y_{\lambda_{tr}a} \overset{\mathrm{ind.}}{\sim}
  (1 - z_{tra}) \delta(y_{\lambda_{tr}a}) + z_{tra} 
  \operatorname{Discrete}_{v \in \valset_{a}}[\psi_{a}(v|y_{\lambda_{tr}a})]
  \end{equation}
  where $\delta(\cdot)$ represents a point mass.
  If $z_{tra} = 0$, $x_{tra}$ is copied directly from the 
  entity.
  Otherwise, $x_{tra}$ is drawn from the domain $\valset_{a}$ 
  according to the distortion distribution $\psi_{a}$.
  In the literature, this is known as a hit-miss 
  model~\citep{copas_record_1990}.
  \item For each attribute $a$, draw an observed indicator $o_{tra}$:
  \begin{equation}
    o_{tra} \overset{\mathrm{ind.}}{\sim} \operatorname{Bernoulli}[\eta_{ta}].
  \end{equation}
  If $o_{tra} = 1$, $x_{tra}$ is observed, otherwise it is 
  missing.
\end{enumerate}

\paragraph{Detail on the distortion distribution.} $\psi_{a}(\cdot|w)$ 
chooses a distorted value for attribute $a$ conditional on 
the true value $w$.
In our parameterization of the model, it is defined as
\begin{equation}
\psi_{a}(v|w) = h_{a}(w) \phi_{a}(v) \euler^{\simfn_{a}(v,w)},
\label{eqn:distortion-dist}
\end{equation}
where $h_{a}(w) = 1 / \sum_{v \in \valset_{a}} \phi_{a}(v) 
\euler^{\simfn_{a}(v,w)}$ is a normalization constant and 
$\simfn_{a}$ is the similarity measure for attribute~$a$
(see Section~\ref{sec:attribute-sim-measure}).
Intuitively, this distribution chooses values in proportion to 
their empirical frequency, while placing more weight on those 
that are ``similar'' to $w$.
This reflects the notion that distorted values are likely 
to be close to the truth, as is the case when modeling
typographical errors.

\paragraph{Posterior distribution.} 
The generative process described above corresponds to a posterior distribution 
over the model parameters, conditioned on the observed records. 
By reading the conditional dependence structure off the plate diagram 
(Figure~\ref{fig:plate-diagram}) and marginalizing over 
the missing record attributes $\vec{X}^{(m)}$, one can show that the 
posterior distribution is of the following form:
\begin{equation}
\begin{split}
p(\vec{\Gamma}, \vec{\Lambda}, \vec{Y}, \vec{Z}, \vec{\Theta}|\vec{X}^{(o)}, \vec{O}) 
& \propto \prod_{e, a} p(y_{ea}|\phi_{a}) \times \prod_{t, a} p(\theta_{ta}|\alpha_{a}, \beta_{a}) \times \prod_{\substack{t, r, a\\o_{tra}=1}} p(x_{tra}|z_{tra}, \lambda_{tr}, y_{\lambda_{tr}a}) \\
& \qquad {} \times \prod_{t, r} \Big\{ p(\gamma_{tr}|\vec{Y}) 
p(\lambda_{tr}|\gamma_{tr}, \vec{Y}) \prod_{a} p(z_{tra}|\theta_{ta}) \Big\}.
\end{split} 
\label{eqn:partitioned-posterior}
\end{equation}
For further detail on the derivation and an expanded form of the 
posterior, we refer the reader to Appendix~\ref{app-sec:full-posterior}.

\subsection{Attribute similarity measures}
\label{sec:attribute-sim-measure}
We now discuss the attribute similarity measures that appear 
in the distortion distribution of Equation~\ref{eqn:distortion-dist}.
The purpose of these measures is to quantify the 
propensity that some value $v$ in the attribute domain is 
chosen as a distorted alternative to the true value $w$.
\begin{definition}[Attribute similarity measure]
  \label{def:attribute-sim-measure}
  Let $\valset$ be the domain of an attribute.
  An \emph{attribute similarity measure} on $\valset$ is a 
  function $\simfn: \valset \times \valset \to [0, s_\mathrm{max}]$
  that satisfies $0 \leq s_\mathrm{max} < \infty$ and 
  $\simfn(v,w) = \simfn(w,v)$ for all $v, w \in \valset$.
\end{definition}

Note that our parameterization in terms of attribute \emph{similarity} 
measures differs from \blink, which uses \emph{distance} measures.
This allows us to make use of a more efficient sampling method, as 
described in Section~\ref{sec:pert-sampling}. 
The next proposition states that the two parameterizations 
are equivalent, so long as the distance measure is 
bounded and symmetric (a proof is provided in 
Appendix~\ref{app-sec:proof-sim-dist-equiv}).
\begin{proposition}
  \label{thm:sim-dist-equiv}
  Let $\distfn_{a}: \valset \times \valset \to [0, d_{\mathrm{max};a}]$ 
  be the attribute distance measure that appears in \blink, 
  and assume that $0 \leq d_{\mathrm{max};a} < \infty$ and 
  $\distfn_{a}(v, w) = \distfn_{a}(w, v)$ for all $v,w \in \valset$. 
  Define the corresponding attribute similarity measure 
  for \dblink\ as
  \begin{equation}
  \simfn_{a}(v, w) := d_{\mathrm{max};a} - \distfn_{a}(v, w).
  \label{eqn:simfn-distfn-correspondence}
  \end{equation}
  Then the parameterization of 
  $\psi_{a}$ used in \dblink\ is equivalent to \blink.
\end{proposition}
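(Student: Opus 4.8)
The plan is to show that the substitution $\simfn_{a}(v, w) = d_{\mathrm{max};a} - \distfn_{a}(v, w)$ alters the unnormalized distortion weights only by a factor that is constant in $v$, so that after renormalization the \dblink\ distortion distribution $\psi_{a}$ agrees pointwise with the one used in \blink. First I would pin down the \blink\ parameterization: there a distorted value is drawn in proportion to $\phi_{a}(v)\,\euler^{-\distfn_{a}(v,w)}$, i.e.\ $\psi_{a}^{\blink}(v \mid w) = \tilde{h}_{a}(w)\,\phi_{a}(v)\,\euler^{-\distfn_{a}(v,w)}$ with $\tilde{h}_{a}(w) = \big(\sum_{u \in \valset_{a}} \phi_{a}(u)\,\euler^{-\distfn_{a}(u,w)}\big)^{-1}$. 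I would also check that the $\simfn_{a}$ defined in Equation~\ref{eqn:simfn-distfn-correspondence} is a bona fide attribute similarity measure: since $\distfn_{a}$ ranges over $[0, d_{\mathrm{max};a}]$ with $0 \le d_{\mathrm{max};a} < \infty$, we have $\simfn_{a}(v,w) \in [0, d_{\mathrm{max};a}]$, so the finite nonnegative upper bound holds with $s_{\mathrm{max}} = d_{\mathrm{max};a}$, and symmetry of $\simfn_{a}$ is immediate from symmetry of $\distfn_{a}$; hence the \dblink\ distortion distribution of Equation~\ref{eqn:distortion-dist} is well defined.

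Next I would carry out the elementary computation. Writing $\euler^{\simfn_{a}(v,w)} = \euler^{d_{\mathrm{max};a}}\,\euler^{-\distfn_{a}(v,w)}$ and noting that $\euler^{d_{\mathrm{max};a}}$ does not depend on $v$, the \dblink\ normalizer factorizes as $h_{a}(w) = \big(\sum_{u} \phi_{a}(u)\,\euler^{\simfn_{a}(u,w)}\big)^{-1} = \euler^{-d_{\mathrm{max};a}}\,\tilde{h}_{a}(w)$. Substituting into Equation~\ref{eqn:distortion-dist}, the two factors $\euler^{\pm d_{\mathrm{max};a}}$ cancel, giving $\psi_{a}(v \mid w) = h_{a}(w)\,\phi_{a}(v)\,\euler^{\simfn_{a}(v,w)} = \tilde{h}_{a}(w)\,\phi_{a}(v)\,\euler^{-\distfn_{a}(v,w)} = \psi_{a}^{\blink}(v \mid w)$ for every $v, w \in \valset_{a}$. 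This is the crux of the proposition.

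Finally I would lift this pointwise identity to the stated model-level equivalence by observing that the similarity/distance measure enters the generative process of \dblink\ only through $\psi_{a}$ in the record step, and likewise in \blink; every remaining factor in the joint posterior of Equation~\ref{eqn:partitioned-posterior} is independent of the choice of $\simfn_{a}$ versus $\distfn_{a}$. Hence, under the correspondence of Equation~\ref{eqn:simfn-distfn-correspondence}, the two models induce identical conditional and joint distributions over all variables. The only real care point---and the closest thing to an obstacle---is matching \blink's exact conventions for $\psi_{a}$ (in particular whether any base measure or extra multiplicative constant is folded into its normalizer), since the whole argument hinges on the cancellation of $\euler^{\pm d_{\mathrm{max};a}}$; once the conventions are aligned, the proof reduces to the one-line computation above.
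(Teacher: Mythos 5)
Your proposal is correct and follows essentially the same route as the paper's proof: verify that $\simfn_{a}$ satisfies the definition of an attribute similarity measure, then substitute $\simfn_{a}(v,w)=d_{\mathrm{max};a}-\distfn_{a}(v,w)$ into the distortion kernel and observe that the factor $\euler^{d_{\mathrm{max};a}}$ is constant in $v$ and cancels under normalization, so the two parameterizations of $\psi_{a}$ coincide pointwise. The only difference is cosmetic---you write out the normalizers $h_{a}$ and $\tilde{h}_{a}$ explicitly where the paper argues via proportionality---so no further comparison is needed.
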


In this paper, we restrict our attention to the following 
similarity measures for simplicity:
\begin{itemize}
  \item \emph{Constant similarity measure.} This measure 
  is appropriate for categorical attributes, where there is 
  no reason to believe one value is more likely than any 
  other as a distortion to the true value $w$.
  Without loss of generality, it may be defined as 
  $\simfn_\mathrm{const}(v, w) = s_\mathrm{max}$ for all 
  $v, w \in \valset$.
  \item \emph{Normalized edit similarity measure.} 
  This measure is based on the edit distance metric, and is 
  suitable for modeling distortion in generic string attributes.
  Following~\cite{yujian_normalized_2007}, we define a normalized 
  edit distance metric,
  \begin{equation*}
    \distfn_{\mathrm{nEd}}(v,w) = \frac{2\ \distfn_{\mathrm{Ed}}(v,w)}
    {|v| + |w| + \distfn_{\mathrm{Ed}}(v,w)},
  \end{equation*}
  where $\distfn_{\mathrm{Ed}}$ denotes the regular edit distance 
  and $|v|$ denotes the length of string $v$.
  Note that alternative definitions of the normalized edit distance 
  could be used \citep[see references in][]{yujian_normalized_2007}, 
  however the above definition is unique in that it yields a 
  proper metric.
  Since the normalized edit distance is bounded on the interval 
  $[0,1]$ we can define a corresponding normalized edit similarity 
  measure:
  \begin{equation}
    \simfn_{\mathrm{nEd}}(v, w) = 1 - \distfn_{\mathrm{nEd}}(v, w).
  \label{eqn:norm-edit-sim-fn}
  \end{equation}
\end{itemize}
Ideally, one should select attribute similarity measures based on 
the data at hand.
There are many possibilities to consider, such as  
Jaccard similarity, numeric similarity 
measures~\citep{lesot_similarity_2008} and other 
domain-specific measures~\citep{bilenko_adaptive_2003}.

\subsection{Model equivalence}
\label{sec:posterior-dist}
We have purposely constructed \dblink\ so that it  
reduces to \blink\ under certain conditions. 
Assuming the records are fully observed, the posterior distribution 
of \dblink\ as specified in Equation~\ref{eqn:partitioned-posterior} 
is similar to \blink. 
The difference lies in the factors involving the block assignments 
$\gamma_{tr}$ and the entity assignments $\lambda_{tr}$. 
However, if one marginalizes out the auxiliary block assignments---as is 
done automatically in Markov chain Monte Carlo---the 
posterior distributions are identical.
This statement is made precise below (proof provided in 
Appendix~\ref{app-sec:proof-posterior-equiv}):
\begin{proposition}
  \label{thm:posterior-equiv}
  Suppose the conditions of Proposition~\ref{thm:sim-dist-equiv} hold
  and that $\alpha_{a} = \alpha$ and $\beta_{a} = \beta$ for 
  all $a$. 
  Assume furthermore that all record attributes are observed, i.e.\ 
  $o_{tra} = 1$ for all $t, r, a$.
  Then the marginal posterior of $\vec{\Lambda}$, $\vec{Y}$, 
  $\vec{Z}$ and $\vec{\Theta}$ for \dblink\ (i.e. marginalized 
  over $\vec{\Gamma} = [\gamma_{tr}]_{t = 1 \ldots T; r = 1 \ldots R_t}$)
  is identical to the posterior for \blink.
\end{proposition}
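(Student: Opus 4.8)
The plan is to marginalize the \dblink\ posterior of Equation~\ref{eqn:partitioned-posterior} over $\vec{\Gamma}$ by direct summation, and then check term-by-term that the result coincides with the \blink\ posterior. First I would specialize Equation~\ref{eqn:partitioned-posterior} to the fully-observed case $o_{tra}=1$, so that the product over the likelihood factors $p(x_{tra}\mid z_{tra},\lambda_{tr},y_{\lambda_{tr}a})$ ranges over all $t,r,a$ and the observed-indicator terms disappear. Next I would identify exactly which factors depend on the block assignments: the entity-attribute prior $p(y_{ea}\mid\phi_a)$, the distortion prior $p(\theta_{ta}\mid\alpha_a,\beta_a)$, the likelihood $p(x_{tra}\mid z_{tra},\lambda_{tr},y_{\lambda_{tr}a})$, and the distortion-indicator factors $p(z_{tra}\mid\theta_{ta})$ are all free of $\vec{\Gamma}$; only the pair $p(\gamma_{tr}\mid\vec{Y})\,p(\lambda_{tr}\mid\gamma_{tr},\vec{Y})$ inside the $\prod_{t,r}$ block involves $\gamma_{tr}$. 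Substituting the generative specification $p(\gamma_{tr}\mid\vec{Y})=|\partset_{\gamma_{tr}}|/E$ and $p(\lambda_{tr}\mid\gamma_{tr},\vec{Y})=\1{\lambda_{tr}\in\partset_{\gamma_{tr}}}/|\partset_{\gamma_{tr}}|$, the block-size factors cancel and this pair collapses to $\tfrac1E\,\1{\blockfn(\vec{y}_{\lambda_{tr}})=\gamma_{tr}}$.

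Then I would perform the sum over $\gamma_{tr}\in\{1,\dots,B\}$ for each record. Since $\blockfn(\vec{y}_{\lambda_{tr}})$ is a single well-defined block, exactly one summand is nonzero, giving $\sum_{\gamma_{tr}=1}^{B}\tfrac1E\,\1{\blockfn(\vec{y}_{\lambda_{tr}})=\gamma_{tr}}=\tfrac1E$. Because the remaining factors carry no dependence on $\vec{\Gamma}$, summing the joint posterior over all of $\vec{\Gamma}=[\gamma_{tr}]$ simply replaces $\prod_{t,r}\{p(\gamma_{tr}\mid\vec{Y})\,p(\lambda_{tr}\mid\gamma_{tr},\vec{Y})\}$ by $\prod_{t,r}\tfrac1E=E^{-R}$, a constant absorbable into the proportionality constant. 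The marginal posterior over $\vec{\Lambda},\vec{Y},\vec{Z},\vec{\Theta}$ is therefore proportional to $\prod_{e,a}p(y_{ea}\mid\phi_a)\prod_{t,a}p(\theta_{ta}\mid\alpha,\beta)\prod_{t,r,a}p(x_{tra}\mid z_{tra},\lambda_{tr},y_{\lambda_{tr}a})\prod_{t,r,a}p(z_{tra}\mid\theta_{ta})$.

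Finally I would match this expression against the \blink\ posterior. Under the stated hypotheses, $\alpha_a=\alpha$ and $\beta_a=\beta$ makes the distortion prior agree with \blink's, and the distance/similarity correspondence of Proposition~\ref{thm:sim-dist-equiv} makes the distortion distribution $\psi_a$, hence the likelihood factor, agree with \blink's; each surviving factor is then literally a factor of the \blink\ posterior, and \blink's uniform entity prior $p(\lambda_{tr})=1/E$ is exactly the $E^{-R}$ constant we absorbed. Since a marginal of a probability distribution is again a probability distribution, and both the \dblink\ marginal and the \blink\ posterior are proportional to the same nonnegative function of $(\vec{\Lambda},\vec{Y},\vec{Z},\vec{\Theta})$, their normalizing constants must agree, so the two distributions are identical. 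There is no substantive analytic obstacle here; the only care needed is bookkeeping — confirming that the likelihood factor has no hidden dependence on $\gamma_{tr}$ (it depends on $\lambda_{tr}$ and $\vec{Y}$ only) and lining up the present parameterization with \blink's original notation, which is precisely where Proposition~\ref{thm:sim-dist-equiv} and the hyperparameter assumption do all the work.
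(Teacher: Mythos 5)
Your proposal is correct and follows essentially the same route as the paper's proof: isolate the only $\vec{\Gamma}$-dependent factors $p(\gamma_{tr}\mid\vec{Y})\,p(\lambda_{tr}\mid\gamma_{tr},\vec{Y})$, cancel the block sizes to get $\tfrac{1}{E}\1{\lambda_{tr}\in\partset_{\gamma_{tr}}(\vec{Y})}$, and sum over $\gamma_{tr}$ using the fact that the blocks partition the entity space, recovering \blink's uniform $1/E$ factor. The paper's version is terser (it does not spell out the final normalizing-constant argument), but the substance is identical.
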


This is an important result, as it shows our inferences for the 
meaningful model parameters are the same as we would obtain from \blink. 
Thus we are able to apply blocking to scale the model, without compromising 
the correctness of the posterior distribution.

\subsection{Rationale for introducing block}
\label{sec:partition-motivation}
We now briefly explain the role of the auxiliary block in \dblink.
First, we note that without the block ($B = 1$), the Markov 
blanket for $\lambda_{tr}$ includes the attribute values for 
\emph{all} of the entities $\vec{Y}$.
This presents a major obstacle when it comes to distributing the 
inference on a compute cluster, as the data is not separable.
By incorporating block, we restrict the Markov blanket 
for $\lambda_{tr}$ to include only a subset of the entity 
attribute values---those in the same block as record $(t,r)$.
As a result, it becomes natural to distribute the inference 
so that each compute node is responsible for a single block 
(see Section~\ref{sec:distributed-pcg} for details).
Secondly, we can interpret the block as performing probabilistic 
blocking in the context of MCMC sampling 
(introduced in Section~\ref{sec:inference}), which improves 
computational efficiency.
In a given iteration, the possible links for a record are 
restricted to the entities residing in the same block.
However, unlike conventional blocking, the block assignments 
are not fixed---between iterations the entities and linked records 
may move between blocks.

\section{Blocking functions}
\label{sec:partition-fn}
In Section~\ref{sec:model-specification} we introduced a generic blocking 
function (Equation~\ref{eqn:part-fn}) that is responsible for assigning 
entities to blocks.
This function may be regarded as a free parameter, 
since it has no bearing on model equivalence according to 
Proposition~\ref{thm:posterior-equiv}.
However, from a practical perspective the blocking function ought to be 
chosen carefully, as it can impact inferential efficiency---both in 
terms of computational and mixing time.
We suggest some guidelines for choosing a blocking function in 
Section~\ref{sec:interpret}, before presenting an example based on $k$-d trees 
in Section~\ref{sec:kd}.

\subsection{Interpretation and guidelines}
\label{sec:interpret}
Recall that the blocking function assigns an entity to 
a block according to its attributes 
$\vec{y}_{e} = [y_{ea}]_{a = 1 \ldots A}$.
Since $\vec{y}_{e}$ is \emph{unobserved}, it must be treated 
as a random variable over the space of possible attributes 
$\valset_\otimes := \bigotimes_{a=1}^{A} \valset_{a}$.
This means the blocking function should not be interpreted 
as partitioning the entities directly.
Rather, it should be interpreted as partitioning the space 
$\valset_\otimes$ in which the entities reside, while taking 
the distribution over $\valset_\otimes$ into account.
With this interpretation in mind, we argue that the 
blocking function should ideally satisfy the following 
properties:
\begin{enumerate}[(i)]
  \item \emph{Balanced weight.} The blocks should have 
  equal weight (probability mass) under the distribution 
  over $\valset_\otimes$, thereby ensuring the entities are 
  distributed evenly (in expectation) among the blocks.
  This is a desirable property, as it ensures proper load balancing 
  for our distributed inference algorithm (see 
  Section~\ref{sec:distributed-pcg}).
  \item \emph{Entity separation.} A pair of entities drawn at 
  random from the same block should have a high degree of 
  similarity, while entities drawn from different blocks 
  should have a low degree of similarity.
  This improves the likelihood that similar records will 
  end up in the same block, and allows them to more 
  readily form likely entities.
\end{enumerate}

These properties need not be satisfied strictly: the extent to which 
they are satisfied is merely expected to improve the efficiency of 
the inference.
For example, satisfying the first property requires 
knowledge of the marginal posterior distribution 
over $\vec{y}_{e}$, which is infeasible to calculate.
We note that there is likely to be tension between the two properties, 
so that a balance must be struck between them.

\subsection{Example: \emph{k}-d tree blocking function}
\label{sec:kd}
We now describe a blocking function based on $k$-d trees, 
which is used in our experiments in Section~\ref{sec:experiments}.

\paragraph{Background.} A $k$-d tree is a binary tree that 
recursively partitions a $k$-dimensional affine 
space~\citep{bentley_multidimensional_1975, 
  friedman_algorithm_1977}.
In the standard set-up, each node of the tree is associated 
with a data point that implicitly splits the input space 
into two half-spaces along a particular dimension.
Owing to its ability to hierarchically group nearby 
points, it is commonly used to speed up nearest-neighbor 
search.
This makes a $k$-d tree a good candidate for a blocking function, 
since it can be balanced while grouping similar points.

\paragraph{Setup.} Our setup differs from a standard $k$-d 
tree in several aspects.
First, we consider a discrete space $\valset_\otimes$ 
  (not an affine space), where the ``$k$ dimensions'' 
  are the $A$ attributes. 
Second, we do not store data points in the tree.
  We only require that the tree implicitly stores the 
  boundaries of the blocks, so that it can assign an 
  arbitrary $\vec{y} \in \valset_\otimes$ to the 
  correct partition (a leaf node).
Finally, since we are working in a discrete space, 
  the input space to a node is a countable set. 
  The node must split the input set into two parts 
  based on the values of one of the attributes.

\paragraph{Fitting the tree.} Since it is infeasible to calculate 
the marginal posterior distribution over $\vec{y}_{e}$ exactly, 
we use the empirical distribution from the tables as an 
approximation.
As a result, we treat the records (tables) as a sample from the 
distribution over $\vec{y}_{e}$, and fit the tree so that it 
remains balanced with respect to this sample.
The depth of the tree $d$ determines the number of blocks
($2^d$).

\paragraph{Achieving balanced splits.}
When fitting the tree, each node receives an input set of samples 
and a rule must be found that splits the set 
into two roughly equal (balanced) parts based on an attribute.
We consider two types of splitting rules: the \emph{ordered median} and 
the \emph{reference set} (see Appendix~\ref{app-sec:splitting-rules}).
We allow the practitioner to specify an ordered list of attributes to 
be used for splitting.
To ensure balanced splits, we recommend selecting attributes with a 
large domain.
If possible, we recommend preferencing attributes which are known a priori 
to be reliable (low distortion), as this will reduce the shuffling of 
entities\slash records between blocks.
In principle, it is possible to automate the process of fitting a tree: 
one could grow several trees with randomly-selected splits and use the 
one that is most balanced.
We examine balance empirically in Appendix~\ref{app-sec:balance-partitions}.

\section{Inference}
\label{sec:inference}
We now turn to approximating the full joint posterior 
distribution over the unobserved variables $\vec{Z}$, 
$\vec{Y}$, $\vec{\Theta}$, $\vec{\Gamma}$ and $\vec{\Lambda}$, 
as given in Equation~\ref{eqn:partitioned-posterior}.
Since it is infeasible to sample from this distribution 
directly, we design MCMC algorithms based on partially-collapsed Gibbs (PCG)
sampling~\citep{dyk_partially_2008}.
In addition, we show how to exploit the conditional 
independence induced by the blocks to distribute the PCG sampling 
across multiple threads or machines.

\subsection{Partially-collapsed Gibbs sampling}
\label{sec:pcg-sampling}
Following the \blink\ paper~\citep{steorts_entity_2015}, we initially 
experimented with regular Gibbs sampling.\footnote{
  We define \emph{regular} Gibbs sampling as the most basic variation 
  where variables are updated iteratively one-at-a-time by sampling from 
  their conditional distributions.
} 
However, the resulting Markov chains exhibited slow 
convergence and poor mixing.
This is a known shortcoming of Gibbs sampling which 
may be remedied by collapsing variables and\slash or 
updating correlated variables in groups~\citep{liu_monte_2004}.
These ideas form the basis for a framework called 
\emph{partially-collapsed Gibbs (PCG) sampling}---a 
generalization of Gibbs sampling that has 
better convergence properties~\citep{dyk_partially_2008}.

Under the PCG framework, variables are updated in 
groups by sampling from their conditional distributions.
These conditional distributions may be taken with respect 
to the joint posterior (like regular Gibbs), or with 
respect to \emph{marginal distributions} of the joint 
posterior (unique to PCG).
The latter case is called \emph{trimming} and must be 
handled with care so as not to alter the stationary 
distribution of the Markov chain.

In applying PCG sampling to \dblink, we must decide how to 
apply the three tools: 
\emph{marginalization} (equivalent to grouping), 
\emph{permutation} (changing the order of the updates) and 
\emph{trimming} (removing marginalized variables).
In theory, the convergence rate should improve with more 
marginalization and trimming, however this must be balanced 
with the following: 
(i)~whether the resulting conditionals can be sampled from 
efficiently, and 
(ii)~whether the resulting dependence structure is 
compatible with our distributed set-up (see 
Section~\ref{sec:distributed-pcg}).
We consider two samplers, PCG-I and PCG-II, described below.
Of the two, we recommend PCG-I as it is more efficient 
in our empirical evaluations (see Section~\ref{sec:efficiency-expts}).
We include the PCG-II sampler, as one would expect the PCG-II 
sampler to perform better than the PCG-I sampler in terms of mixing, 
however when computational efficiency is taken into account the performance 
is worse (see Figure~\ref{fig:speed-up-vs-sampler}).

\begin{figure*}
  \def\svgwidth{\linewidth}
  \import{./figures/}{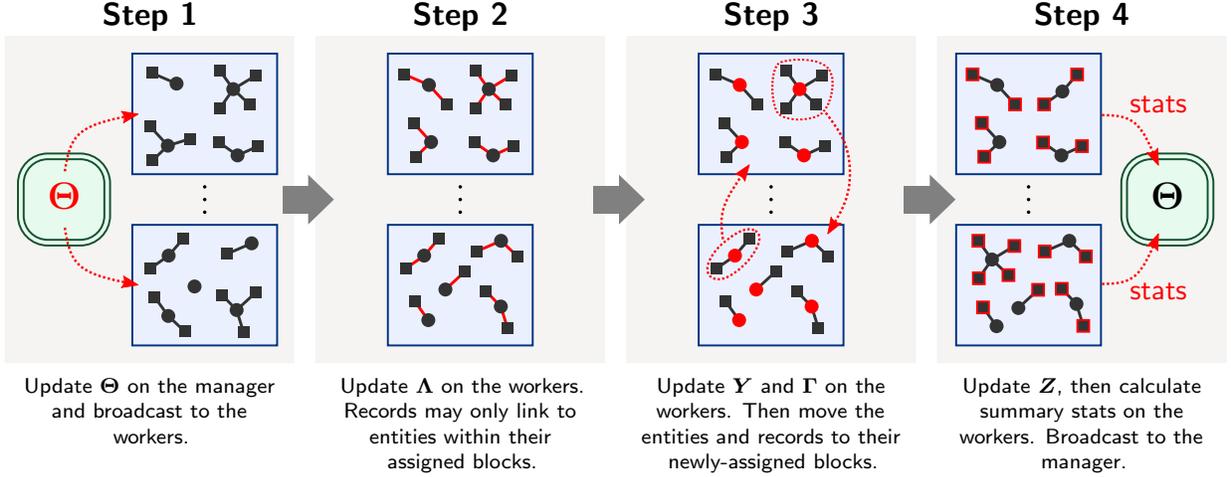}
  \caption{Schematic depicting a single iteration of distributed 
    PCG sampling.
    The entity attributes ($\vec{Y}$---circular nodes), 
    record attributes and their distortion indicators 
    ($\vec{X}$, $\vec{Z}$---square nodes),
    and links from records to entities 
    ($\vec{\Lambda}$---node connectors) 
    are distributed across the workers (blue rectangular plates) 
    according to their assigned blocks.
    The distortion probabilities ($\vec{\Theta}$) reside on the 
    manager (green rounded-rectangular plate).
  }
  \label{fig:distributed-pcg}
\end{figure*}

\subsubsection{PCG-I sampler} 
The PCG-I sampler uses regular Gibbs updates for $\theta_{ta}$, 
$\lambda_{tr}$ and $z_{tra}$ for all $t$, $r$ and $a$.
The conditional distributions for these updates are listed in 
Appendix~\ref{app-sec:gibbs}.
When updating the entity attributes $y_{ea}$ and the block 
assignments $\gamma_{tr}$, marginalization and trimming are 
used.
Specifically, we apply marginalization by jointly updating 
$\vec{y}_{e}$ and $\{\gamma_{tr}, \vec{z}_{tr}\}_{\entset_{e}}$ 
(the set of $\gamma_{tr}$'s and $\vec{z}_{tr}$'s for records $(t,r)$ 
linked to entity $e$).
We then trim (analytically integrate over) $\{\vec{z}_{tr}\}_{\entset_{e}}$.

We shall now derive this update. 
Referring to Equation~\ref{eqn:partitioned-posterior}, the joint posterior 
of $\vec{y}_{e}$, $\{\gamma_{tr}, \vec{z}_{tr}\}_{\entset_{e}}$ 
conditioned on the other parameters has the form
\begin{equation*}
\begin{split}
& p\!\left(\vec{y}_{e}, \{\gamma_{tr}, \vec{z}_{tr}\}_{\entset_{e}} \middle| 
  \vec{Z}^{\neg \entset_{e}}, \vec{\Gamma}^{\neg \entset_{e}}, \vec{\Theta}, 
    \vec{\Lambda}, \vec{X}^{(o)}, \vec{O}\right) \propto \\
& \quad \prod_{a} \Big\{ p(y_{ea}|\phi_{a}) \times  \prod_{(t,r) \in \entset_{e}} 
  p(\gamma_{tr}|\vec{Y}) p(\lambda_{tr}|\gamma_{tr}, \vec{Y}) p(z_{tra}|\theta_{ta}) 
    \times \prod_{\substack{(t,r) \in \entset_{e}\\o_{tra}=1}} p(x_{tra} | z_{tra}, \lambda_{tr}, y_{ea}) \Big\},
\end{split}
\end{equation*}
where superscript $\neg \entset_{e}$ denotes exclusion of 
any records $(t,r) \in \entset_{e}$ (those currently linked to entity $e$).
Substituting the distributions and trimming $\{\vec{z}_{tr}\}_{\entset_{e}}$ 
yields
\begin{equation}
p\!\left(\vec{y}_{e}, \{\gamma_{tr}\}_{\entset_{e}} \middle| 
  \vec{Z}^{\neg \entset_{e}}, \vec{\Gamma}^{\neg \entset_{e}}, \vec{\Theta}, 
  \vec{\Lambda}, \vec{X}^{(o)}, \vec{O} \right) = p(\{\gamma_{tr}\}_{\entset_{e}}|\entset_{e}, \vec{y}_{e}) 
  \prod_{a} p(y_{ea}|\entset_{e}, \vec{\Theta}, \vec{X}^{(o)}, \vec{O})
\label{eqn:y-gamma-update}
\end{equation}
where
\begin{equation*}
p(y_{ea}|\entset_{e}, \vec{\Theta}, \vec{X}^{(o)}, \vec{O}) 
  \propto \phi_{a}(y_{ea}) \prod_{\substack{(t,r) \in \entset_{e}\\o_{tra}=1}} 
  \left\{(1 - \theta_{ta}) \1{x_{tra} = y_{ea}} 
  + \theta_{ta} \psi_{a}(x_{tra}|y_{ea}) \right\}
\end{equation*}
and
\begin{equation*}
p(\{\gamma_{tr}\}_{\entset_{e}}|\entset_{e}, \vec{y}_{e}) \propto \prod_{(t,r) \in 
\entset_{e}} \1{\gamma_{tr} = \blockfn(\vec{y}_{e})}.
\end{equation*}
Note that the update for $\{\gamma_{tr}\}_{\entset_{e}}$ is deterministic, 
conditional on $\vec{y}_{e}$ and $\entset_e$.

Since we have applied trimming, we must permute the updates so that 
the trimmed variables $\vec{Z}$ are not conditioned on in later updates.
This means the updates for $\vec{y}_{e}$ and 
$\{\gamma_{tr}, \vec{z}_{tr}\}_{\entset_{e}}$ must come \emph{after} the 
updates for $\theta_{ta}$ and $\lambda_{tr}$, but \emph{before} the updates 
for $z_{tra}$.

\subsubsection{PCG-II sampler}
The PCG-II sampler is identical to PCG-I, except that it replaces the 
regular Gibbs update for $\lambda_{tr}$ with an update that marginalizes 
and trims $\vec{z}_{tr}$.
To derive the distribution for this update, we first consider the 
joint posterior of $\lambda_{tr}$ and $\vec{z}_{tr}$ conditioned on the 
other parameters:
\begin{equation*}
\begin{split}
& p(\lambda_{tr}, \vec{z}_{tr}|\vec{\Gamma}, \vec{Y}, \vec{\Theta}, \vec{Z}^{\neg (t,r)}, \vec{X}^{(o)}, \vec{O}) \propto \\
& \qquad p(\lambda_{tr}|\gamma_{tr}, \vec{Y}) \times \prod_{a} p(z_{tra}|\theta_{ta}) \times \prod_{\substack{a\\o_{tra}=1}} p(x_{tra}|z_{tra}, \lambda_{tr}, y_{\lambda_{tr}a})
\end{split}
\end{equation*}
where superscript $\neg(t,r)$ denotes exclusion of record $(t,r)$. 
Substituting the distributions and trimming $\vec{z}_{tr}$ yields
\begin{equation}
\begin{split}
& p(\lambda_{tr} | \vec{\Gamma}, \vec{Y}, \vec{\Theta}, \vec{Z}^{\neg (t,r)}, \vec{X}^{(o)}, \vec{O}) \propto \\
& \qquad \1{\lambda_{tr} \in \partset_{\gamma_{tr}}(\vec{Y})} 
  \times \prod_{\substack{a\\o_{tra}=1}} 
  \Big\{(1 - \theta_{ta}) \1{x_{tra} = y_{\lambda_{tr}a}} + \theta_{ta} \psi_{a}(x_{tra}|y_{\lambda_{tr}a})\Big\}.
\end{split}
\label{eqn:lambda-pcg-update}
\end{equation}

\begin{table}
  \spacingset{1}
  \centering
  \caption{Dependencies for the conditional updates used in the PCG-I 
    sampler.}
  \label{tbl:conditional-dependencies}
  \begin{center}
  \begin{tabular}{rl}
    \toprule
    Update variables & Dependencies \\
    \midrule
    $\theta_{ta}$ & $z_{t \cdot a} = \sum_{r} z_{tra}$ \\
    $\lambda_{tr}$   & $\vec{z}_{tr}$, $\vec{x}_{tr}$, $\gamma_{tr}$, 
    $\partset_{\gamma_{tr}}$, 
    $\{\vec{y}_{e}\}_{e \in \partset_{\gamma_{tr}}}$ \\
    $y_{ea}$, $\{\gamma_{tr}, z_{tra}\}_{(t,r) \in \entset_{e}}$ &  
    $\entset_{e}$, $\{x_{tra}\}_{(t,r) \in \entset_{e}}$,
    $\{\theta_{ta}\}_{(t,r) \in \entset_{e}}$ \\
    $z_{tra}$     & $x_{tra}$, $\lambda_{tr}$, $y_{\lambda_{tr}a}$, 
    $\theta_{ta}$ \\
    \bottomrule
  \end{tabular}
  \end{center}
\end{table}

\subsection{Distributing the sampling}
\label{sec:distributed-pcg}
By examining the conditional distributions derived in the previous section 
and those listed in Appendix~\ref{app-sec:gibbs}, one can show 
that the updates for the variables associated with entities 
and records ($z_{tra}$, $\lambda_{tr}$, $\gamma_{tr}$ and 
$y_{ea}$) only depend on variables associated 
with entities and records assigned to the \emph{same} 
block (excluding $\vec{\Theta}$).
These dependencies are summarized in 
Table~\ref{tbl:conditional-dependencies} for the PCG-I sampler.
The distortion probability $\theta_{ta}$ is 
an exception---it is not associated with any block 
and may depend on $z_{tra}$'s across \emph{all} blocks.

This dependence structure---in particular, the conditional 
independence of entities and records across blocks---makes 
the PCG sampling amenable to distributed computing.
As such, we propose a manager-worker architecture where: 
\begin{itemize}
  \item the \emph{manager} is responsible for storing and 
  updating variables \emph{not} associated with any 
  block (i.e.\ $\vec{\Theta}$); and
  \item each \emph{worker} represents a block, and is 
  responsible for storing and updating variables associated 
  with the entities and records assigned to it.
\end{itemize}
The manager\slash workers may be processes running on a single machine 
or on machines in a cluster.
If using a cluster, we recommend that the nodes be tightly 
coupled, as frequent communication between them is required.

Figure~\ref{fig:distributed-pcg} depicts a single 
iteration of PCG sampling using our proposed 
manager-worker architecture. 
Of the four steps depicted, steps~2 and~3---where the links, 
entity attributes and block assignments are updated---are the most 
computationally intensive. 
We therefore expect to achieve a significant speed-up by 
distributing these steps across the workers.

To ensure good load balancing of these steps it is important 
that the blocks are well-balanced (see Section~\ref{sec:interpret}), 
otherwise workers responsible for smaller blocks must wait idly 
for other workers to finish before the next iteration can begin. 
This is because step~1 requires global synchronization of state 
across the workers.
The blocks also have an effect on communication costs, which 
are most significant in step~3, where the entities and linked records 
are shuffled to their newly-assigned blocks.
A well-chosen blocking function can minimize this cost, by ensuring 
similar records\slash entities are co-blocked.

\section{Computational efficiency considerations}\label{sec:tricks}

\subsection{Efficient pruning of candidate links}
\label{sec:pruning-links}
In this section, we describe a trick that is aimed at improving the 
computational efficiency of the Gibbs update for $\lambda_{tr}$ 
(used in the Gibbs and PCG-I samplers).
This particular trick does \emph{not} apply to the joint PCG update 
for $\lambda_{tr}$ and $\vec{z}_{tr}$ (used in the PCG-II sampler).

Consider the conditional distribution for the 
$\lambda_{tr}$ update in Equation~\ref{app-eqn:lambda-update} of 
Appendix~\ref{app-sec:gibbs}:
\begin{equation}
\begin{split}
& p(\lambda_{tr} = e | \vec{\Gamma}, \vec{Y}, \vec{Z}, \vec{X}^{(o)}, \vec{O}) \propto \\
& \qquad \1{e \in \partset_{\gamma_{tr}}(\vec{Y})}
  \times \prod_{\substack{a\\o_{tra}=1}} \Big\{(1 - z_{tra}) \1{x_{tra} = y_{ea}} 
  + z_{tra} \psi_{a}(x_{tra}|y_{ea}) \Big\}.
\end{split}
\label{eqn:lambda-km}
\end{equation}
The support of this distribution is the \emph{set of candidate 
  links} for record $(t,r)$, which we denote by $\mathcal{L}_{tr}$.
Looking at the first indicator function above, we see that 
$\mathcal{L}_{tr} \subseteq \partset_{\gamma_{tr}}$, 
i.e.\ the candidate links are restricted to the entities 
in the \emph{same block} as record $(t,r)$. 
Thus, a na\"{i}ve sampling approach for this distribution 
takes $O(|\partset_{\gamma_{tr}}|)$ time.

We can improve upon the na\"{i}ve approach by exploiting 
the fact that $\mathcal{L}_{tr}$ is often considerably 
smaller than $\partset_{\gamma_{tr}}$.  
To see why this is the case, note that the second indicator function 
in Equation~\ref{eqn:lambda-km} 
further restricts $\mathcal{L}_{tr}$ if any of the 
distortion indicators for the observed record attributes 
are zero.
Specifically, if $z_{tra} = 0$ and $o_{tra}=1$, $\mathcal{L}_{tr}$ cannot 
contain any entity  whose $a$-th attribute $y_{ea}$ 
does not match the record's $a$-th attribute $x_{tra}$.
This implies $\mathcal{L}_{tr}$ is likely to be small in 
the case of low distortion.

Putting aside the computation of $\mathcal{L}_{tr}$ for the moment, 
this means we can reduce the time required to update 
$\lambda_{tr}$ to $O(|\mathcal{L}_{tr}|)$.
To compute $\mathcal{L}_{tr}$ efficiently, we propose maintaining 
an inverted index over the entity attributes within each block.
Specifically, the index for the $a$-th attribute in 
block~$b$ should accept a query value $v \in \valset_{a}$ 
and return the set of entities that match on $v$:
\begin{equation}
\mathcal{M}_{pa}(v) = \{n \in \partset_{p}: y_{ea} = v\}.
\end{equation}
Once the index is constructed, we can efficiently retrieve the 
set of candidate links for record $(t,r)$ by computing a 
multiple set intersection:
\begin{equation}
\mathcal{L}_{tr} = \bigcap_{\{a : z_{tra} = 0 \wedge o_{tra} = 1\}}
\mathcal{M}_{\gamma_{tr}a}(x_{tra}).
\label{eqn:set-intersection}
\end{equation}
This assumes at least one of the observed record attributes 
is not distorted. 
Otherwise $\mathcal{L}_{tr} = \partset_{\gamma_{tr}}$.

Since the sizes of the sets $\mathcal{M}_{\gamma_{tr}a}(x_{tra})$ 
are likely to vary significantly, we advise computing 
the intersection iteratively in increasing order of size.
That is, we begin with the smallest set and retain the elements 
that are also in the next largest set, and so on.
With a hash-based set implementation, this scales linearly 
in the size of the first (smallest) set.

\subsection{Caching and truncation of attribute similarities}
\label{sec:trunc-sim}
We have not yet emphasized that the updates for $\vec{\Lambda}$, 
$\vec{Y}$ and $\vec{\Gamma}$ depend on the attribute 
similarities between pairs of values in the attribute domains.
Specifically, for each attribute $a$, we need to access 
the indexed set $\mathcal{S}_{a} = 
  \{\simfn_{a}(v,w): v, w \in \valset_{a} \times \valset_{a} \}$.
These similarities may be expensive to evaluate on-the-fly, 
so we cache the results in memory on the workers.

To manage the quadratic scaling of $\mathcal{S}_{a}$, and 
in anticipation of another trick introduced in 
Section~\ref{sec:pert-sampling}, we transform the similarities so that 
those \emph{below} a cut-off $s_{\mathrm{cut};a}$ are regarded 
as completely disagreeing.
We achieve this by applying the following truncation transformation 
to the raw attribute similarity $\simfn_{a}(v,w)$:
\begin{equation}
\truncsimfn_{a}(v,w) = 
  \max \left(0, \ \frac{\simfn_{a}(v,w) - s_{\mathrm{cut};a}}
    {1 - s_{\mathrm{cut};a}/s_{\mathrm{max};a}} \right).
\end{equation}
as illustrated in Figure~\ref{fig:truncated-sim-fn}.
\begin{figure}
  \centering
  \def\svgwidth{0.45\linewidth}
  \import{./figures/}{truncated-string-function.pdf_tex}
  \caption{Transformation from a raw similarity function 
    ($\simfn$) to a truncated similarity 
    function ($\truncsimfn$).}
  \label{fig:truncated-sim-fn}
\end{figure}
Whenever a raw attribute similarity is called for, we 
replace it with this truncated version.
Only pairs of values with positive truncated similarity are stored 
in the cache---those not stored in the cache have a truncated 
similarity of zero by default.
Note that attributes with a constant similarity function 
$\simfn_{\mathrm{const}}$ are treated specially---there is no 
need to cache the index set of similarities, since they are all 
identical.

It is important to acknowledge that the truncated similarities 
are an approximation to the original model.
We claim that the approximation is reasonable on the following grounds:
\begin{itemize}
  \item \emph{Low loss of information.} 
  Below a certain cut-off, the attribute similarity function is
  unlikely to encode much useful information for modeling the 
  distortion process.
  For example, the fact that
  $\simfn_\mathrm{nEd}(\text{``Smith''},\text{``Chiu''}) = 0.385$ 
  whereas 
  $\simfn_\mathrm{nEd}(\text{``Smith''},\text{``Chen''}) = 0.286$, 
  doesn't necessarily suggest that ``Chiu'' is more likely 
  than ``Chen'' as a distorted alternative to ``Smith''.
  \item \emph{Precedent.}
  In the record linkage literature, value pairs with similarities 
  below a cut-off are regarded as completely 
  disagreeing~\citep{winkler_methods_2002, enamorado_using_2019}.
  \item \emph{Efficiency gains.} 
  As we shall soon see in Section~\ref{sec:pert-sampling}, 
  we can perform the combined $\vec{Y}$, $\vec{\Gamma}$, $\vec{Z}$ update 
  more efficiently by eliminating pairs below the cut-off from 
  consideration. 
\end{itemize}

\subsection{Fast updates of entity attributes using perturbation sampling}
\label{sec:pert-sampling}
We now present a novel sampling algorithm that allows us to 
efficiently perform the PCG update for $y_{ea}$ and 
$\{\gamma_{tr}, z_{tra}\}_{\entset_{e}}$.
The algorithm relies on the observation that the 
conditional distribution for $y_{ea}$ can be 
expressed as a mixture over two components:
\begin{enumerate}[(i)]
  \item a \emph{base distribution} over $\valset_{a}$ which is 
  ideally constant for all entities; and
  \item a \emph{perturbation distribution} which varies for each 
  entity, but has a much smaller support than $\valset_{a}$.
\end{enumerate}
With this representation, we can avoid computing and 
sampling from the full distribution over 
$\valset_{a}$, which varies for each $y_{ea}$ update.
Rather, we only need to compute the perturbation 
distribution over a much smaller support, and then 
sample from the mixture, which can be done efficiently 
using the Vose-Alias method~\citep{vose_linear_1991}.
We refer to this algorithm as \emph{perturbation sampling}.

\subsubsection{Perturbation sampling}
Although we're interested in applying perturbation sampling 
to a specific conditional distribution, we describe the idea in 
generality below. 

Consider a target probability mass function (pmf) $p(x|\omega)$
with finite support $\mathcal{X}$, which varies as a function of 
parameters $\omega \in \Omega$.
In general, one must recompute the probability tables 
to draw a new variate whenever $\omega$ changes---a 
computation that takes $O(|\mathcal{X}|)$ time.
However, if the dependence on $\omega$ is of a certain 
restricted form, we show that it is possible to achieve 
better scalability by expressing the target as a mixture.
This is made precise in the following result.
\begin{proposition}
  \label{thm:pert-sampling}
  Let $p(x|\omega)$ be a pmf with finite support $\mathcal{X}$, 
  which depends on parameters $\omega \in \Omega$.
  Suppose there exists a ``base'' pmf $q(x)$ over $\mathcal{X}$ 
  which is independent of $\omega$ and a non-negative bounded 
  perturbation term $\epsilon(x|\omega)$, such that $p(x|\omega)$ 
  can be factorized as 
  $p(x|\omega) \propto q(x) (1 + \epsilon(x|\omega))$.
  Then $p(x|\omega)$ can be expressed as a mixture over the base 
  pmf $q(x)$ and a ``perturbation'' pmf $v(x|\omega) := c \, q(x) 
  \epsilon(x|\omega)$ over $\mathcal{X}^\star = \{x \in \mathcal{X}: 
  \epsilon(x|\omega) > 0\}$ as follows:
  \begin{equation}
  p(x|\omega) = \frac{c}{1 + c} q(x) + \frac{1}{1 + c} v(x|\omega)
  \end{equation}
  where $c^{-1} := 
    \sum_{x \in \mathcal{X}^\star} q(x) \epsilon(x|\omega)$.
\end{proposition}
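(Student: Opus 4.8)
The plan is to work directly with the normalizing constant implied by the stated proportionality and then rearrange algebraically; no machinery beyond the fact that $q$ is a pmf is needed. First I would write $p(x|\omega) = q(x)(1+\epsilon(x|\omega))/Z(\omega)$, where $Z(\omega) := \sum_{x \in \mathcal{X}} q(x)(1+\epsilon(x|\omega))$ is finite and strictly positive because $\mathcal{X}$ is finite and $\epsilon$ is bounded and non-negative. Splitting this sum and using $\sum_{x\in\mathcal{X}} q(x) = 1$ gives $Z(\omega) = 1 + \sum_{x\in\mathcal{X}} q(x)\epsilon(x|\omega)$. Since $\epsilon(x|\omega) = 0$ for every $x \notin \mathcal{X}^\star$ by the definition of $\mathcal{X}^\star$, the remaining sum collapses to $\sum_{x\in\mathcal{X}^\star} q(x)\epsilon(x|\omega) = c^{-1}$, so $Z(\omega) = 1 + c^{-1} = (c+1)/c$.

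Next I would substitute $Z(\omega)$ back in and expand the product: $p(x|\omega) = \tfrac{c}{c+1}\, q(x)(1+\epsilon(x|\omega)) = \tfrac{c}{c+1}\, q(x) + \tfrac{c}{c+1}\, q(x)\epsilon(x|\omega)$. Recognising the second term as $\tfrac{1}{c+1}\big(c\, q(x)\epsilon(x|\omega)\big) = \tfrac{1}{c+1}\, v(x|\omega)$ yields exactly the claimed mixture identity. To finish I would verify that the right-hand side really is a mixture of probability mass functions: the weights $c/(c+1)$ and $1/(c+1)$ are non-negative and sum to one; $q$ is a pmf by hypothesis; and $v(\cdot|\omega)$ is non-negative, supported on $\mathcal{X}^\star$ (since $\epsilon$ vanishes off $\mathcal{X}^\star$), and normalised, because $\sum_{x\in\mathcal{X}^\star} v(x|\omega) = c \sum_{x\in\mathcal{X}^\star} q(x)\epsilon(x|\omega) = c\cdot c^{-1} = 1$.

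The only subtlety — and the closest thing to an obstacle — is the degenerate case in which $\mathcal{X}^\star = \emptyset$, i.e.\ $\epsilon(\cdot|\omega) \equiv 0$: then $c^{-1} = 0$, so $c$ (and hence $v$) is undefined. I would dispose of this by observing that in that case $p(x|\omega) = q(x)$ outright, which is precisely the limiting form of the mixture as $c \to \infty$ (all weight on $q$); alternatively, one can simply impose the standing assumption $\mathcal{X}^\star \neq \emptyset$, which holds whenever the proposition is actually invoked in Section~\ref{sec:pert-sampling}. Apart from this caveat, the argument is pure rearrangement and poses no real difficulty.
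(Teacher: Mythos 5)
Your argument is correct and is precisely the direct verification by substitution that the paper's proof gestures at (the paper simply states ``straightforward to verify by substitution''), just written out in full: compute $Z(\omega)=1+c^{-1}$, expand, and check that $v$ is a normalised pmf on $\mathcal{X}^\star$. Your handling of the degenerate case $\mathcal{X}^\star=\emptyset$ is a sensible extra caveat the paper does not mention, but otherwise the two proofs coincide.
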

\begin{proof}
  The result is straightforward to verify by substitution.
\end{proof}

Algorithm~\ref{app-alg:pert-sampling} (in Appendix~\ref{app-sec:pert-sampling})
shows how to apply this result to draw random variates from a target pmf.
Briefly, it consists of three steps:
(i)~the perturbation pmf $v$ and its normalization 
constant $c$ are computed;
(ii)~a biased coin is tossed to choose between the base 
pmf $q$ and the perturbation pmf $v$; and
(iii)~a random variate is drawn from the selected pmf.
If $q$ is selected, a pre-initialized Alias sampler is used to 
draw the random variate (reused for all $\omega$).
Otherwise if $v$ is selected, a new Alias sampler is instantiated. 
The result below states the time complexity of this algorithm 
(see Appendix~\ref{app-sec:pert-sampling} for a proof).
\begin{proposition}
  \label{thm:complexity-pert-sampling}
  Algorithm~\ref{app-alg:pert-sampling} (in 
  Appendix~\ref{app-sec:pert-sampling}) returns a random variate
  from the target pmf $p(x|\omega)$ for any $\omega \in \Omega$ 
  in $O(|\mathcal{X}^\star|)$ time.
\end{proposition}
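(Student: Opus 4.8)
The plan is to split the argument into a correctness claim and a running-time claim, handling each step of Algorithm~\ref{app-alg:pert-sampling} in turn. Correctness should be essentially immediate from Proposition~\ref{thm:pert-sampling}: that result already certifies that $p(x|\omega) = \frac{c}{1+c} q(x) + \frac{1}{1+c} v(x|\omega)$ is a valid two-component mixture. Since the algorithm tosses a biased coin that selects $q$ with probability $\frac{c}{1+c}$ and $v$ with probability $\frac{1}{1+c}$, and then returns an exact draw from the selected component, the law of total probability gives that the returned variate is distributed according to $p(\cdot|\omega)$. The only fact I need about the Vose--Alias method here is that, once its tables are built, it produces an exact sample from its target pmf; this is standard and can be cited to \citet{vose_linear_1991}.

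For the running-time claim, I would account for the three steps separately, under the standing assumption that each evaluation of $q(x)$ and $\epsilon(x|\omega)$ costs $O(1)$. Step~(i) computes $\epsilon(x|\omega)$ and the products $q(x)\epsilon(x|\omega)$ only over $x \in \mathcal{X}^\star$, accumulating the normalizer $c^{-1} = \sum_{x \in \mathcal{X}^\star} q(x)\epsilon(x|\omega)$; this is a single pass over $\mathcal{X}^\star$ and therefore costs $O(|\mathcal{X}^\star|)$. Step~(ii), the coin toss, is $O(1)$. Step~(iii) branches: if the base component is chosen, the draw uses the pre-initialized Alias sampler for $q$, which is $O(1)$; if the perturbation component is chosen, a fresh Alias sampler over the support $\mathcal{X}^\star$ must be constructed, costing $O(|\mathcal{X}^\star|)$ to build and $O(1)$ to sample. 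Summing the two branches (equivalently, taking the worse of them) gives $O(|\mathcal{X}^\star|)$ overall, establishing the bound.

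The crux of the argument---and the reason the bound is $O(|\mathcal{X}^\star|)$ rather than $O(|\mathcal{X}|)$---is that the Alias sampler for the base pmf $q$ does \emph{not} depend on $\omega$, so its $O(|\mathcal{X}|)$ construction cost can be paid \emph{once} and amortized away across all subsequent calls for differing $\omega$. I would make this explicit, stating that the per-variate complexity is measured assuming the $q$-sampler has already been built, and that only the $\omega$-dependent work (the perturbation pass, and the perturbation Alias build when that branch is selected) is charged to each call. This amortization is the main point requiring care: one must confirm that nothing in steps~(i)--(iii) ever touches $\mathcal{X} \setminus \mathcal{X}^\star$ except implicitly through the precomputed $q$-sampler, and that $c$ (hence the coin bias) is obtained as a by-product of the same single pass over $\mathcal{X}^\star$, so that no extra full sweep of $\mathcal{X}$ is ever incurred at call time.
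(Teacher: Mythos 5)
Your proposal is correct and follows essentially the same route as the paper's proof: a single pass over $\mathcal{X}^\star$ to build the perturbation pmf and its normalizer, an $O(1)$ draw from the pre-initialized Alias sampler for $q$, and an $O(|\mathcal{X}^\star|)$ construction of a fresh Alias sampler for $v$ when that branch is taken, with correctness inherited from Proposition~\ref{thm:pert-sampling}. The only difference is that you spell out the correctness and amortization points explicitly, whereas the paper's proof is a terse line-by-line complexity count that leaves correctness implicit.
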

This is a promising result, since the size of the 
perturbation support $|\mathcal{X}^\star|$ is typically of 
order 10 for our application, while the size of the full
 support $|\mathcal{X}|$ may be as large as $10^5$.
Hence, we expect a significant speed-up over the na\"{i}ve 
approach.

\subsubsection{Application of perturbation sampling}
We now return to our original objective: performing 
the joint PCG update for $y_{ea}$ and $\{\gamma_{tr}, 
z_{tra}\}_{\entset_{e}}$.
Referring to Equation~\ref{eqn:y-gamma-update}, we can express the 
conditional distribution for $y_{ea}$ (i.e. the target distribution) as
\begin{equation}
p(y_{ea} = v|\entset_{e}, \vec{\Theta}, \vec{X}^{(o)}, \vec{O}) \propto 
  q_{a}(v|\entset_{e}, \mathbf{O}) \left(1 + 
  \epsilon_{a}(v|\entset_{e}, \vec{\Theta}, \vec{X}^{(o)}, \vec{O}) \right).
\label{eqn:y-gamma-perturb}
\end{equation}
The base distribution is given by
\begin{equation}
q_{a}(v|\entset_{e}, \vec{O}) \propto 
\phi_{a}(v) \left( h_{a}(v) \right)^{n_a(\entset_{e}, \vec{O})}
\label{eqn:y-gamma-base}
\end{equation}
where $n_a(\entset_{e}, \vec{O}) = |\{(t,r) \in \entset_{e}: o_{tra}=1\}|$ 
is the number of records linked to entity $e$ with observed values for 
attribute $a$;
and the perturbation term is given by
\begin{equation}
\epsilon_{a}(v|\entset_{e}, \vec{\Theta}, \{x_{tra}\}_{\entset_{e}}) = 
  \prod_{\substack{(t,r) \in \entset_{e}\\o_{tra}=1}} \Bigg\{ 
  \euler^{\truncsimfn_{a}(x_{tra}, v)} + 
  \frac{(\theta_{ta}^{-1} - 1) \, \1{x_{tra} = v}}{\phi_{a}(x_{tra}) h_{a}(x_{tra})} 
  \Bigg\} - 1.
\label{eqn:y-gamma-pert-term}
\end{equation}

The full support of the target pmf is $\valset_{a}$, 
while the perturbation support is given by
\begin{equation*}
\{x_{tra}: (t,r) \in \entset_{e} \wedge o_{tra} = 1\} \ \cup 
  \{v \in \valset_{a}: \truncsimfn_{a}(v, x_{tra}) > 0 \wedge 
  o_{tra} = 1 \text{ for any } (t,r) \in \entset_{e}\}.
\end{equation*}
In words, this set consists of the observed values for attribute~$a$ 
in the records linked to entity~$e$, plus any 
sufficiently similar values from the attribute domain (for which 
the truncated similarity is non-zero).
The size of the perturbation set will vary depending 
on the cut-off used for the truncation transformation---the 
higher the cut-off, the smaller the set.
This implies that there is a trade-off between 
efficiency (small perturbation set) and accuracy 
(lower cut-off).

\begin{remark}
  The astute reader may have noticed that the base 
  distribution $q_{a}$ given in Equation~\ref{eqn:y-gamma-base} 
  is not completely independent of the conditioned parameters, 
  as is required by Proposition~\ref{thm:pert-sampling}.
  In particular, $q_{a}$ depends on $n_a(\entset_{e}, \vec{O})$---roughly
  the size of entity $e$.
  Fortunately, we expect the range of regularly 
  encountered entity sizes to be small, so we 
  sacrifice some memory by instantiating multiple 
  Alias samplers for each $n_a(\entset_{e}, \vec{O})$ in some 
  expected range.
  In the worst case, when a value is encountered 
  outside the expected range and the base distribution 
  is required (unlikely since the weight on the base component is 
  typically small),
  we instantiate the base distribution on-the-fly 
  (same asymptotic cost as the na\"{i}ve approach).
\end{remark}


%
\section{Empirical evaluation}
\label{sec:experiments}

\newcolumntype{Y}{>{\centering\arraybackslash}X}
\begin{table*}
  \spacingset{1}
  \centering
  \caption{Summary of data sets. 
    Those marked with a `$\star$' are synthetic.}
  \footnotesize
  \label{tbl:data-sets}
  \begin{center}
  \begin{tabularx}{\textwidth}{l *{3}{c} *{2}{Y}}
    \toprule
    Data set & \# records ($R$) & \# tables ($T$) & \# entities & 
    \multicolumn{2}{c}{\# attributes ($A$)} \\ 
    \cmidrule{5-6}
    & & & & {\footnotesize categorical} & {\footnotesize string } \\
    \midrule
    $\star$ \texttt{ABSEmployee} & 660,000 & 3 & 400,000 & 4 & 0 \\
    \texttt{NCVR}        & 448,134 & 2 & 296,433 & 3 & 3 \\
    \texttt{NLTCS}       &  57,077 & 3 &  34,945 & 6 & 0 \\
    \texttt{SHIW0810}    &  39,743 & 2 &  28,584 & 8 & 0 \\
    $\star$ \texttt{RLdata10000} &  10,000 & 1 &   9,000 & 2 & 3 \\
    \bottomrule
  \end{tabularx}
  \end{center}
\end{table*}

We present an evaluation of \dblink\ using two synthetic and three real 
data sets, as summarized in Table~\ref{tbl:data-sets}. 
The data sets include applications such as ER of employees in 
administrative and survey data (\texttt{ABSEmployee}), ER of voters in 
registration databases (\texttt{NCVR}) and ER of respondents in 
anonymized survey data (\texttt{SHIW0810}).
All results presented here were obtained using a local server in 
pseudocluster mode, however some were replicated on a cluster in the 
Amazon public cloud (see Appendix~\ref{app-sec:cloud-results}) to test 
the effect of higher communication costs.
Further details about the data sets, hardware, implementation 
and parameter settings are provided in 
Appendix~\ref{app-sec:experiments-setup}.

\subsection{Computational and sampling efficiency}
\label{sec:efficiency-expts}
Following \cite{turek_efficient_2016}, we measured the efficiency using the 
rate of effective samples produced per unit time (ESS rate), which 
balances sampling efficiency (related to mixing\slash autocorrelation) and 
computational efficiency.
We used the \texttt{mcmcse} R package~\citep{mcmcse} to compute the 
effective sample size (ESS), which implements a multivariate method 
proposed by \citet{vats_multivariate_2019}.

Since the number of variables in the model is unwieldy (there are at least 
$(E + R + T) A + R$ unobserved variables) we computed the ESS for the 
following summary statistics:
\begin{itemize}
  \item the number of observed entities (scalar);
  \item the aggregate distortion for each attribute (vector); and
  \item the cluster size distribution (vector containing 
  frequency of 0-clusters, 1-clusters, 2-clusters, etc.).
\end{itemize}

\paragraph{\dblink\ versus \blink.}
We compared \dblink\ (using the PCG-I sampler) to our own implementation of 
\blink\ (i.e.\ a Gibbs sampler without any of the tricks described in 
Section~\ref{sec:tricks}).
For a fair comparison, we switched off blocking in \dblink.
We used the relatively small \texttt{RLdata10000} data set, as \blink\ 
cannot cope with larger data sets.
Figure~\ref{fig:convergence-vs-impl} contains trace plots for two 
summary statistics as a function of running time.
It is evident that \blink\ has not converged to the equilibrium 
distribution within the allotted time of 11 hours, while \dblink\ 
converges to equilibrium in 100 seconds.
Looking solely at the time per iteration, \dblink\ is at least 200$\times$ 
faster than \blink.

\begin{figure}[t]
  \centering
  \includegraphics{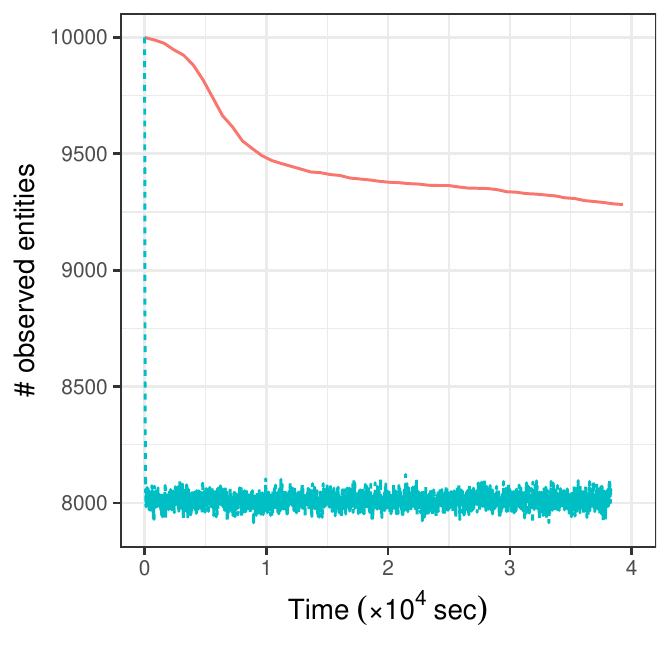} \quad
  \includegraphics{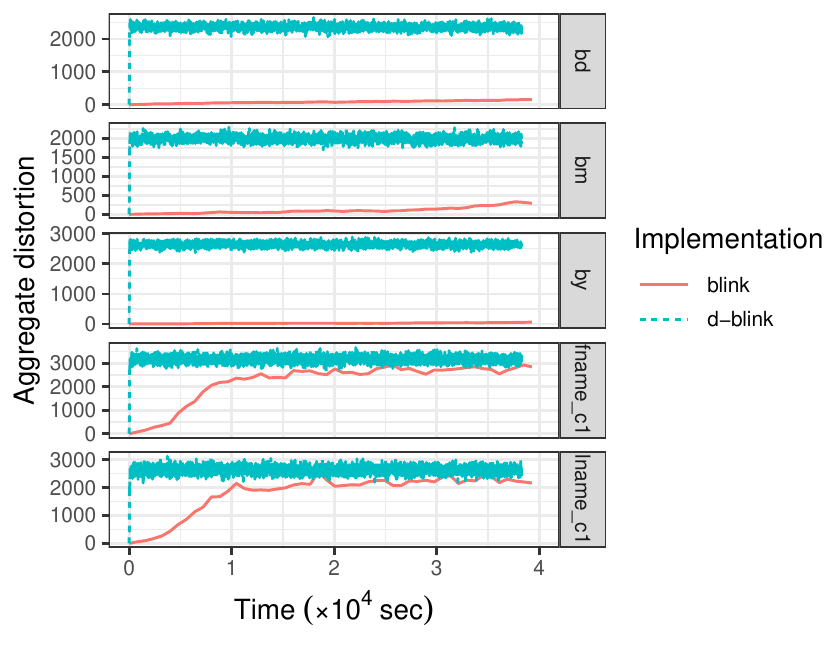}
  \caption{Comparison of convergence rates for \dblink\ and 
  \blink. 
  The summary statistics for \dblink\ (number of observed entities 
  on the left and attribute distortions on the right) rapidly converge 
  to equilibrium, 
  while those for \blink\ fail to converge within 11 hours.}
  \label{fig:convergence-vs-impl}
\end{figure}

\paragraph{Blocking and efficiency.}
We tested the effect of varying the number of blocks $B$ on the efficiency 
of \dblink.
For each value of $B$, we computed the ESS rate 
averaged over 3000 iterations.
We used the \texttt{NLTCS} data set and the PCG-I sampler.
Figure~\ref{fig:speed-up-vs-num-partitions} presents the results in terms of 
the speed-up relative to the ESS rate for $B = 1$. 
We observe a near-linear speed-up in $B$, with the exception of $B=32$.
The speed-up is expected to taper off with increasing numbers of blocks, 
as parallel gains in efficiency are overcome by losses due to communication 
costs and\slash or poorer mixing.
This tipping point seems difficult to predict for a given set up, as it 
depends on complex factors such as the data distribution, the splitting rules 
used, and the hardware characteristics.

\begin{figure}[t]
  \centering
  \includegraphics{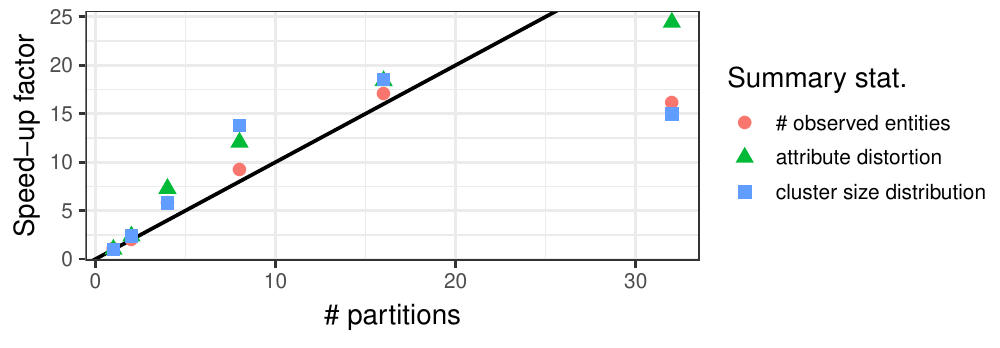}
  \caption{Efficiency of \dblink\ as a function of the
    number of blocks $B$ and summary statistic of interest 
    (larger is better).
    The speed-up measures the ESS rate relative to the ESS rate 
    for $B = 1$ (no blocking) for the \texttt{NLTCS} data set.}
  \label{fig:speed-up-vs-num-partitions}
\end{figure}

\paragraph{Sampling methods and efficiency.}
We evaluated the efficiency of the three samplers introduced in 
Section~\ref{sec:pcg-sampling} (Gibbs, PCG-I and PCG-II).
As above, we computed the ESS rate as an average over 3000 iterations.
We set $B = 16$ and used the \texttt{NLTCS} data set.
The results, shown in Figure~\ref{fig:speed-up-vs-sampler}, 
indicate that the PCG-I sampler is considerably more efficient (by a factor 
of 1.5--2$\times$) than the baseline Gibbs sampler for this data set.
We also observe that the PCG-II sampler performs quite poorly in comparison: 
between 20--30$\times$ slower than the Gibbs sampler.
This is because the marginalization and trimming for the $\vec{\Lambda}$ 
update for PCG-II prevents us from applying the trick described in 
Section~\ref{sec:pruning-links}.
Thus although PCG-II is expected to be more efficient in terms of 
reducing autocorrelation, it is less efficient overall as each iteration 
is too computationally expensive.

\begin{figure}[t]
  \centering
  \includegraphics{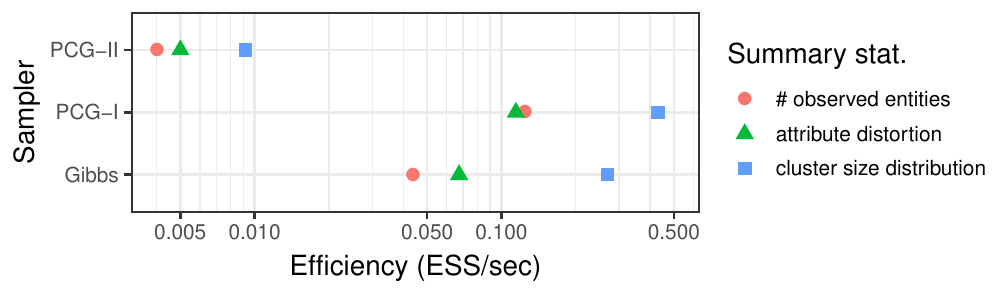}
  \caption{Efficiency of \dblink\ as a function of the 
    sampler and summary statistic of interest (larger is better).
    All measurements are for the \text{NLTCS} data set with 
    $B = 16$.
    }
  \label{fig:speed-up-vs-sampler}
\end{figure}

\subsection{Linkage quality}
\label{sec:linkage-quality}
Though not our primary focus, we assessed the performance of \dblink\ 
in terms of its predictions for the linkage structure (the matching step) 
for the data sets in Table~\ref{tbl:data-sets}.
This was not previously possible with \blink, as it could 
only scale to small data sets of around 1000 records.

\paragraph{Point estimate methodology.}
To evaluate the matching performance of \dblink\ 
with respect to the ground truth, we extracted a point 
estimate of the linkage structure from the posterior using 
the \emph{shared most probable maximal matching sets (sMPMMS)} 
method~\citep{steorts_bayesian_2016}.
This method circumvents the problem of label 
switching~\citep{jasra_markov_2005}---where the identities 
of the entities do not remain constant along the 
Markov chain.

The sMPMMS method involves two main steps.
In the first step, the most probable entity cluster is 
computed for each record based on the posterior samples.
In general, these entity clusters will conflict with one 
another---e.g.\ the most probable entity cluster for $r_1$ 
might be $(r_1, r_2)$ while for $r_2$ it is $(r_1, r_2, r_3)$.
The second step resolves these conflicts by assigning 
precedence to links between records and their most probable 
entity clusters.
The result is a globally-consistent estimate of the linkage 
structure---i.e.\ it satisfies transitivity.

We distributed the computation of the sMPMMS method in 
the Spark framework.
We used $9000$ approximate posterior samples which were 
derived from a Markov chain of length $10^5$ by 
discarding the first $10^4$ iterations as 
burn-in\footnote{We applied a burn-in of 210k iterations 
  for \texttt{NCVR} as it was slow to converge.} 
and applying a thinning interval of 10.
These parameters were chosen by inspection of trace plots, 
some of which are reported in Appendix~\ref{app-sec:trace-plots}.
By contrast to the point estimates reported here, we also examined full 
posterior estimation in Appendix~\ref{app-sec:trace-plots}.

\paragraph{Baseline methods.}
We compared the linkage quality of \dblink\ with three 
baseline methods as described below.
We focused on (scalable) unsupervised methods as we assumed very little to no
training data was available.
\begin{table}[t]
	\centering
	\caption{Comparison of matching quality.
		``ARI'' stands for adjusted Rand index and ``Err. \# clust.'' 
		is the percentage error in the number of clusters.}
	\label{tbl:linkage-quality}
	\spacingset{1}
  \footnotesize
  \begin{center}
	\begin{tabular}{l l *{5}{c}}
		\toprule
		Data set    & Method & \multicolumn{3}{c}{Pairwise measures} & \multicolumn{2}{c}{Cluster measures} \\
		\cmidrule(lr){3-5} \cmidrule(lr){6-7}
		& & Precision & Recall & F1-score & ARI & Err. \# clust.\ \\
		\midrule
		\multirow{5}{*}{\texttt{ABSEmployee}} 
		& \dblink\              & 0.9763 & 0.8530 & \textbf{0.9105} & \textbf{0.9105} & \textbf{+1.667\%} \\
		& Fellegi-Sunter (10)   & \textbf{0.9963} & 0.8346 & 0.9083 & --- & --- \\
		& Fellegi-Sunter (100)  & \textbf{0.9963} & 0.8346 & 0.9083 & --- & --- \\
		& Near Matching         & 0.0378 & \textbf{0.9930} & 0.0728 & --- & --- \\
		& Exact Matching        & 0.9939 & 0.8346 & 0.9074 & 0.9074 & +9.661\% \\
		\midrule
		\multirow{5}{*}{\texttt{NCVR}}
		& \dblink\              & 0.9146 & \textbf{0.9654} & \textbf{0.9393} & \textbf{0.9392} & \textbf{--3.587\%}\\
		& Fellegi-Sunter (10)   & 0.9868 & 0.7874 & 0.9083  & --- & ---\\
		& Fellegi-Sunter (100)  & 0.9868 & 0.7874 & 0.9083 & --- & --- \\
		& Near Matching         & 0.9899 & 0.7443 & 0.8497 & --- & --- \\
		& Exact Matching        & \textbf{0.9925} & 0.0017 & 0.0034 & 0.0034 & +51.09\% \\
		\midrule
		\multirow{5}{*}{\texttt{NLTCS}}
		& \dblink\              & 0.8319 & 0.9103 & 0.8693 & 0.8693 & --22.09\% \\
		& Fellegi-Sunter (10)   & \textbf{0.9094} & 0.9087 & \textbf{0.9090} & --- & --- \\
		& Fellegi-Sunter (100)  & \textbf{0.9094} & 0.9087 & \textbf{0.9090} & --- & --- \\
		& Near Matching         & 0.0600 & \textbf{0.9563} & 0.1129 & --- & --- \\
		& Exact Matching        & 0.8995 & 0.9087 & 0.9040 & \textbf{0.9040} & \textbf{+2.026\%} \\ 
		\midrule
		\multirow{5}{*}{\texttt{SHIW0810}}
		& \dblink\              & \textbf{0.2514} & 0.5396 & \textbf{0.3430} & \textbf{0.3429} & --37.65\% \\
		& Fellegi-Sunter (10)   & 0.0028 & 0.9050 & 0.0056 & --- & --- \\
		& Fellegi-Sunter (100)  & 0.0025 & \textbf{0.9161} & 0.0050 & --- & --- \\
		& Near Matching         & 0.0043 & 0.9111 & 0.0086 & --- & --- \\
		& Exact Matching        & 0.1263 & 0.7608 & 0.2166 & 0.2166 & \textbf{--37.40\%} \\ 
		\midrule
		\multirow{5}{*}{\texttt{RLdata10000}}
		& \dblink\              & 0.6334 & \textbf{0.9970} & 0.7747 & \textbf{0.7747} & 
		\textbf{--10.97\%} \\
		& Fellegi-Sunter (10)   & 0.9957 & 0.6174 & 0.7622 & --- & --- \\
		& Fellegi-Sunter (100)  & 0.9364 & 0.8734 & 0.9038 & --- & --- \\
		& Near Matching         & 0.9176 & 0.9690 & \textbf{0.9426} & --- & --- \\
		& Exact Matching        & \textbf{1.0000} & 0.0080 & 0.0159 & 0.0159 & +11.02\% \\ 
		\bottomrule
  \end{tabular}
  \end{center}
\end{table}

\begin{itemize}
  \item \emph{Exact Matching.} Links records that match on 
  all $A$ attributes.
  It is unsupervised and ensures transitivity.
  \item \emph{Near Matching.} Links records that match on 
  at least $L - 1$ attributes.
  It is unsupervised, but does not guarantee transitivity.
  \begin{sloppypar}
  \item \emph{Fellegi-Sunter.} Links records according to a 
  pairwise match score that is a weighted sum of attribute-level 
  dis\slash agreements.
  The weights are specified by the Fellegi-Sunter 
  model~\citep{fellegi_theory_1969} and were estimated using the 
  expectation-maximization algorithm, as implemented in the 
  \texttt{RecordLinkage} R package~\citep{sariyar_recordlinkage_2010}.
  We chose the threshold on the match score to optimize 
  the F1-score using a small amount of training data (size 10 and 100). 
  This makes the method semi-supervised.
  Note that the training data was sampled in a biased manner 
  to deal with the imbalance between the matches and non-matches
  (half with match scores above zero and half below).
  The method does not guarantee transitivity.
  \end{sloppypar}
\end{itemize}

\paragraph{Results.}
Table~\ref{tbl:linkage-quality} presents performance measures 
categorized by data set and method.
The pairwise performance measures (precision, recall and 
F1-score) are provided for all methods, however 
the cluster performance measures 
(adjusted Rand Index, see~\citealp{vinh_information_2010}, and percentage error in 
the number 
of clusters) are only valid for methods that guarantee transitivity of 
closure (\dblink\ and Exact Matching). 
Despite being fully unsupervised, \dblink\ achieves competitive 
performance when compared to the semi-supervised Fellegi-Sunter 
method.
The two simple baselines, Near Matching and Exact Matching, 
are acceptable for data sets with low noise but perform 
poorly otherwise (e.g.\ \texttt{NCVR} and \texttt{RLdata10000}).
We conducted an empirical sensitivity analysis for \dblink\ 
with respect to variations in the hyperparameters.
The results for \texttt{RLdata10000} (included in 
Appendix~\ref{app-sec:sensitivity}) show that \dblink\ is somewhat 
sensitive to all of the hyperparameters tested, however sensitivity is 
in general predictable, following clear and intuitive trends.
One interesting observation is the fact that \dblink\ tends to overestimate 
the amount of distortion.
This is perhaps not surprising given the absence of ground truth.

\section{Application to the 2010 U.S.\ Decennial Census}
\label{sec:decennial}
National statistics agencies frequently need to link inter- or intra-agency 
data sets, for a number of purposes such as quality control. 
One critical problem in the United States (U.S.) occurs every ten years, when 
the U.S.\ Census Bureau must enumerate the population in each State as 
mandated under the U.S.\ Constitution, Article I, Section 2. 
The enumeration is used to apportion the representation of legislators, 
and to allocate resources for housing, highways, schools, assistance 
programs, and other projects that are vital to the prosperity, welfare, 
and economic growth of the U.S. 
As the country grows and becomes more diverse, it becomes more challenging 
to produce an accurate enumeration. 
Many individuals elect not to fill out census forms, which results in 
them not being counted in the enumeration. 
Other individuals may be counted multiple times due to duplicate responses. 
For example, students attending universities or private schools (living 
in group quarters) are often double counted as they are legally required 
to be counted by their university\slash school, while also being counted 
by their parents\slash guardians as part of a household. 

Motivated by these data duplication issues, we apply \dblink\ to conduct 
an enumeration in the state of Wyoming. 
In order to improve coverage, we combine records from the 
2010 Decennial Census with administrative records from the 
Social Security Administration's Numerical Identification System 
(Numident).\footnote{The Numident is the Social Security Administration's 
computer database file of an abstract of the information contained in an 
application for a U.S.\ Social Security number.} 
In total, we consider 1,050,000 records representing the population of 
Wyoming: 
a subset of 494,000 records from the 2010 Decennial Census and 556,000 
records from the Numident.\footnote{These figures 
have been rounded to the nearest thousand as they are protected under 
Title~13.}
Our goal is to recover the unique individuals represented in these records
using unsupervised ER.
We apply \dblink\ using the overlapping attributes from the Census and the Numident: first and last name, date of birth, gender, and zip code. 
We treat first and last name as string-type attributes and the remaining 
attributes as categorical. 
To manage scalability, we utilize the $k$-d tree blocking function outlined in Section~\ref{sec:kd}, 
splitting recursively on gender and birth year at each level of the tree. 
Inference and MCMC diagnostics are discussed in Appendix~\ref{app-sec:census-mcmc}.

After performing ER using \dblink, we are able to provide a posterior 
estimate of the total number of unique individuals represented in both 
data sets. 
Table~\ref{tbl:census-results} reports a point estimate based on the mean. 
The standard error is quite narrow, which is consistent with 
knowledge of the uniform prior \citep{steorts_bayesian_2016}. 
We find that our estimate is significantly larger than the unadjusted count 
of 563,626 reported by \citet{census_report_2010}. 
The difference may be explained by several factors. 
Firstly, our approach may capture individuals who are not represented 
in the Census, but who are represented in the Numident (assuming they 
have a Social Security number).
Indeed, the participation rate for the Census is known to be lower in 
Wyoming than for other states \citep{census_response}.
Secondly, there may be some double-counting for records that cannot 
be reliably linked---e.g.~due to missing or unreliable attribute values.
Thirdly, there may be minor differences in the Census data---e.g.\ 
whether blank forms are discarded or not.

To assess the reliability of ER, we report pairwise evaluation measures (precision, recall and F1-score) in Table~\ref{tbl:census-results}. 
These measures are computed using ground truth identifiers, 
which are available for a limited subset of the records.
To our knowledge, these are the first performance measures that have been 
published for ER of Census and administrative data at the state-level.
However, we note that the measures should be interpreted with caution, 
as the limited ground truth may not be representative of all records 
(hence the need for unsupervised methods).

\begin{table}
  \centering
  \caption{Results for ER of 2010 Census and Numident data in Wyoming. 
  Pairwise evaluation measures are computed using ground truth identifiers 
  available for a subset of the records.}
  \label{tbl:census-results}
  \footnotesize
  \begin{center}
  \begin{tabular}{*{3}{c} *{2}{c}}
    \toprule
    \multicolumn{3}{c}{Pairwise measures} & \multicolumn{2}{c}{Posterior population size} \\
    \cmidrule(lr){1-3} \cmidrule(lr){4-5}
    Precision & Recall & F1-score &    Mean & Std.~error \\
    \midrule
    0.97 &   0.84 &     0.90 & 616,000 & 5,000 \\
    \bottomrule
  \end{tabular}
  \end{center}
\end{table}

We believe that \dblink\ shows promise in producing enumerations at the 
state-level, while accounting for ER uncertainty.
Moving forward, it would be beneficial to study the accuracy and scalability 
of \dblink\ in other states, to further assess the reliability of our 
methodology for conducting linkage tasks within national statistical 
agencies. 
While it is beyond the scope of this paper, we are also interested in 
incorporating additional sources of administrative data, such as tax records, 
in future work.

\section{Concluding remarks}
\label{sec:conclusions}
In this paper we have proposed \dblink---a method for performing scalable 
ER with integrated blocking in a fully Bayesian framework. 
Our approach leverages an auxiliary variable representation, which 
partitions the latent entities and records into auxiliary blocks. 
Since the auxiliary blocks are not fixed, but inferred during inference, 
we are able to propagate uncertainty between the blocking and ER stages. 
This stands in contrast with the existing literature, where blocking 
and ER are performed in two separate stages without uncertainty 
propagation. 
In addition, we have shown that our approach does not compromise 
the correctness of the marginal posterior over the model parameters. 
In other words, approximate posterior samples produced by \dblink\ 
are independent of the blocking design in the asymptotic limit.

To further improve scalability, we discussed inference for 
\dblink\ in a distributed\slash parallel setting.
We proposed a blocking function based on $k$-d trees, which achieves 
good load balancing at the block level.  
We designed a distributed partially-collapsed Gibbs sampler, 
with superior mixing properties compared to a standard Gibbs sampler. 
We also presented fast algorithms for the Gibbs updates, which 
leverage indexing data structures and perturbation sampling. 
Our empirical evaluation on five data sets demonstrated efficiency 
gains for \dblink\ in excess of 300$\times$ when compared to existing 
methods.
We also demonstrated the potential of \dblink\ in a population 
enumeration case study, using data from the 2010 Decennial 
Census and Social Security Administration. 
The resulting enumeration was output with uncertainty, and achieved 
high precision and recall. 

An implementation of \dblink\ is provided as an open-source Apache Spark 
package. 
We also provide an interface for R users, for broad accessibility. 
Our software has been put in place within the United States Census 
Bureau for research purposes.

\bigskip
\begin{center}
{\large\bf SUPPLEMENTAL MATERIALS}
\end{center}

\begin{description}

\item[Appendices:] Includes proofs, further details about the 
experimental setup, and additional results. (PDF file)

\item[Code:] An implementation of \dblink\ in Apache Spark 
and a corresponding R interface. (Zip file)

\item[Data:] An archive containing data sets that we have permission to 
redistribute. (Zip file)

\end{description}

\bigskip
\begin{center}
{\large\bf ACKNOWLEDGEMENTS}
\end{center}
The authors would also like to thank the anonymous reviewers, Associate 
Editor and Editor for their valuable comments and helpful suggestions. 
N.~Marchant acknowledges the support of an Australian Government 
Research Training Program Scholarship and the AMSIIntern program
hosted by the Australian Bureau of Statistics. 
R.~C.~Steorts and A.~Kaplan acknowledge the support of NSF SES-1534412 and 
CAREER-1652431.
B.~Rubinstein acknowledges the support of Australian Research Council grant DP150103710.
N.~Marchant and B.~Rubinstein also acknowledge support of Australian Bureau 
of Statistics project ABS2018.363.

\bibliographystyle{jasa}
{
\footnotesize
\bibliography{dblink}
}

\if1\arxiv
\newpage
\setcounter{figure}{0}
\setcounter{equation}{0}
\setcounter{table}{0}
\setcounter{algorithm}{0}
\setcounter{page}{1}
\makeatletter

\sectionfont{\Large\bfseries}
\renewcommand{\theequation}{S\arabic{equation}}
\renewcommand{\thefigure}{S\arabic{figure}}
\renewcommand{\thetable}{S\arabic{table}}
\renewcommand{\thealgorithm}{S\arabic{algorithm}}
\renewcommand{\bibnumfmt}[1]{[S#1]}
\renewcommand{\citenumfont}[1]{S#1}

\title{\bf Appendices for ``d-blink: Distributed End-to-End Bayesian Entity Resolution''}
\author{Neil G.~Marchant\textsuperscript{a} \and
	Andee Kaplan\textsuperscript{b} \and 
	Daniel N.~Elazar\textsuperscript{c} \and
	Benjamin I.~P.~Rubinstein\textsuperscript{a} \and 
	Rebecca C.~Steorts\textsuperscript{d}}
\date{
	\textsuperscript{a}School of Computing and Information Systems, University 
	of Melbourne\\
	\textsuperscript{b}Department of Statistics, Colorado State University\\
	\textsuperscript{c}Methodology Division, Australian Bureau of Statistics\\
	\textsuperscript{d}Department of Statistical Science and Computer Science, Duke University\\Principal Mathematical Statistician, United States Census Bureau\\[2ex]
	\today}
\maketitle

\newpage

\appendix
\section{Derivation of the posterior distribution}
\label{app-sec:full-posterior}
Here we sketch the derivation of the joint posterior distribution over the 
unobserved variables conditioned on the observed record attributes 
$\vec{X}^{(o)}$, which is given in 
Equation~\ref{eqn:partitioned-posterior} of the paper.
First we read the factorization off the plate diagram in 
Figure~\ref{fig:plate-diagram}, together with the conditional dependence 
assumptions detailed in Section~\ref{sec:model-specification} of the paper.
We obtain the following expression, up to a normalisation constant:
\begin{equation*}
\begin{split}
& p(\vec{\Gamma}, \vec{\Lambda}, \vec{Y}, \vec{Z}, \vec{\Theta}, \vec{X}^{(m)}|\vec{X}^{(o)}, \vec{O}) \propto 
  \prod_{e,a} p(y_{ea}|\phi_{a}) 
  \times \prod_{t,a} p(\theta_{ta}|\alpha_{a}, \beta_{a}) \\
& \quad {} \times 
  \prod_{t,r} \Big\{ p(\gamma_{tr}|\vec{Y})p(\lambda_{tr}|\gamma_{tr}, \vec{Y}) 
  \prod_{a} p(z_{tra}|\theta_{ta}) \Big\} 
  \times \prod_{\substack{t,r,a\\o_{tra}=1}} p(x_{tra}|z_{tra}, \lambda_{tr}, y_{\lambda_{tr}a}) \\
& \qquad {} \times 
  \prod_{\substack{t,r,a\\o_{tra}=0}} p(x_{tra}|z_{tra}, \lambda_{tr}, y_{\lambda_{tr}a}).
\end{split}
\end{equation*}
Ideally, we'd like to marginalize out all variables except $\vec{\Lambda}$ and 
$\vec{Y}$ (the variables of interest), however this is not tractable 
analytically.
Fortunately, we can marginalize out the missing record attributes 
$\vec{X}^{(m)}$ which yields Equation~\ref{eqn:partitioned-posterior} 
from the paper:
\begin{equation*}
\begin{split}
& p(\vec{\Gamma}, \vec{\Lambda}, \vec{Y}, \vec{Z}, \vec{\Theta}|\vec{X}^{(o)}, \vec{O}) \propto \prod_{e,a} p(y_{ea}|\phi_{a}) 
  \times \prod_{t,a} p(\theta_{ta}|\alpha_{a}, \beta_{a}) \\
& \quad {} \times 
  \prod_{t,r} \Big\{ p(\gamma_{tr}|\vec{Y})p(\lambda_{tr}|\gamma_{tr}, \vec{Y}) 
  \prod_{a} p(z_{tra}|\theta_{ta}) \Big\} 
  \times \prod_{\substack{t,r,a\\o_{tra}=1}} p(x_{tra}|z_{tra}, \lambda_{tr}, y_{\lambda_{tr}a}).
\end{split}
\end{equation*}

We can expand this further by substituting the conditional distributions 
given in Section~\ref{sec:model-specification} of the paper.
This yields:
\begin{equation}
\begin{split}
& p(\vec{\Gamma}, \vec{\Lambda}, \vec{Y}, \vec{Z}, \vec{\Theta}|\vec{X}^{(o)}, \vec{O}) \propto 
  \prod_{e,a} \phi_{a}(y_{ea}) \times \prod_{t,a} \theta_{ta}^{\alpha_{a} - 1} (1 - \theta_{ta})^{\beta_{a} - 1} \\
& \quad {} \times \prod_{t,r} \Big\{ \1{\lambda_{tr} \in 
  \partset_{\gamma_{tr}}(\vec{Y})} \prod_{a} \theta_{ta}^{z_{tra}} (1 - \theta_{ta})^{1-z_{tra}} \Big\} \\
& \mspace{40mu} {} \times  \prod_{\substack{t,r,a\\o_{tra}=1}} 
  \Big\{ (1 - z_{tra}) \1{x_{tra} = y_{\lambda_{tr}a}} 
  + z_{tra} \, \psi_{a}(x_{tra}|y_{\lambda_{tr}a}) \Big\}.
\end{split}
\label{app-eqn:expanded-posterior}
\end{equation}

\section{Equivalence of \texorpdfstring{\dblink}{d-blink} and \texorpdfstring{\blink}{blink}}
In this section, we present proofs of 
Propositions~\ref{thm:sim-dist-equiv} and~\ref{thm:posterior-equiv}, 
which show that the inferences we obtain from \dblink\ are equivalent to those 
we would obtain from \blink\ under certain conditions.

\subsection{Proof of Proposition~\ref{thm:sim-dist-equiv}: equivalence of 
distance\slash similarity representations}
\label{app-sec:proof-sim-dist-equiv}
It is straightforward to show that $\simfn$ as defined in 
Equation~\ref{eqn:simfn-distfn-correspondence} of the paper satisfies 
the requirements of Definition~\ref{def:attribute-sim-measure}.
All that remains is to show that the two parameterizations of the distortion 
distribution $\psi_{a}$ are equivalent.
Beginning with $\psi_{a}$ as parameterized in \blink, we substitute 
Equation~\ref{eqn:simfn-distfn-correspondence} and observe that
\begin{equation*}
\psi_{a}(v|w) \propto \phi_{a}(v) \euler^{-\distfn_{a}(v, w)} 
   = \phi_{a}(v) \euler^{d_\mathrm{max;a} + \simfn_{a}(v, w)} 
   \propto \phi_{a}(v) \euler^{\simfn_{a}(v, w)}.
\end{equation*}
This is identical to our parameterization in 
Equation~\ref{eqn:distortion-dist}. \qed

\subsection{Proof of Proposition~\ref{thm:posterior-equiv}: equivalence of 
\texorpdfstring{\dblink}{d-blink} and \texorpdfstring{\blink}{blink}}
\label{app-sec:proof-posterior-equiv}
Given that 
\begin{itemize}
  \item Proposition~\ref{thm:sim-dist-equiv} holds, 
  \item the distortion hyperparameters are the same for all attributes, and 
  \item all record attributes are observed,
\end{itemize}
the only factor in the posterior that differs from \blink\ is:
\begin{equation}
\prod_{t,r} p(\lambda_{tr}| \gamma_{tr}, \vec{Y}) 
  p(\gamma_{tr}|\vec{Y}).
\label{app-eqn:posterior-part-factor}
\end{equation}
Substituting the density for the conditional distributions 
for a single $t,r$ factor yields:
\begin{equation*}
p(\lambda_{tr}|\gamma_{tr},\vec{Y}) p(\gamma_{tr}|\vec{Y})
= \frac{\1{\lambda_{tr} \in \partset_{\gamma_{tr}}(\vec{Y})}}
{|\partset_{\gamma_{tr}}(\vec{Y})|} 
\times \frac{|\partset_{\gamma_{tr}}(\vec{Y})|}{E} = \frac{1}{E} \1{\lambda_{tr} \in \partset_{\gamma_{tr}}(\vec{Y})}.
\end{equation*}
Putting this in Equation~\ref{app-eqn:posterior-part-factor} and 
marginalizing over $\vec{\Gamma}$ we obtain:
\begin{equation*}
\prod_{t,r} \sum_{\gamma_{tr} = 1}^{B} p(\lambda_{tr}|\gamma_{tr},\vec{Y}) p(\gamma_{tr}|\vec{Y}) 
= \prod_{t,r} \frac{1}{E} \sum_{\gamma_{tr} = 1}^{B} \1{\lambda_{tr} \in \partset_{\gamma_{tr}}(\vec{Y})} 
= \prod_{t,r} \frac{1}{E} \1{\lambda_{tr} \in \{1,\ldots,E\}},
\end{equation*}
which is the factor that appears in the posterior for \blink. \qed

\section{Splitting rules for the \textit{k}-d tree blocking function}
\label{app-sec:splitting-rules}
In Section~\ref{sec:kd} of the paper we outline a blocking function inspired 
by $k$-d trees.
When inserting a node in the tree, we require a splitting rule that partitions 
the input set of values.
In ordinary $k$-d trees, the median is often used for this purpose, 
however it is not appropriate for the discrete input sets that we 
encounter.
As a result, we propose the following alternative splitting rules:
\begin{enumerate}
  \item \emph{Ordered median.}
  This rule is appropriate if the set of input attribute 
  values is large and\slash or has a natural ordering.
  If there is no natural ordering, an artificial ordering 
  must be applied (e.g. lexicographic ordering).
  The splitting rule is determined by sorting the input 
  values and finding the median, accounting for the 
  frequency of each value.
  Attribute values ordered before (after) the median are 
  passed to the left (right) child node.
  \item \emph{Reference set.}
  This rule is appropriate if the set of input attribute 
  values is small with no natural ordering.
  The splitting rule is determined by using a first-fit 
  bin-packing algorithm to split the values into two roughly 
  equal-sized bins, accounting for the frequency of each value.
  One of these bins is then labeled the ``reference set''.
  Attribute values (not) in the reference set are passed to 
  the left (right) child node.
\end{enumerate}

\section{Gibbs update distributions}
\label{app-sec:gibbs}
Here we list the conditional distributions for the Gibbs updates.
These are derived by referring to the posterior distribution in 
Equation~\ref{app-eqn:expanded-posterior}.

\subsection{Update for \texorpdfstring{$\theta_{ta}$}{distortion probabilities}}
\begin{equation}
\theta_{ta}|\vec{Z}, \vec{\Lambda}, \vec{\Gamma}, \vec{Y}, \vec{X}^{(o)}, \vec{O} \sim \operatorname{Beta}[z_{t \cdot a} + \alpha_{a}, R_t - z_{t \cdot a} + \beta_{a}]
\label{app-eqn:theta-update}
\end{equation}
where $z_{t \cdot a} := \sum_{r = 1}^{R_t} z_{tra}$.

\subsection{Update for \texorpdfstring{$z_{tra}$}{distortion indicators}}
\begin{equation}
  \begin{split}
  & z_{tra}|\vec{\Lambda}, \vec{\Gamma}, \vec{Y}, \vec{\Theta}, \vec{X}^{(o)}, \vec{O} \sim 
      (1-o_{tra}) \operatorname{Bernoulli}[\theta_{ta}] +
      o_{tra} \operatorname{Bernoulli}[\zeta_a(\theta_{ta}, x_{tra}, y_{\lambda_{tr}a})]
  \end{split}
  \label{app-eqn:z-update}
\end{equation}
where
$\zeta_a(\theta, x, y) = \begin{cases}
    1, & \text{if } x \neq y, \\
    \frac{\theta \psi_{a}(x|y)}{\theta \psi_{a}(x|y) - \theta + 1}, &\text{otherwise.}
  \end{cases}$

\subsection{Update for \texorpdfstring{$\lambda_{tr}$}{entity assignments}}
\begin{equation}
\begin{split}
& p(\lambda_{tr} | \vec{\Gamma}, \vec{Y}, \vec{\Theta}, \vec{Z}, \vec{X}^{(o)}, \vec{O}) \propto \\
& \qquad \1{\lambda_{tr} \in \partset_{\gamma_{tr}}(\vec{Y})} 
\prod_{\substack{a\\o_{tra}=1}} \Big\{ (1 - z_{tra}) \1{x_{tra} = y_{\lambda_{tr}a}} + z_{tra} \psi_{a}(x_{tra}|y_{\lambda_{tr}a})\Big\}.
\end{split}
\label{app-eqn:lambda-update}
\end{equation}

\section{Perturbation sampling algorithm}
\label{app-sec:pert-sampling}
In Proposition~\ref{thm:pert-sampling} of the paper, we show how to express a 
target pmf $p$ (from which we'd like to draw random variates) as a mixture over 
a base pmf $q$ and a perturbation pmf $v$.
Algorithm~\ref{app-alg:pert-sampling} demonstrates how to efficiently 
draw random variates from the target pmf using this mixture representation.

\begin{algorithm}[H]
  \caption{Perturbation sampling for $p(x|\omega)$}
  \label{app-alg:pert-sampling}
  \begin{algorithmic}[1]
    \Statex \textbf{Input:} 
    map from $x, \omega \in \mathcal{X}^\star \times \Omega \to 
    \epsilon(x|\omega)$; 
    map from $x \in \mathcal{X} \to q(x)$; 
    pre-initialized Alias sampler for $q$.
    \State $v \gets \emptyset$ 
    \Comment{empty map}
    \For {$x \in \mathcal{X}^\star$}
    \State $v(x) \gets q(x) \epsilon(x|\omega)$
    \EndFor
    \State $c \gets 1/\sum_{x \in \mathcal{X}^\star} v(x)$ 
    \Comment{normalization}
    \State $s \sim \mathrm{Bernoulli}\!\left[\frac{c}{1+c}\right]$
    \If {$s = 1$}
    \State \textbf{Return:} $x \sim q(\cdot)$ 
    \Comment{using input Alias sampler}
    \Else 
    \State $v \gets c \cdot v$
    \State \textbf{Return:} $x \sim v(\cdot)$
    \Comment{using new Alias sampler}
    \EndIf
  \end{algorithmic}
\end{algorithm}

\subsection{Proof of Proposition~\ref{thm:complexity-pert-sampling}: 
complexity of perturbation sampling}
Let us analyze the time complexity of Algorithm~\ref{app-alg:pert-sampling}.
Lines 2--6 are $O(|\mathcal{X}^\star|)$.
By properties of the Alias sampler~\citepApp{vose_linear_1991_app},
line 8 is $O(1)$ and line 11 is $O(|\mathcal{X}^\star|)$.
Thus the overall complexity is $O(|\mathcal{X}^\star|)$.

\section{Further details on the experimental set-up}
\label{app-sec:experiments-setup}
\subsection{Data sets}
We provide a brief description of each data set below. 
All data sets come with some form of ``ground truth'', which we 
use for evaluation purposes. 
However, the ground truth for \texttt{NCVR} and \texttt{SHIW0810} (two 
of the real data sets) may not be error-free as indicated below.

\begin{itemize}
  \item \texttt{ABSEmployee}. A synthetic data set used 
  internally for linkage experiments at the Australian Bureau of Statistics.
  It simulates an employment census and two supplementary 
  surveys (it is not derived from any real data sources).
  We used four categorical attributes: \texttt{MB}, \texttt{BDAY}, 
  \texttt{BYEAR} and \texttt{SEX}.
  \item \texttt{NCVR}. Two snapshots from the North Carolina 
  Voter Registration database taken two months 
  apart~\citepApp{christen_preparation_2014}.
  The snapshots are filtered to include only those voters 
  whose details changed over the two-month period.
  We used \texttt{first\_name}, \texttt{middle\_name} and 
  \texttt{last\_name} as string-type attributes; and 
  \texttt{age}, \texttt{gender} and \texttt{zip\_code} as 
  categorical attributes. 
  Unique voter identifiers are provided, however they are known to contain 
  some errors~\citepApp{christen_preparation_2014}.
  \item \texttt{NLTCS}. A subset of the U.S.\ National Long-Term 
  Care Survey~\citepApp{manton_nltcs_2010} comprising the 
  1982, 1989 and 1994 waves.
  It was necessary to use a subset, as race was subsampled in the other three 
  years, making it unsuitable for ER.
  We used four categorical attributes: \texttt{SEX}, \texttt{DOB}, 
  \texttt{STATE} and \texttt{REGOFF}.
  Unique identifiers are available which are known to be of high quality.
  \item \texttt{SHIW0810}. A subset from the Bank of Italy's 
  Survey on Household Income and Wealth~\citepApp{bancaitalia_2010} 
  comprising the 2008 and 2010 waves.
  We used eight categorical attributes: \texttt{IREG}, \texttt{SESSO}, 
  \texttt{ANASC}, \texttt{STUDIO}, \texttt{PAR}, \texttt{STACIV}, 
  \texttt{PERC} and \texttt{CFDIC}. 
  Unique identifiers were inferred using a deterministic algorithm, 
  which may not be error-free. 
  Further information and open-source code is provided at
  \url{http://github.com/ngmarchant/shiw}.
  \item \texttt{RLdata10000}. A synthetic data set provided 
  with the \texttt{RecordLinkage} R 
  package~\citepApp{sariyar_recordlinkage_2010_app}.
  We used \texttt{fname\_c1} and \texttt{lname\_c1} as string-type 
  attributes and \texttt{bd}, \texttt{bm}, \texttt{by} as categorical 
  attributes. 
  The \texttt{fname\_c2} and \texttt{lname\_c2} were excluded as they 
  have a high fraction of missing values.
\end{itemize}

\subsection{Implementation and hardware}
Our implementation of \dblink\ is written in Scala and depends on 
Apache Spark 2.3.1 (a distributed computing framework).
Since \dblink\ requires control over the partitioning (entities and 
linked records \emph{must} reside on their assigned partitions), we 
used the RDD API with a custom partitioner.
Our custom-built server ran in local (pseudo-cluster) mode, with
$2 \times$ 28-core Intel Xeon Platinum 8180M CPUs for a total of 112 
threads (with HyperThreading); and 128GB of allocated RAM on the driver.

\subsection{Parameter settings and initialization} 
\label{app-sec:parameter-settings}
We used the following parameter settings for all experiments.
\begin{itemize}
  \item The distortion hyperparameters $\alpha_{a}, \beta_{a}$ were set to 
  encode a prior mean distortion probability of approximately 1\%, with the 
  strength varying in proportion to the total number of records $R$:
  \begin{equation*}
  \alpha_{a} = R \times 10\% \times 1\% \text{ and }
    \beta_{a} = R \times 10\% \text{ for all } a.
  \end{equation*}
  \item The size of the latent entity population $E$ was set to $R$.
  This corresponds to a prior mean number of observed entities of 
  $(1 - \euler^{-1}) R \approx 0.63 R$, as shown by 
  \citeApp{steorts_bayesian_2016_app}.
  It is important not to set $E$ too low, as it places an upper bound 
  on the number of entities present in the data. 
  \item The entity attribute distributions $\{\phi_a\}$ were set empirically
  based on the observed record attributes. Specifically, we set 
  \begin{equation*}
    \phi_a(v) = \frac{\sum_{t = 1}^{T} \sum_{r = 1}^{R_t} o_{tra} \1{x_{tra} = v}}
      {\sum_{t = 1}^{T} \sum_{r = 1}^{R_t} o_{tra}} \quad \text{for all } a.
  \end{equation*}
  \item For simplicity, we treated all attributes as either 
  ``categorical-type''  with similarity function $\simfn_{\mathrm{const}}$ or 
  ``string-type'' with similarity function $10.0 \times \simfn_{\mathrm{nEd}}$ 
  (these are defined in Section~\ref{sec:attribute-sim-measure}).
  \item The similarity cut-off for string-type attributes was set to 7.0, 
  following advice in the \texttt{RecordLinkage} R 
  package~\citepApp{sariyar_recordlinkage_2010_app}.
  \item We used the $k$-d tree blocking function as defined in 
  Section~\ref{sec:kd}.
  The \emph{reference set} splitting rule was used for input sets with 30 or 
  fewer elements---the \emph{ordered median} splitting rule was used 
  otherwise.
\end{itemize}

To initialize the Markov chain, we linked each record to a unique entity and 
copied the record attributes into the entity attributes, assuming no 
distortion.
Any entity attributes that were missing after this process (due to missing 
record attributes) were filled by drawing an attribute value from the 
empirical distribution.
We set the thinning interval to 10---i.e. we only saved every tenth step along 
the chain.
This increases the effective sample size for a given storage budget.

\section{Results on Amazon EC2}
\label{app-sec:cloud-results}
We repeated two of the experiments described in 
Section~\ref{sec:efficiency-expts} of the main paper on a cluster 
running in the Amazon Elastic Compute Cloud (EC2).
For the worker (executor) nodes, we used varying numbers of \texttt{m5.xlarge} 
instances with 4 vCores, 16 GiB memory and 32 GiB of Elastic Block Store 
(EBS) storage. 
Due to the increased latency and decreased bandwidth between the 
compute nodes, we expected the efficiency to decrease.
This is indeed what we observed.

Figure~\ref{app-fig:speed-up-vs-num-partitions} plots the speed-up as a function 
of the number of blocks $B$ relative to a baseline with no partitioning.
We observe poorer scaling with $B$ compared to the results we obtained 
on our local server (c.f.~Figure~\ref{fig:speed-up-vs-num-partitions} 
in the main paper).
Figure~\ref{app-fig:speed-up-vs-sampler} plots the efficiency as a function of 
the sampling method with $B=16$.
The results are qualitatively similar to the ones we obtained using our local 
server (c.f.~Figure~\ref{fig:speed-up-vs-sampler} in the main paper).
However, the ESS rate was reduced for all samplers as expected due to 
increased communication costs.

\begin{figure}
  \centering
  \includegraphics{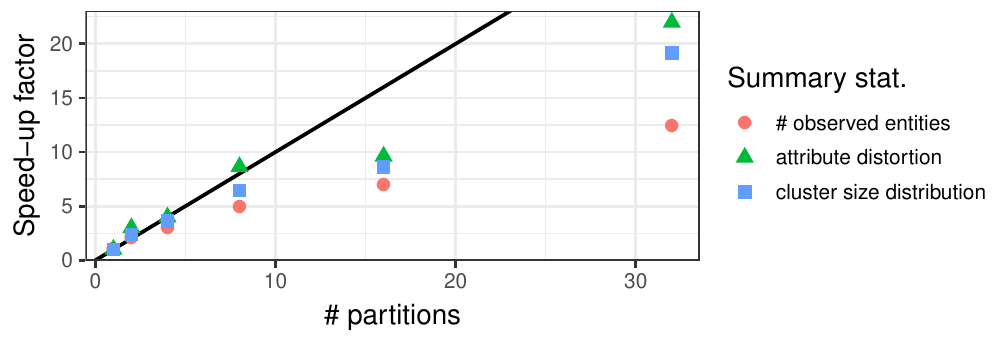}
  \caption{Efficiency of \dblink\ as a function of the
    number of blocks $B$ and summary statistic of interest 
    (larger is better).
    The speed-up measures the ESS rate relative to the ESS rate 
    for $B = 1$ (no partitioning) for the \texttt{NLTCS} data set.}
  \label{app-fig:speed-up-vs-num-partitions}
\end{figure}

\begin{figure}
  \centering
  \includegraphics{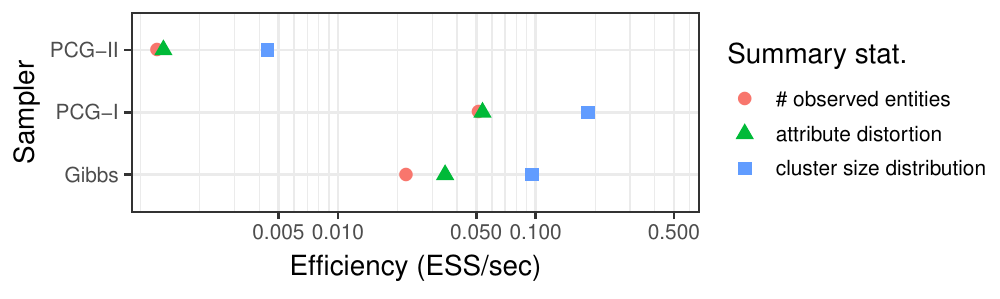}
  \caption{Efficiency of \dblink\ as a function of the 
    sampler and summary statistic of interest (larger is better).
    All measurements are for the \texttt{NLTCS} data set with 
    $B = 16$.
    }
  \label{app-fig:speed-up-vs-sampler}
\end{figure}

\section{Balance of the blocks}
\label{app-sec:balance-partitions}
In Section~\ref{sec:kd}, we proposed a blocking function based on $k$-d 
trees, and argued that it could yield balanced blocks with good entity 
separation.
While running \dblink\ with the $k$-d tree blocking function, we recorded the 
size of the blocks ($|\partset_{b}|$ for all $b$) to assess whether 
they were well-balanced.
Figure~\ref{app-fig:partition-sizes} illustrates the results in terms of the 
relative absolute deviation from the perfectly balanced configuration
(where the entities are divided equally among the blocks). 
We can see that the $k$-d tree partitioner is functioning quite well---the 
deviation from the perfectly balanced configuration is no more than 10\% for 
all data sets.

\begin{figure}
  \centering
  \includegraphics{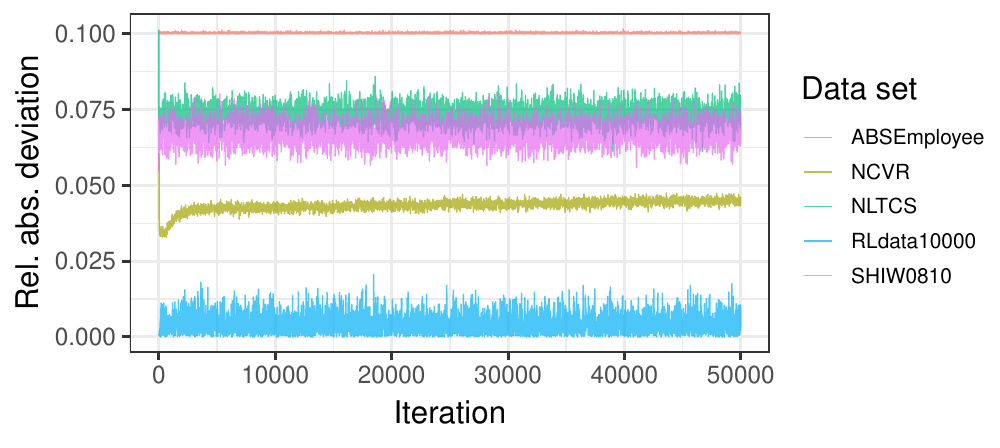}
  \caption{Balance of the blocks for a single run on each
    data set.
    The balance is measured in terms of the relative absolute 
    deviation from the perfectly balanced configuration.
    The number of blocks $B = 64, 64, 16, 2, 8$ for each data set (in the 
    order listed in the legend).
  }
  \label{app-fig:partition-sizes}
\end{figure}

\section{Uncertainty measures}
\label{app-sec:error-num-ents}
\dblink\ allows for measures of uncertainty to be reported, unlike the 
baseline methods, since we have the full posterior distribution.
For example, in Figure~\ref{app-fig:error-num-ents} we compute posterior estimates 
for the number of entities present in each data set, with 95\% 
Bayesian credible intervals.
Note that the posterior estimates are typically quite 
sharp.
This seems to confirm arguments by~\citeApp{steorts_bayesian_2016_app} 
regarding the informativeness of the prior for the linkage 
structure in \blink. 
Research on less informative priors is ongoing~\citepApp{zanella_flexible_2016_app}.

\begin{figure}
  \centering
  \includegraphics{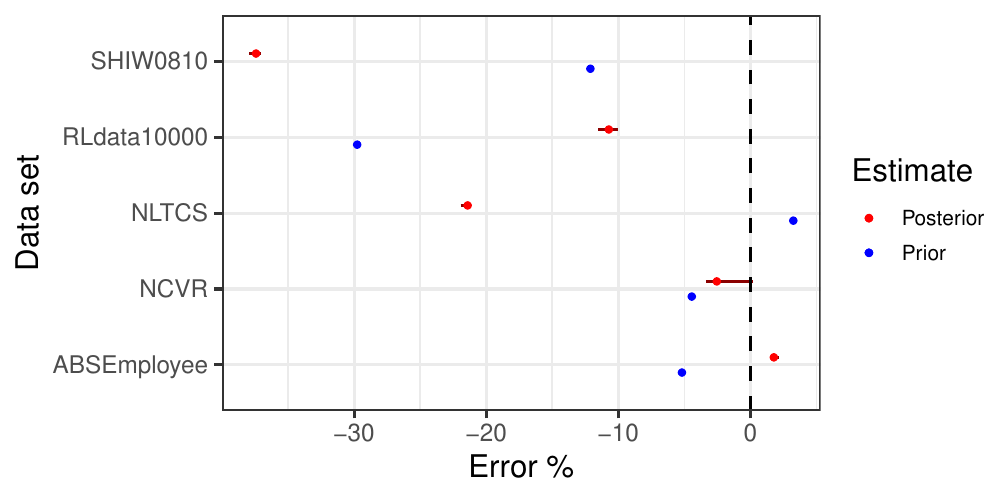}
  \caption{Percentage error in the posterior\slash prior estimates for the 
    number of observed entities for \dblink.
    The posterior estimates are generally sharp and underestimate the true 
    number of observed entities.
  }
  \label{app-fig:error-num-ents}
\end{figure}

\section{Sensitivity analysis}
\label{app-sec:sensitivity}
We conducted an empirical sensitivity analysis for \dblink\ 
using the \texttt{RLdata10000} data set.
We selected this data set as it is relatively small, which made it 
quick to run the inference for various hyperparameter combinations.
The parameters tested were:
\begin{itemize}
  \item $\alpha_\ell, \beta_\ell$: the shape parameters for the Beta prior on 
  the distortion probabilities. 
  We used the same values for all attributes ($a$). 
  \item $E$: the size of the latent population.
  \item $s_{\mathrm{max}}$: the scaling factor for the similarity function. 
  This controls the inverse temperature of the softmax distribution for the 
  distorted attribute values. 
\end{itemize}
We varied each of these parameters in turn, while holding all other parameters 
fixed.
For the Beta prior on the distortion probabilities, we first varied the 
strength while fixing the prior mean to $\sim 1\%$, then we varied the mean 
(1\%, 5\% and 10\%) while fixing $\alpha + \beta$ (related to the strength). 
Table~\ref{tbl:sensitivity} presents the evaluation measures for each 
combination of parameters.
The results indicate that the inferred linkage structure is relatively 
sensitive to all of the parameters, however sensitivity is in general 
predictable, following clear and intuitive trends.
Of particular interest is the fact that the model performs best when the 
Beta prior on the distortion probabilities is sharply peaked near zero. 
It seems that the model has a tendency to overestimate the amount of 
distortion, particularly in the absence of ground truth.

\begin{table}
	\centering
  \caption{Sensitivity analysis for various parameters combinations 
  using \texttt{RLdata10000}. The first group of rows tests the effect of 
  varying the \emph{strength} of the Beta prior, the second group tests the 
  effect of varying the \emph{mean} of the Beta prior, the third group 
  tests the effect of varying the population size, and the fourth group 
  tests the effect of varying the scaling factor for the similarity function.}
	\label{tbl:sensitivity}
	\spacingset{1}
  \footnotesize
  \begin{center}
	\begin{tabular}{*{9}{c}}
		\toprule
		\multicolumn{2}{c}{Distortion} & Pop.\ size & Max.\ sim. & \multicolumn{3}{c}{Pairwise measures} & \multicolumn{2}{c}{Cluster measures} \\
		\cmidrule(lr){1-2} \cmidrule(lr){3-3} \cmidrule(lr){4-4} \cmidrule(lr){5-7} \cmidrule(lr){8-9}
		$\alpha$       & $\beta$          & $E$            & $s_{\mathrm{max}}$ & Precision & Recall & F1-score & ARI & Err. \# clust.\ \\
		\midrule 
    \textbf{0.1}   & \textbf{10.0}    & 10000          & 10.0          & 0.5342 & 0.9990 & 0.6962 & 0.6962 & $-17.47\%$ \\
    \textbf{1.0}   & \textbf{100.0}   & 10000          & 10.0          & 0.5435 & 0.9990 & 0.7040 & 0.7040 & $-16.58\%$ \\
    \textbf{10.0}  & \textbf{1000.0}  & 10000          & 10.0          & 0.6334 & 0.9970 & 0.7747 & 0.7747 & $-10.97\%$ \\
    \textbf{100.0} & \textbf{10000.0} & 10000          & 10.0          & 0.9180 & 0.9850 & 0.9503 & 0.9503 & $-1.595\%$ \\
    \midrule
    \textbf{10.0}  & \textbf{1000.0}  & 10000          & 10.0          & 0.6334 & 0.9970 & 0.7747 & 0.7747 & $-10.97\%$ \\
    \textbf{50.5}  & \textbf{959.5}   & 10000          & 10.0          & 0.6132 & 0.9970 & 0.7593 & 0.7593 & $-11.90\%$ \\
    \textbf{101.0} & \textbf{909.0}   & 10000          & 10.0          & 0.5992 & 0.9970 & 0.7485 & 0.7485 & $-12.90\%$ \\
    \midrule
    10.0           & 1000.0           & \textbf{9000}  & 10.0          & 0.5306 & 0.9970 & 0.6926 & 0.6926 & $-15.65\%$ \\
    10.0           & 1000.0           & \textbf{10000} & 10.0          & 0.6334 & 0.9970 & 0.7747 & 0.7747 & $-10.97\%$ \\
    10.0           & 1000.0           & \textbf{11000} & 10.0          & 0.6999 & 0.9960 & 0.8221 & 0.8221 & $-7.365\%$ \\
    \midrule
    10.0           & 1000.0           & 10000          & \textbf{5.0}  & 0.6927 & 0.9940 & 0.8164 & 0.8164 & $-22.12\%$ \\
    10.0           & 1000.0           & 10000          & \textbf{10.0} & 0.6334 & 0.9970 & 0.7747 & 0.7747 & $-10.97\%$ \\
    10.0           & 1000.0           & 10000          & \textbf{50.0} & 0.2112 & 0.3920 & 0.2745 & 0.2745 & $-12.50\%$ \\
		\bottomrule
  \end{tabular}
  \end{center}
\end{table}

\section{Details of inference for the case study to the 2010 Decennial Census}
\label{app-sec:census-mcmc}
We ran inference for 15,000 iterations using the PCG-I sampler.
After removing 5,000 iterations as burn-in and applying thinning with 
an interval of 10, we obtained 1,000 approximate samples from the 
posterior.
Convergence diagnostics are consistent with those reported for 
the other data sets in Appendix~\ref{app-sec:trace-plots}, 
and are complicated to release due to the fact that the data is 
protected under Title~13. 
Releasing each iteration of a Gibbs sampler could potentially say something 
about individuals in the population, and thus, for privacy reasons, these 
diagnostics are omitted.

\pagebreak
\section{Trace plots}
\label{app-sec:trace-plots}
\subsection{Attribute-level distortion}
The following figures relate to the aggregate distortion per attribute 
for each data set.
On the left are the trace plots, which show the aggregate distortion 
for each attribute (stacked vertically) along the Markov chain.
On the right are the corresponding autocorrelation plots.

\begin{figure}[H]
  \includegraphics[width=0.48\linewidth]{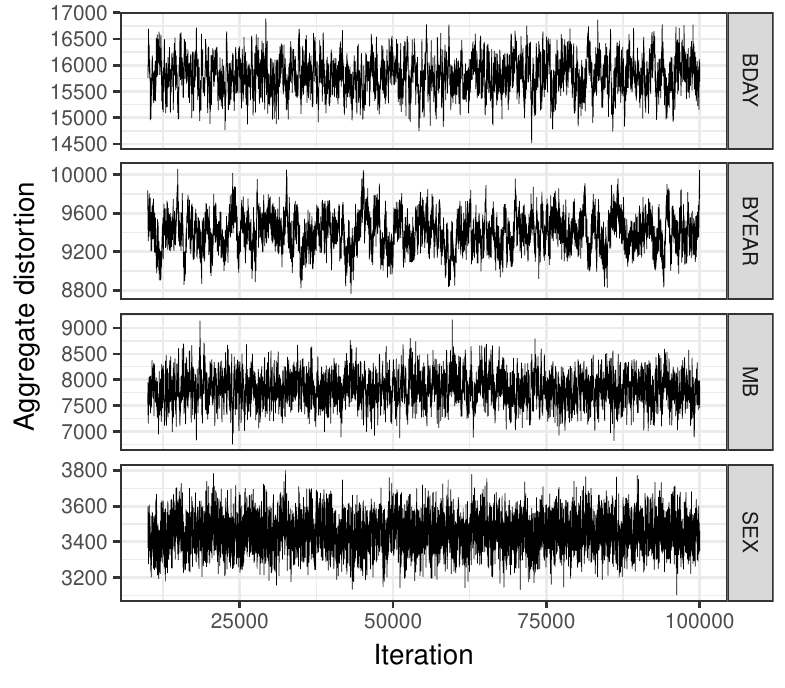} \hfill
  \includegraphics[width=0.48\linewidth]{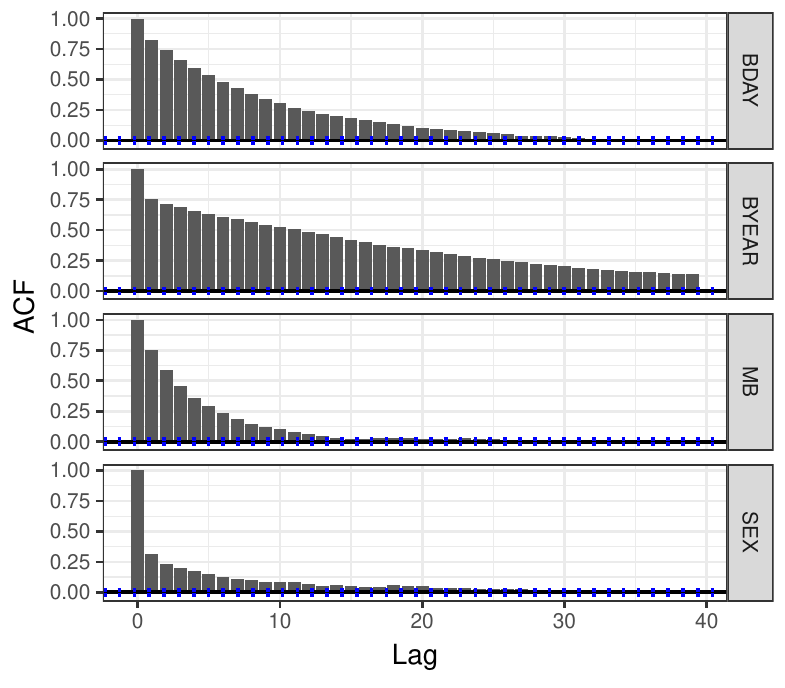}
  \caption{Attribute-level distortion for \texttt{ABSEmployee}}
\end{figure}
\begin{figure}[H]
  \includegraphics[width=0.48\linewidth]{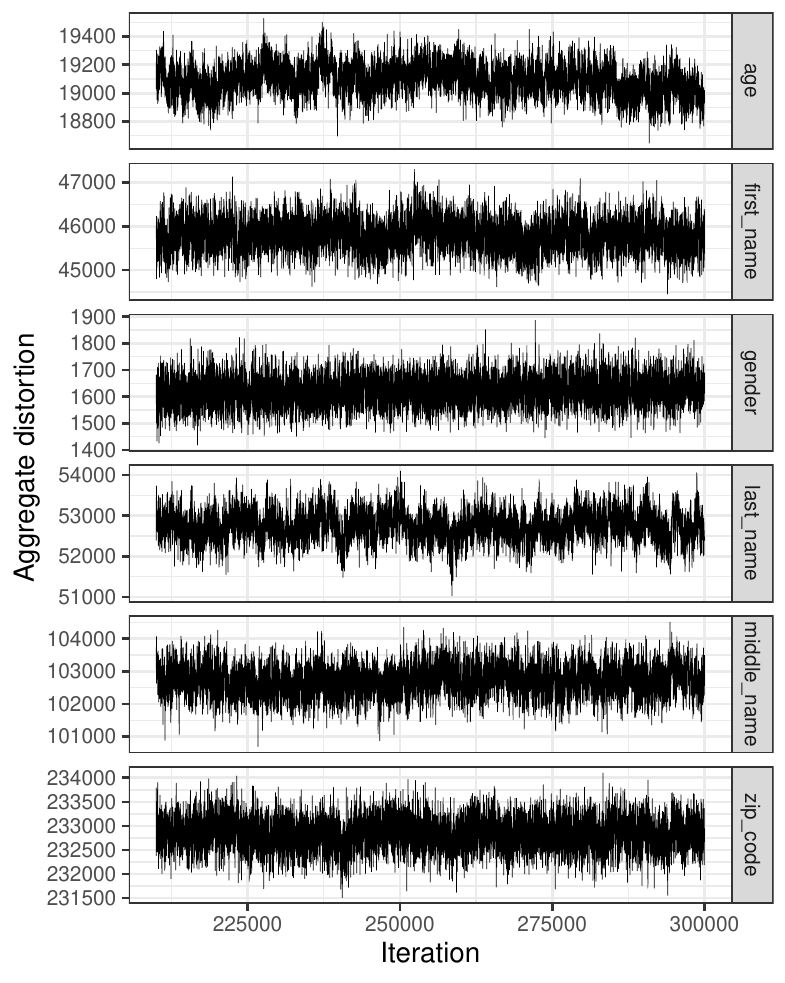} \hfill
  \includegraphics[width=0.48\linewidth]{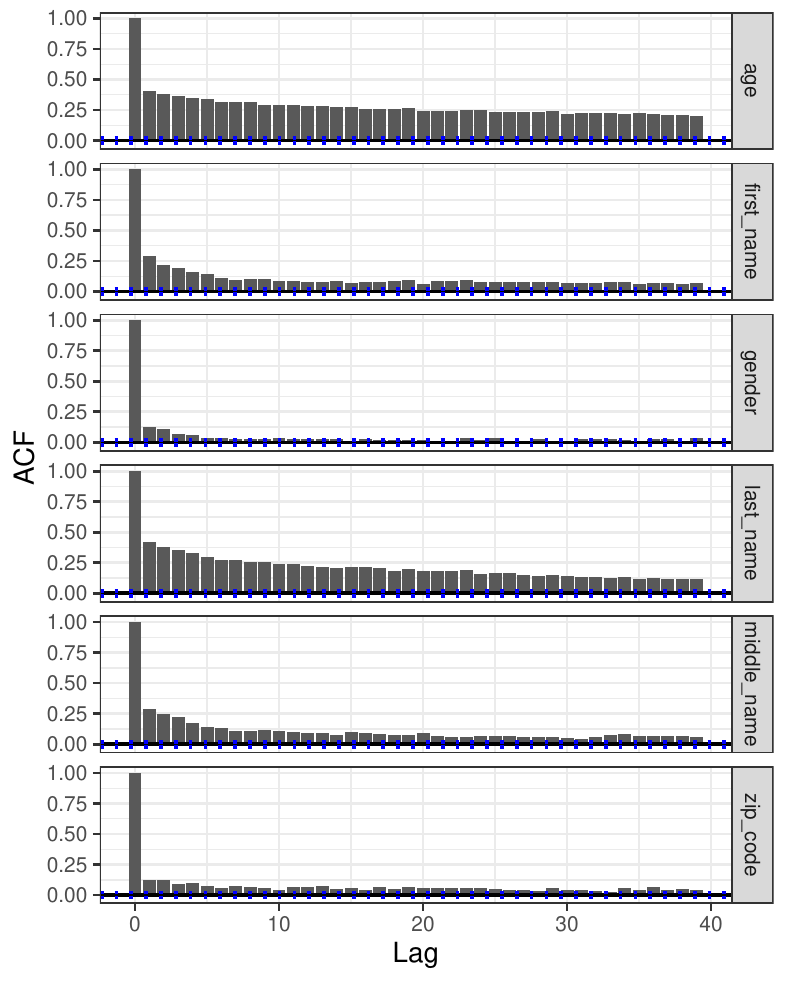}
  \caption{Attribute-level distortion for \texttt{NCVR}}
\end{figure}
\begin{figure}[H]
  \includegraphics[width=0.48\linewidth]{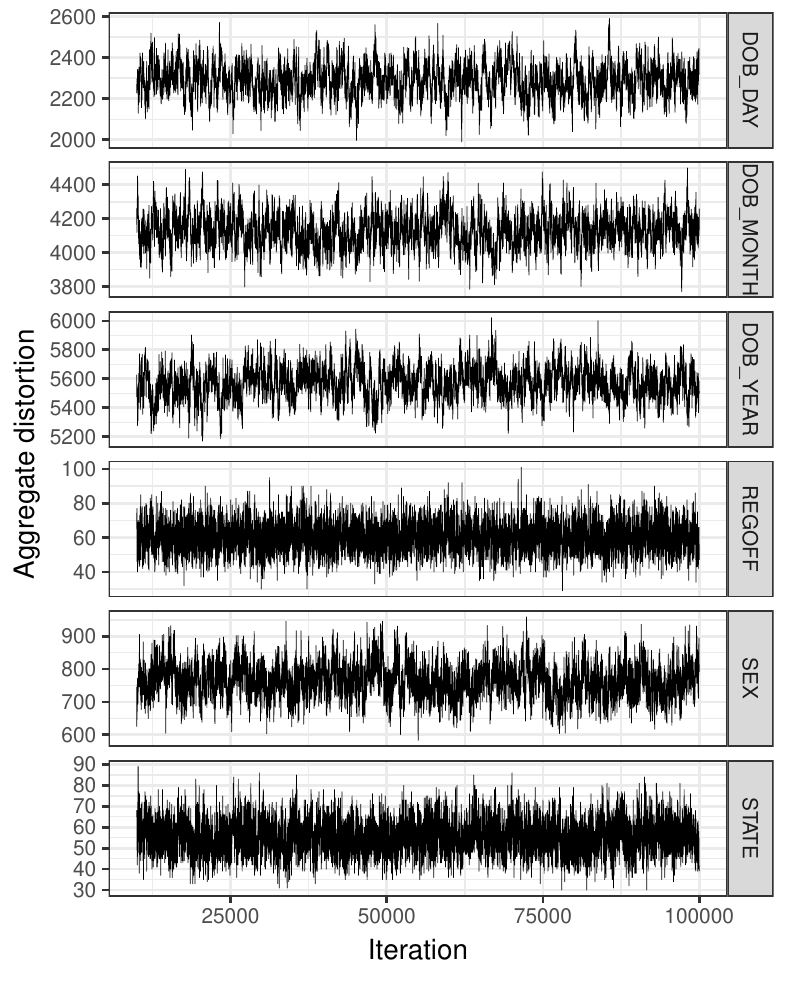} \hfill
  \includegraphics[width=0.48\linewidth]{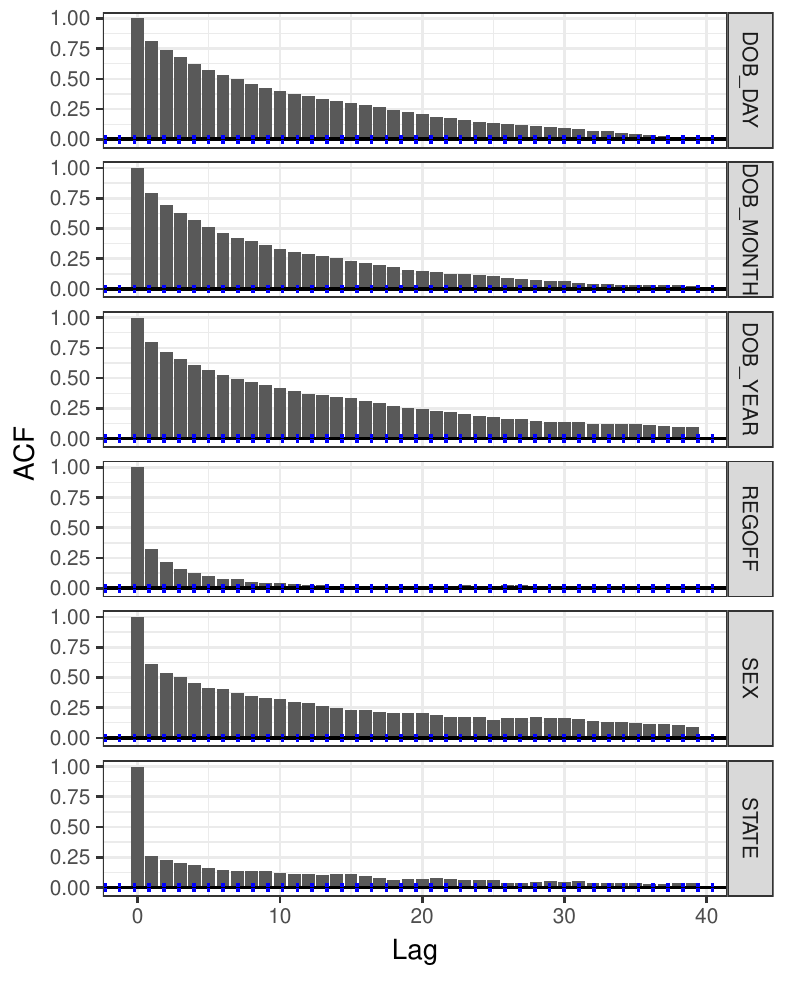}
  \caption{Attribute-level distortion for \texttt{NLTCS}}
\end{figure}
\begin{figure}[H]
  \includegraphics[width=0.48\linewidth]{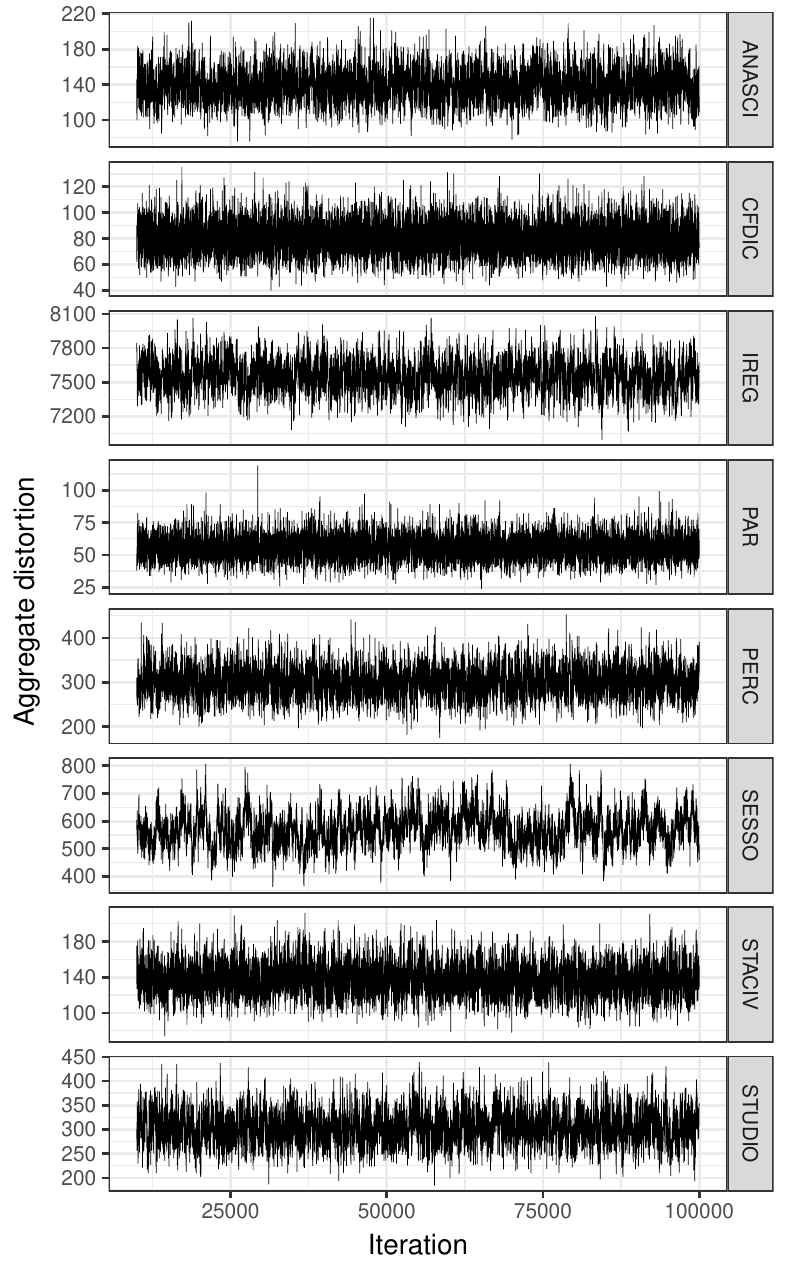} \hfill
  \includegraphics[width=0.48\linewidth]{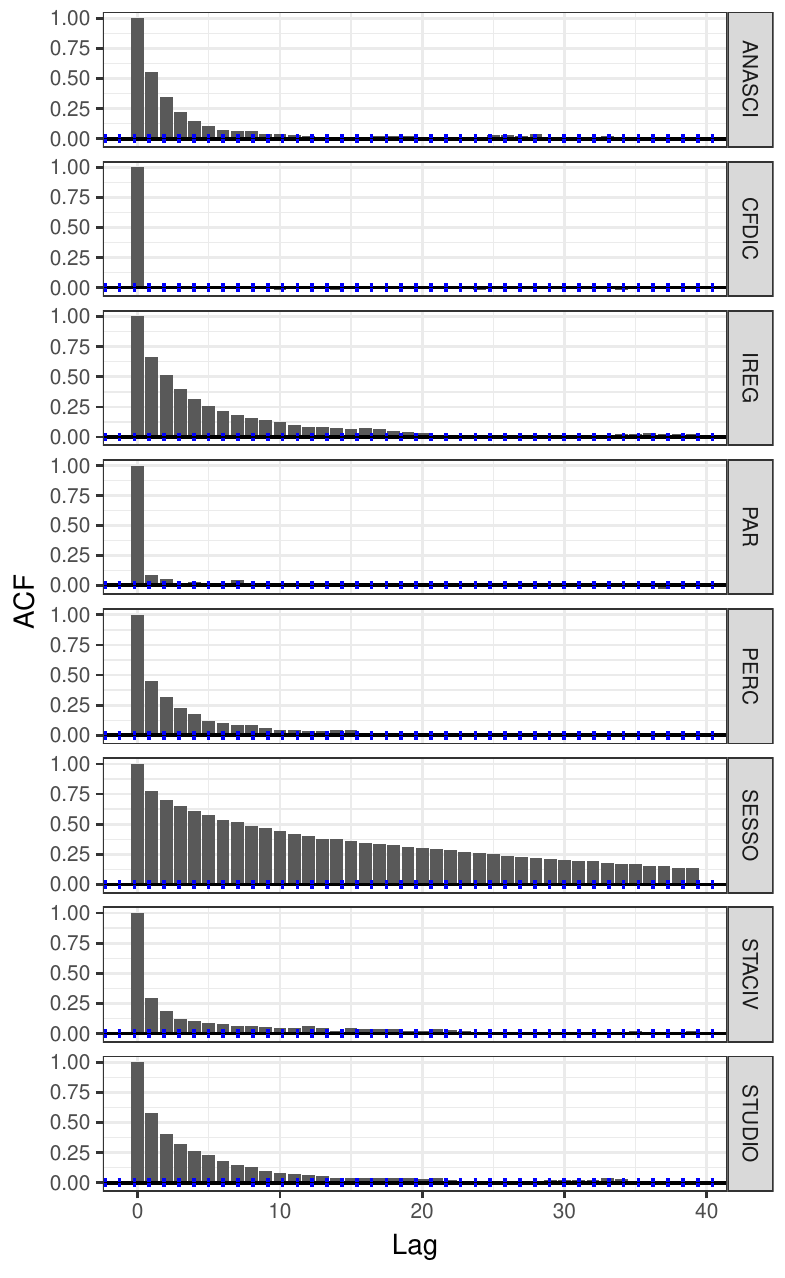}
  \caption{Attribute-level distortion for \texttt{SHIW0810}}
\end{figure}
\begin{figure}[H]
  \includegraphics[width=0.48\linewidth]{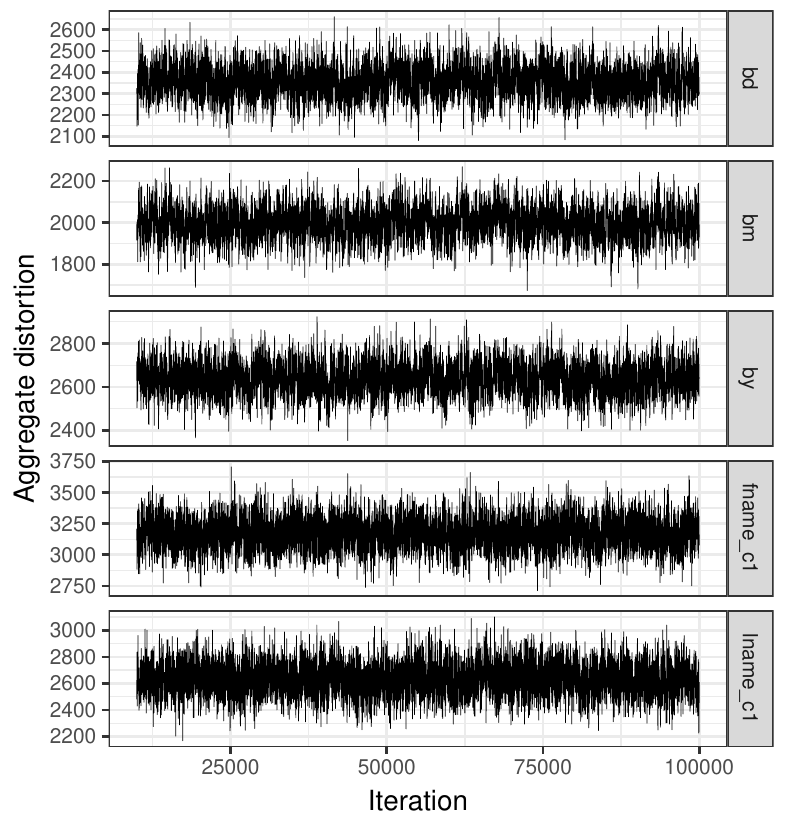} \hfill
  \includegraphics[width=0.48\linewidth]{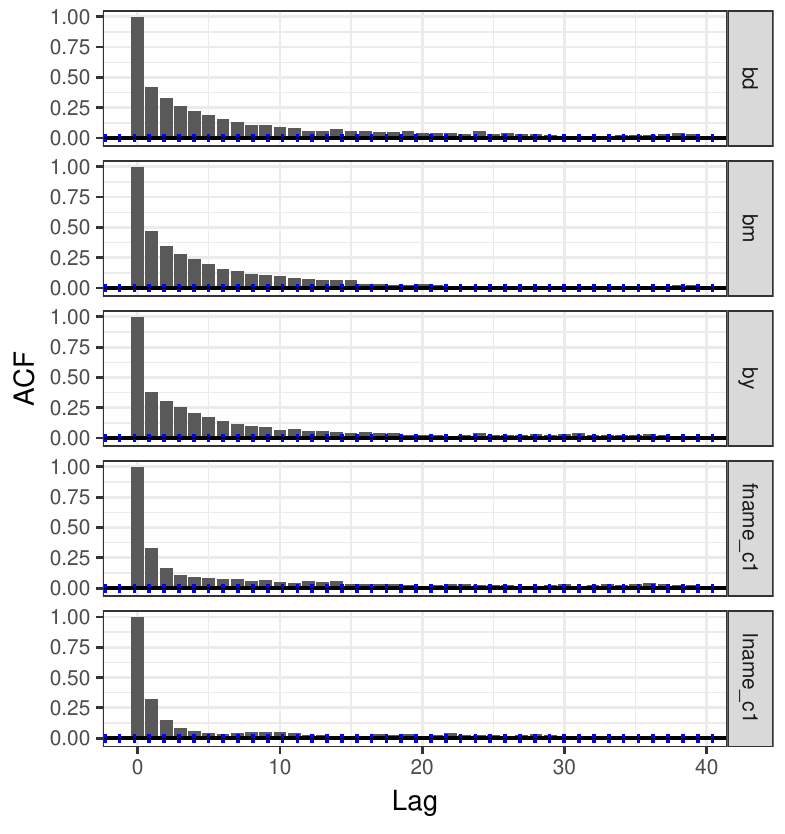}
  \caption{Attribute-level distortion for \texttt{RLdata10000}}
\end{figure}

\subsection{Distribution of record distortion}
The following figures relate to the distribution of record distortion 
for each data set.
Specifically, we count the number of records with 0 distorted attributes, 
1 distorted attribute, 2 distorted attributes, etc.
On the left are the trace plots, which show the record counts for each 
distortion level (stacked vertically) along the Markov chain.
On the right are the corresponding autocorrelation plots.

\begin{figure}[H]
  \includegraphics[width=0.48\linewidth]{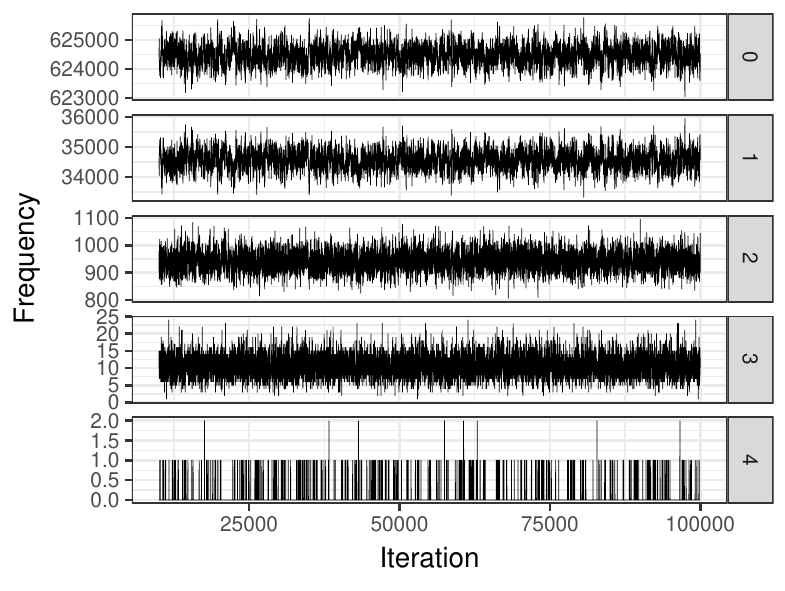} \hfill
  \includegraphics[width=0.48\linewidth]{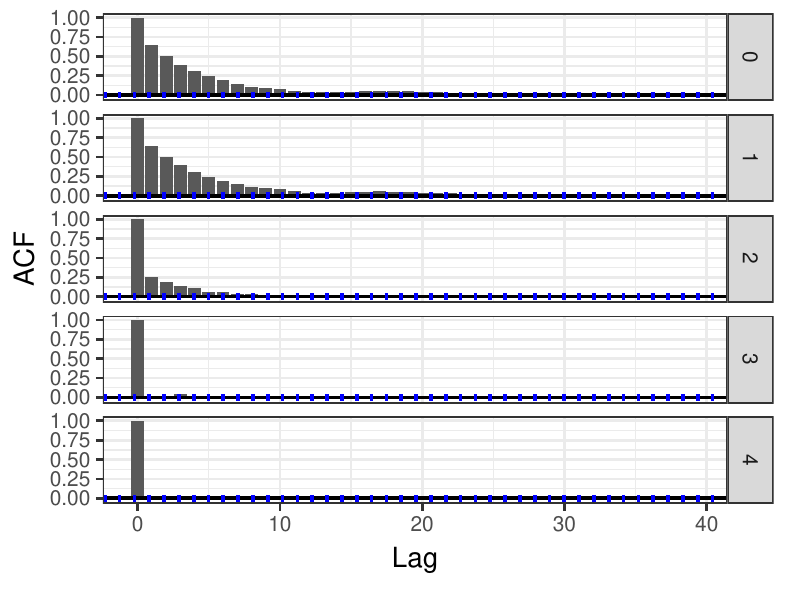}
  \caption{Distribution of record distortion for \texttt{ABSEmployee}}
\end{figure}
\begin{figure}[H]
  \includegraphics[width=0.48\linewidth]{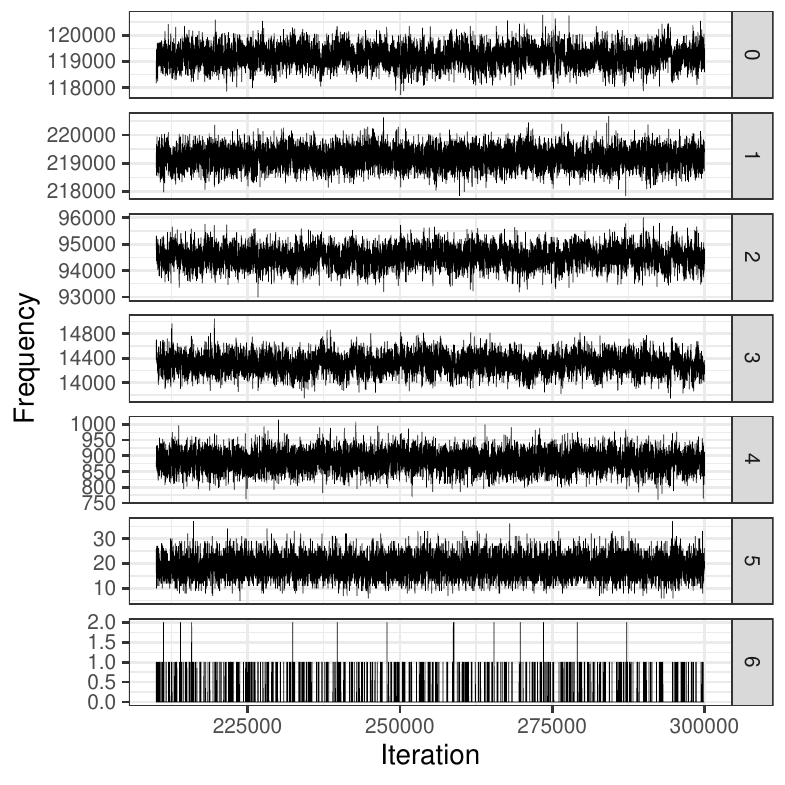} \hfill
  \includegraphics[width=0.48\linewidth]{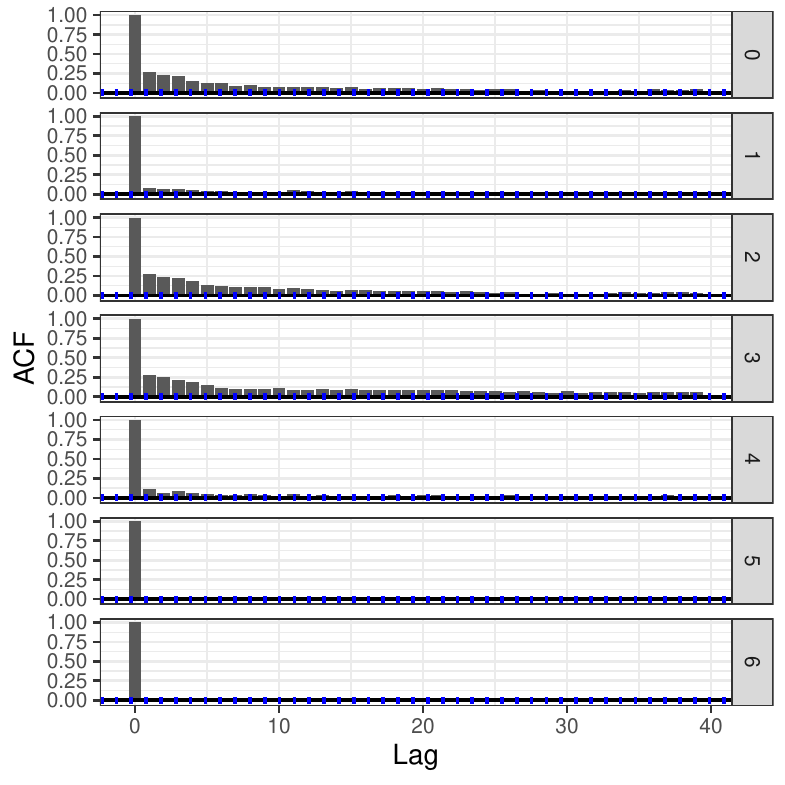}
  \caption{Distribution of record distortion for \texttt{NCVR}}
\end{figure}
\begin{figure}[H]
  \includegraphics[width=0.48\linewidth]{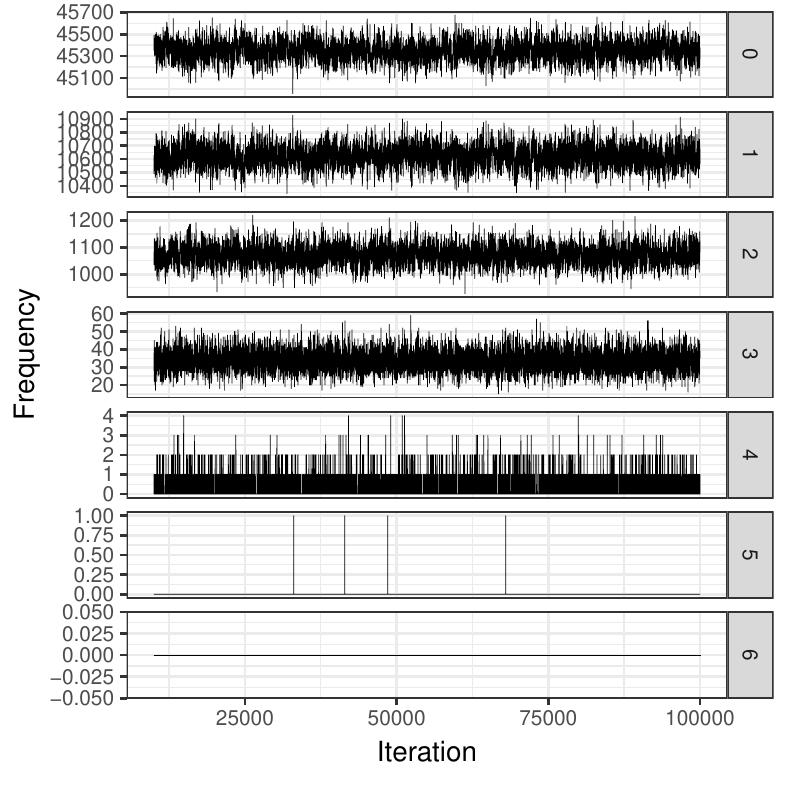} \hfill
  \includegraphics[width=0.48\linewidth]{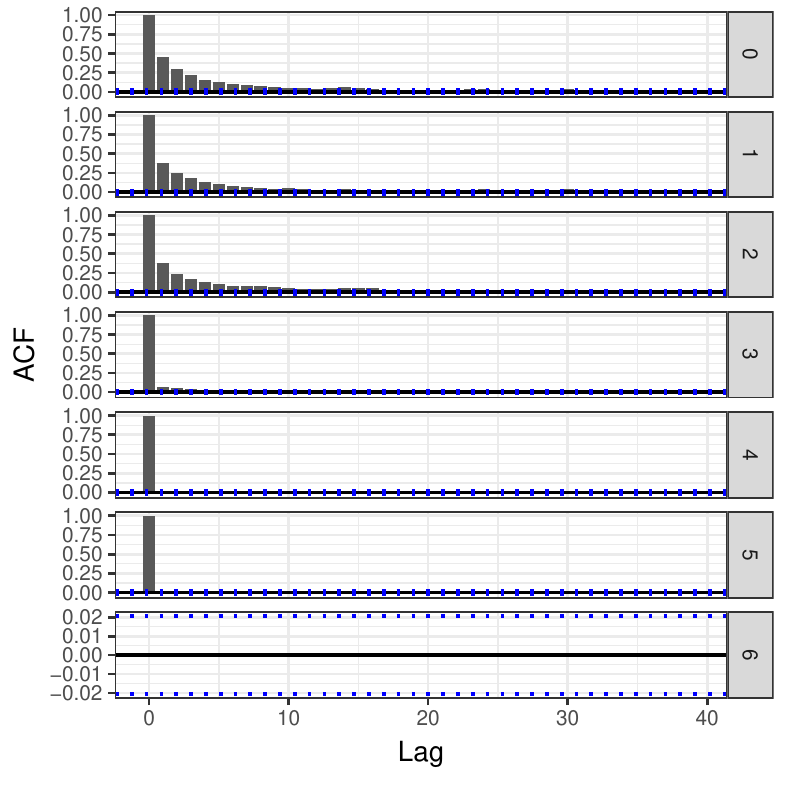}
  \caption{Distribution of record distortion for \texttt{NLTCS}}
\end{figure}
\begin{figure}[H]
  \includegraphics[width=0.48\linewidth]{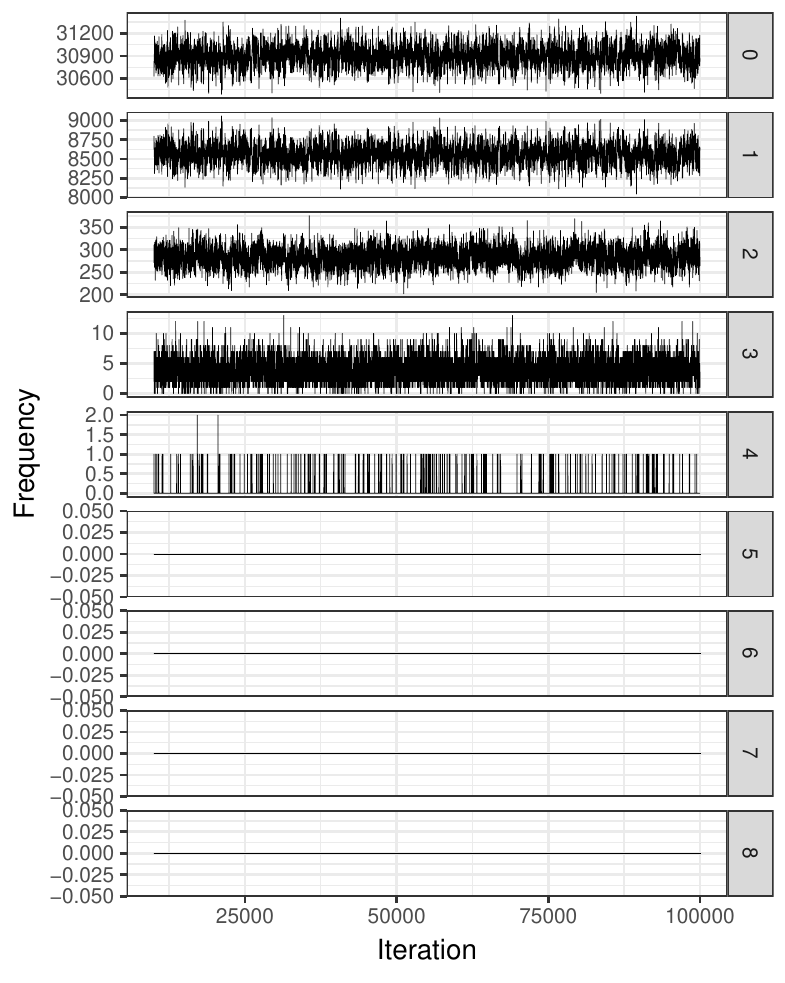} \hfill
  \includegraphics[width=0.48\linewidth]{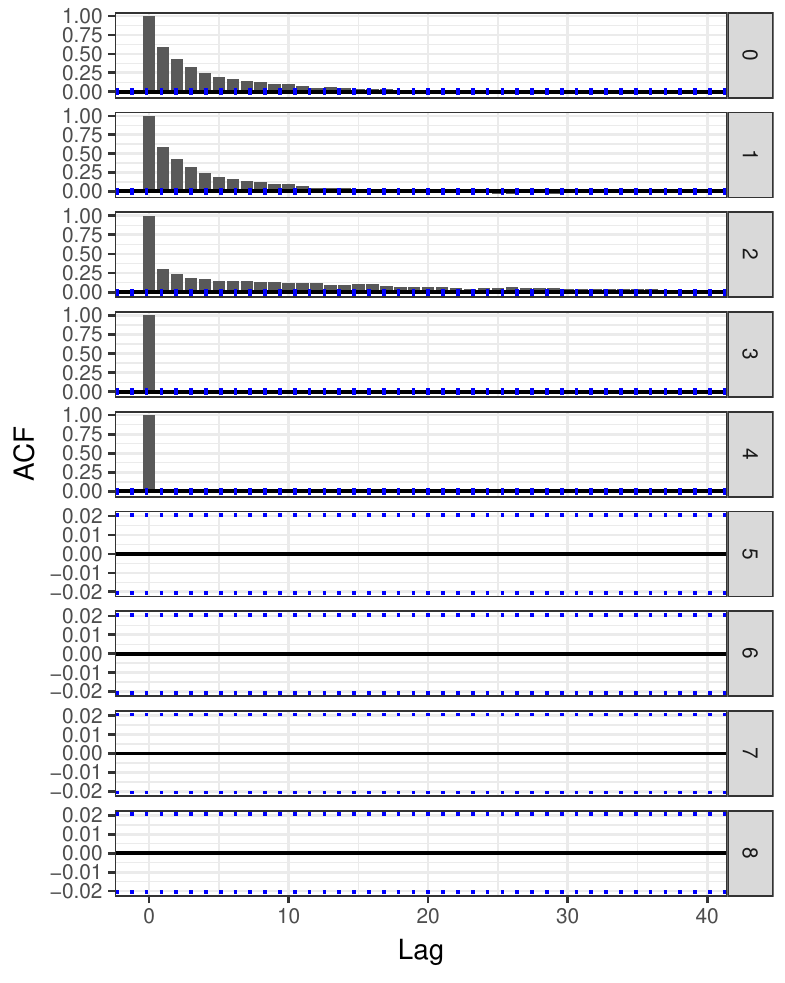}
  \caption{Distribution of record distortion for \texttt{SHIW0810}}
\end{figure}
\begin{figure}[H]
  \includegraphics[width=0.48\linewidth]{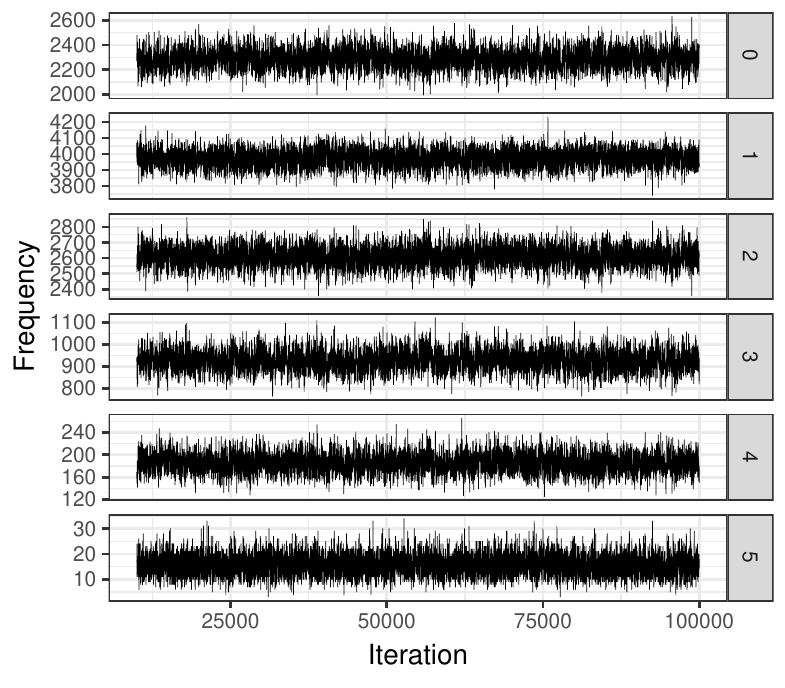} \hfill
  \includegraphics[width=0.48\linewidth]{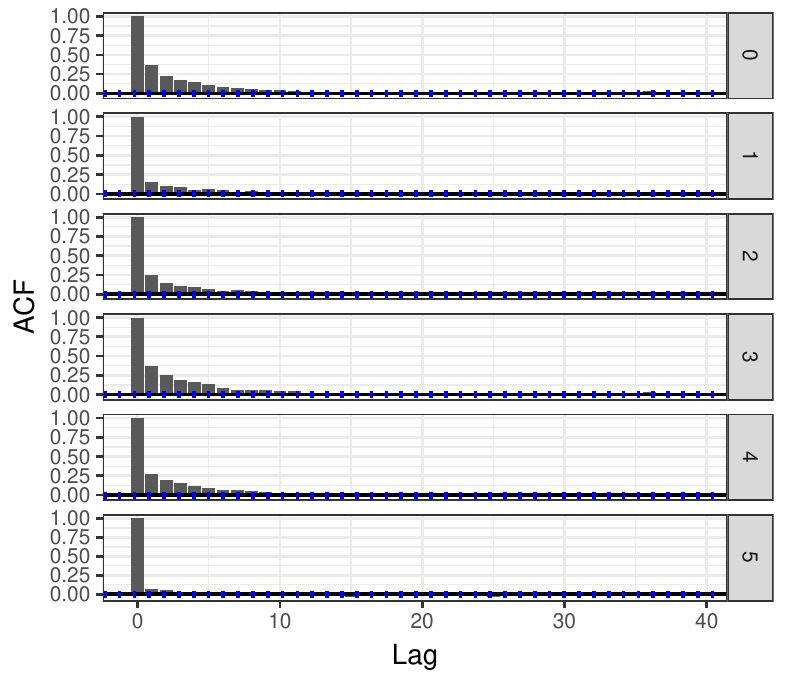}
  \caption{Distribution of record distortion for \texttt{RLdata10000}}
\end{figure}

\subsection{Cluster size distribution}
The following figures relate to the distribution of cluster (entity) sizes 
for each data set.
Specifically, we count the number of entities with 0 linked records, 
1 linked record, 2 linked records, etc.
On the left are the trace plots, which show the counts for each cluster size 
(stacked vertically) along the Markov chain.
On the right are the corresponding autocorrelation plots.

\begin{figure}[H]
  \includegraphics[width=0.48\linewidth]{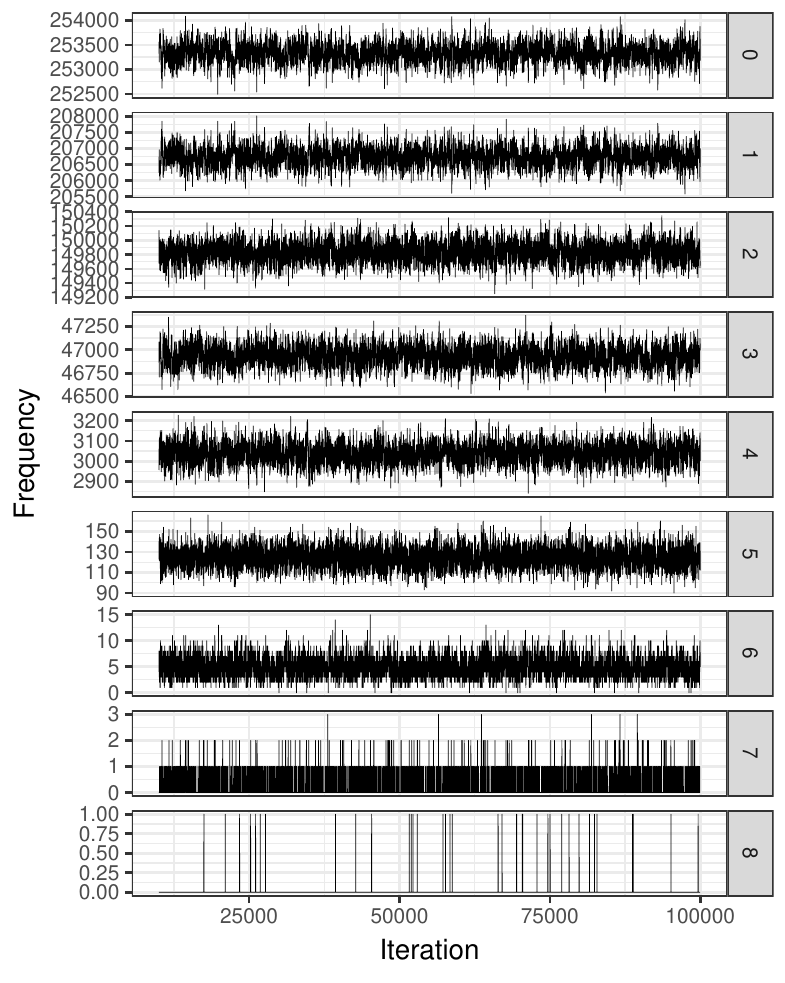} \hfill
  \includegraphics[width=0.48\linewidth]{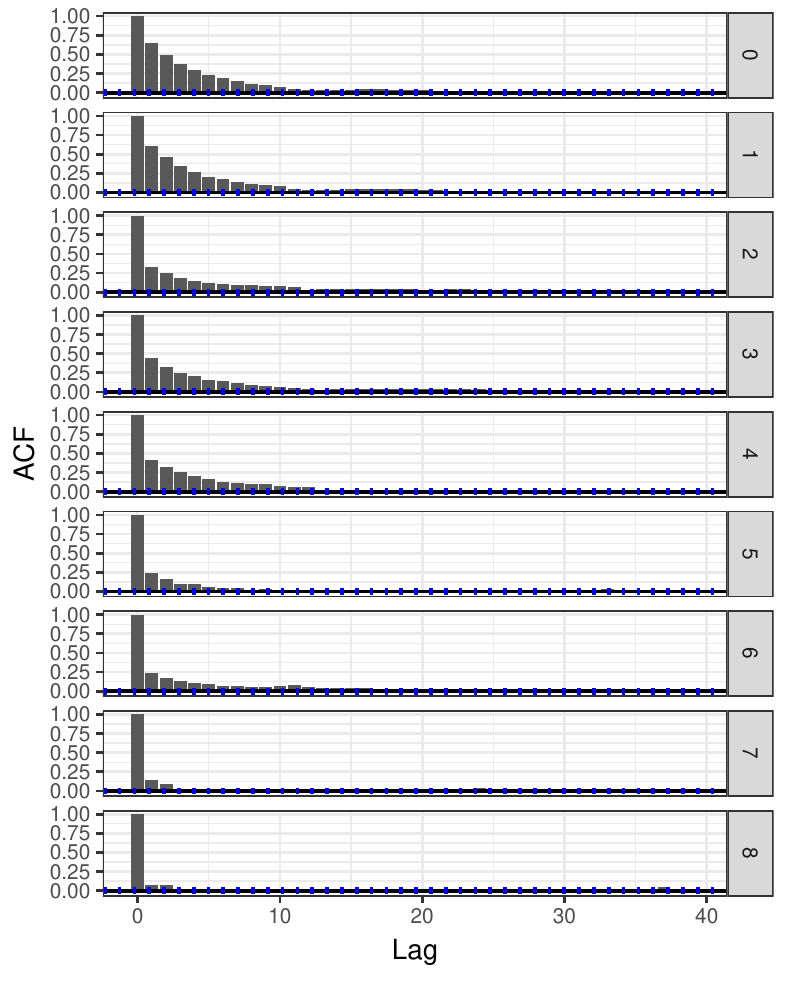}
  \caption{Cluster size distribution for \texttt{ABSEmployee}}
\end{figure}
\begin{figure}[H]
  \includegraphics[width=0.48\linewidth]{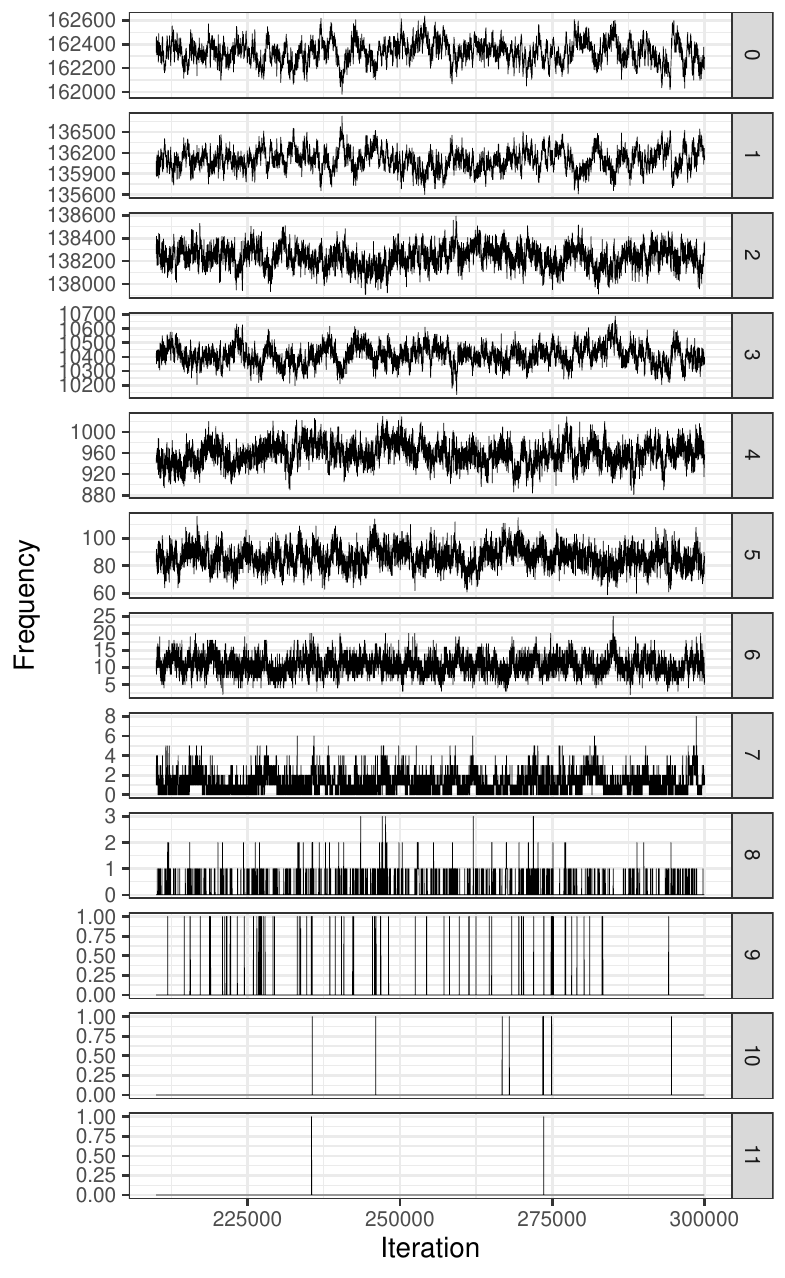} \hfill
  \includegraphics[width=0.48\linewidth]{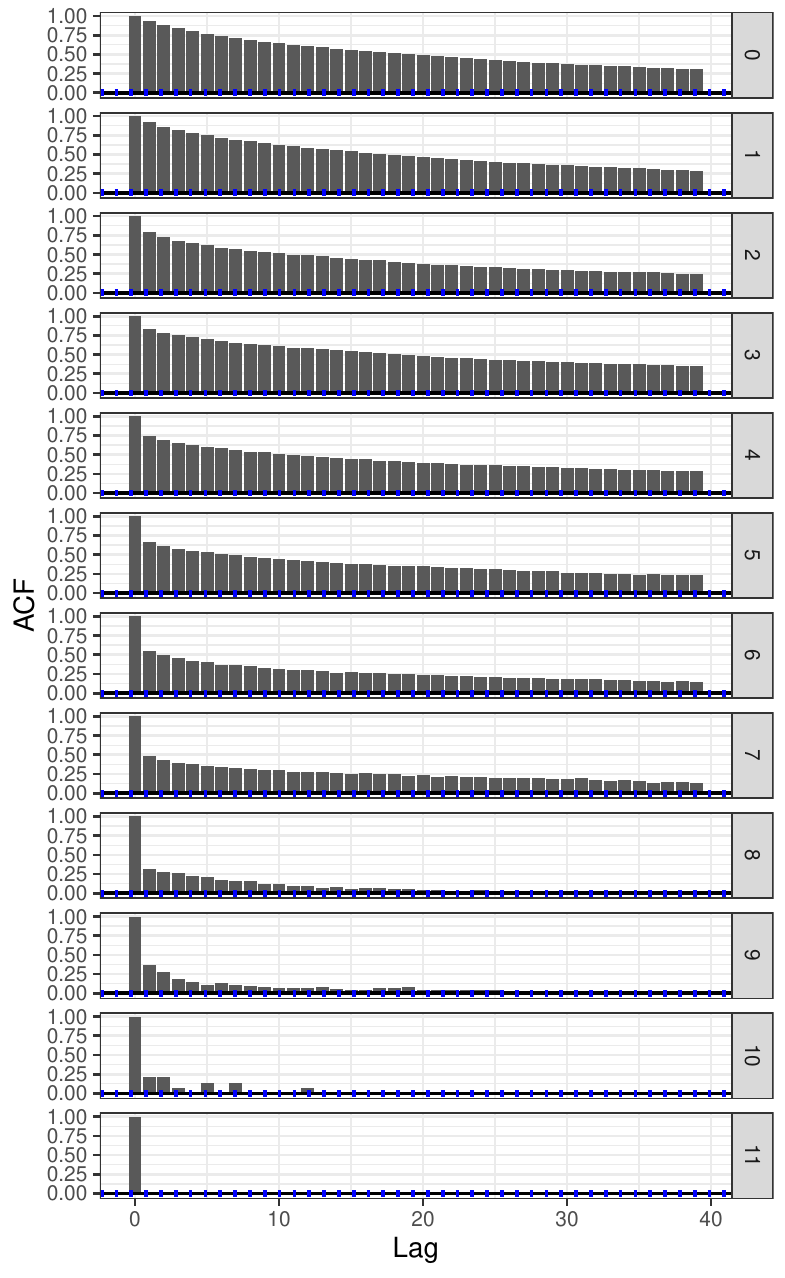}
  \caption{Cluster size distribution for \texttt{NCVR}}
\end{figure}
\begin{figure}[H]
  \includegraphics[width=0.48\linewidth]{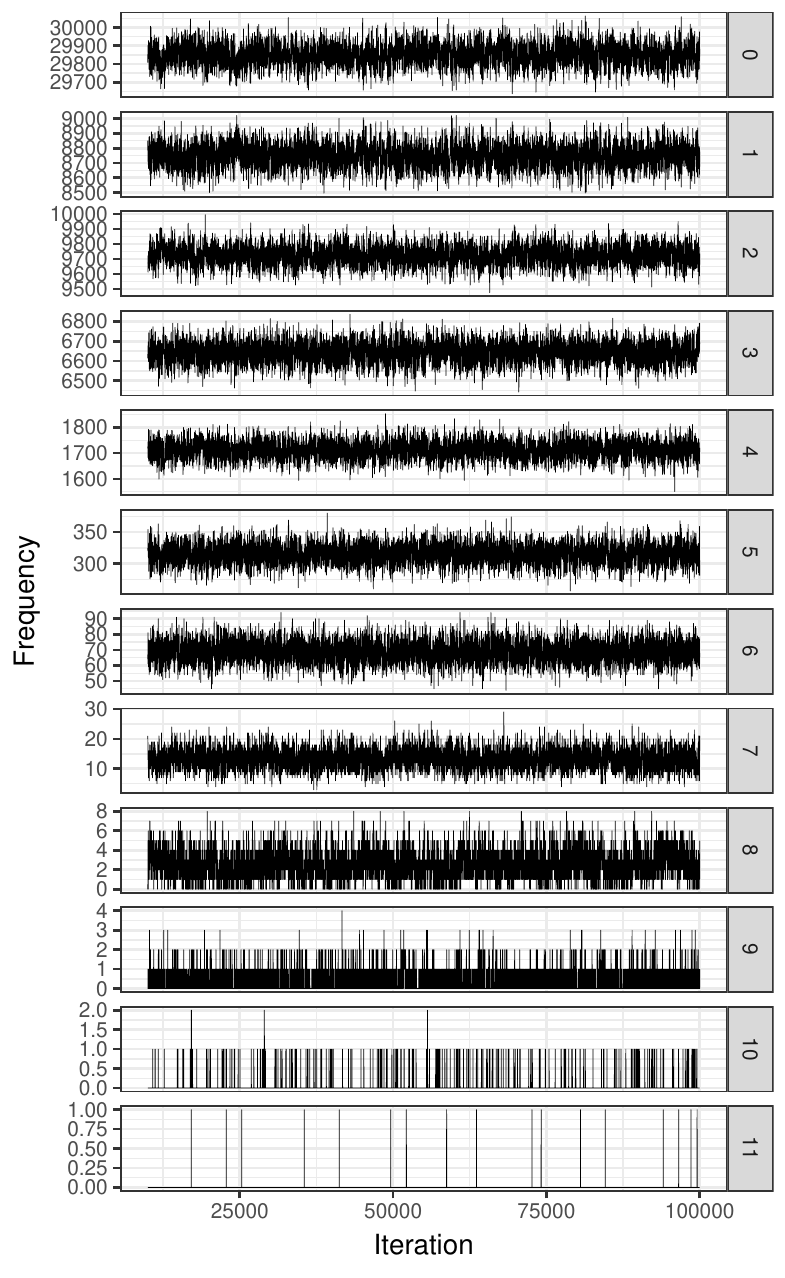} \hfill
  \includegraphics[width=0.48\linewidth]{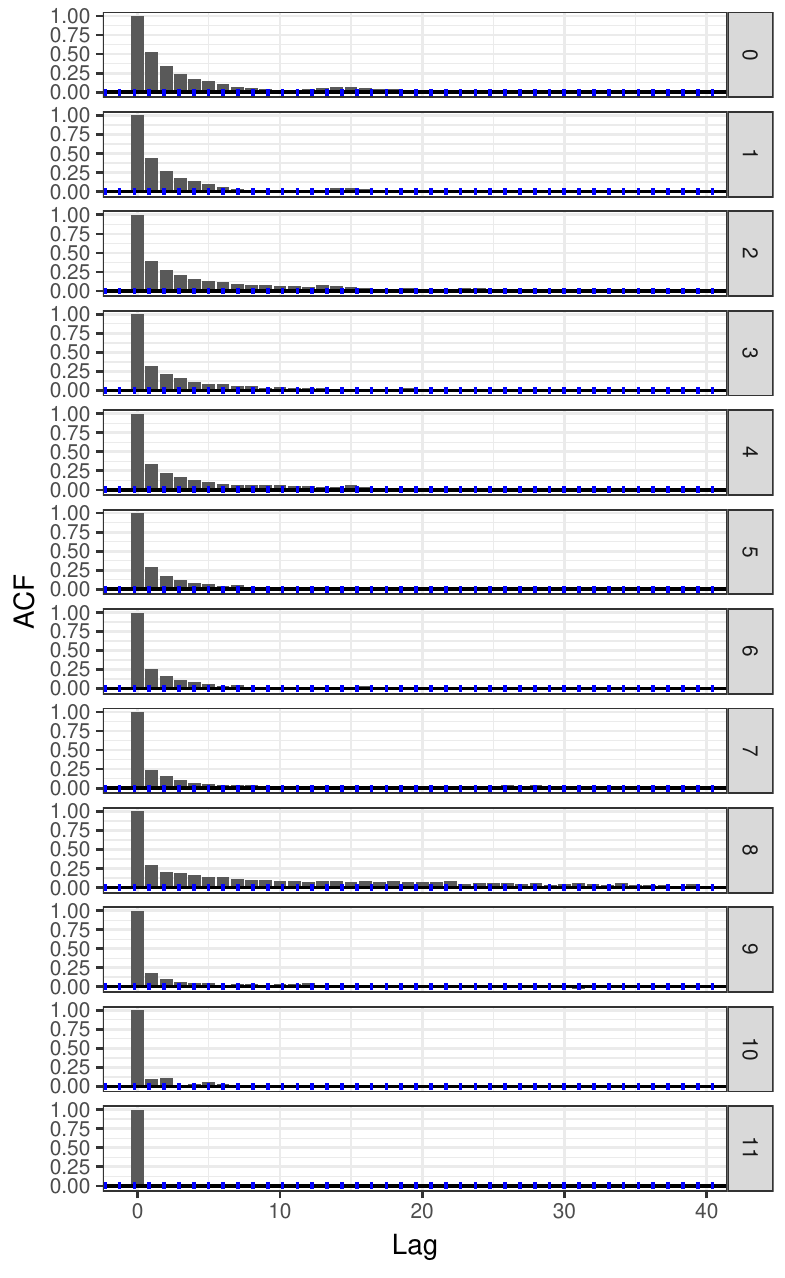}
  \caption{Cluster size distribution for \texttt{NLTCS}}
\end{figure}
\begin{figure}[H]
  \includegraphics[width=0.48\linewidth]{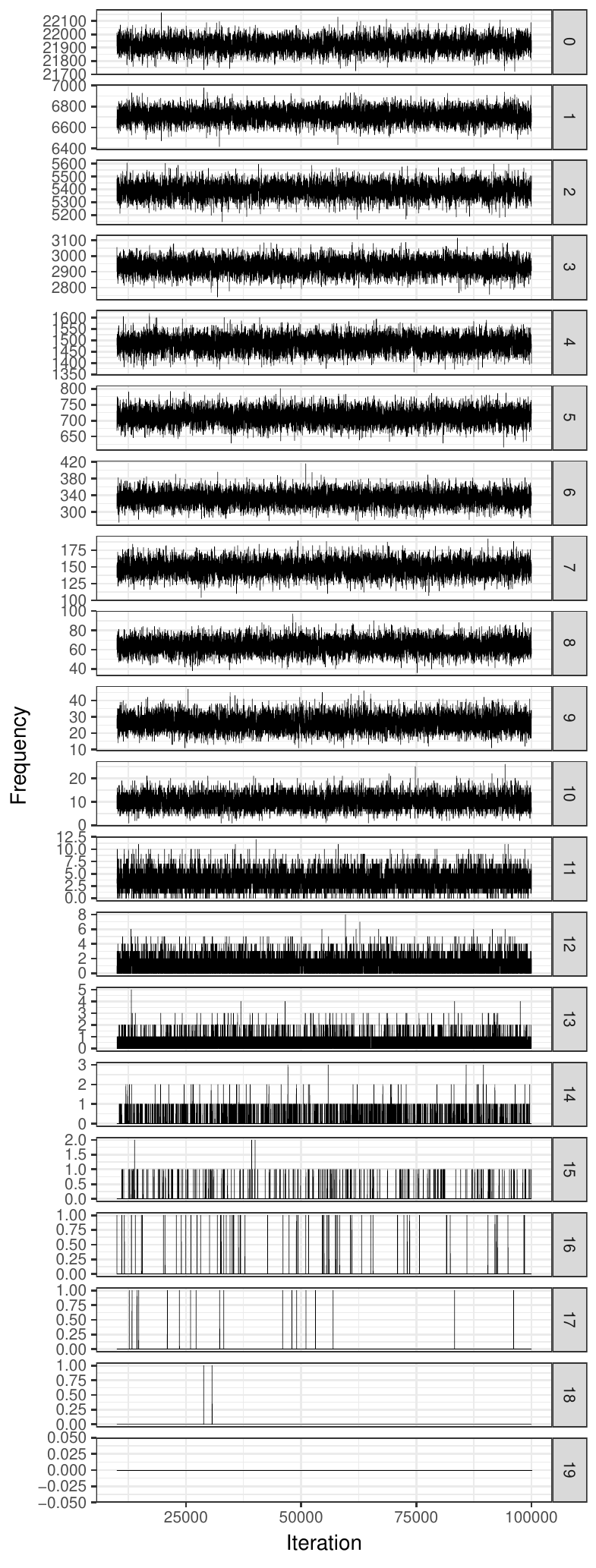} \hfill
  \includegraphics[width=0.48\linewidth]{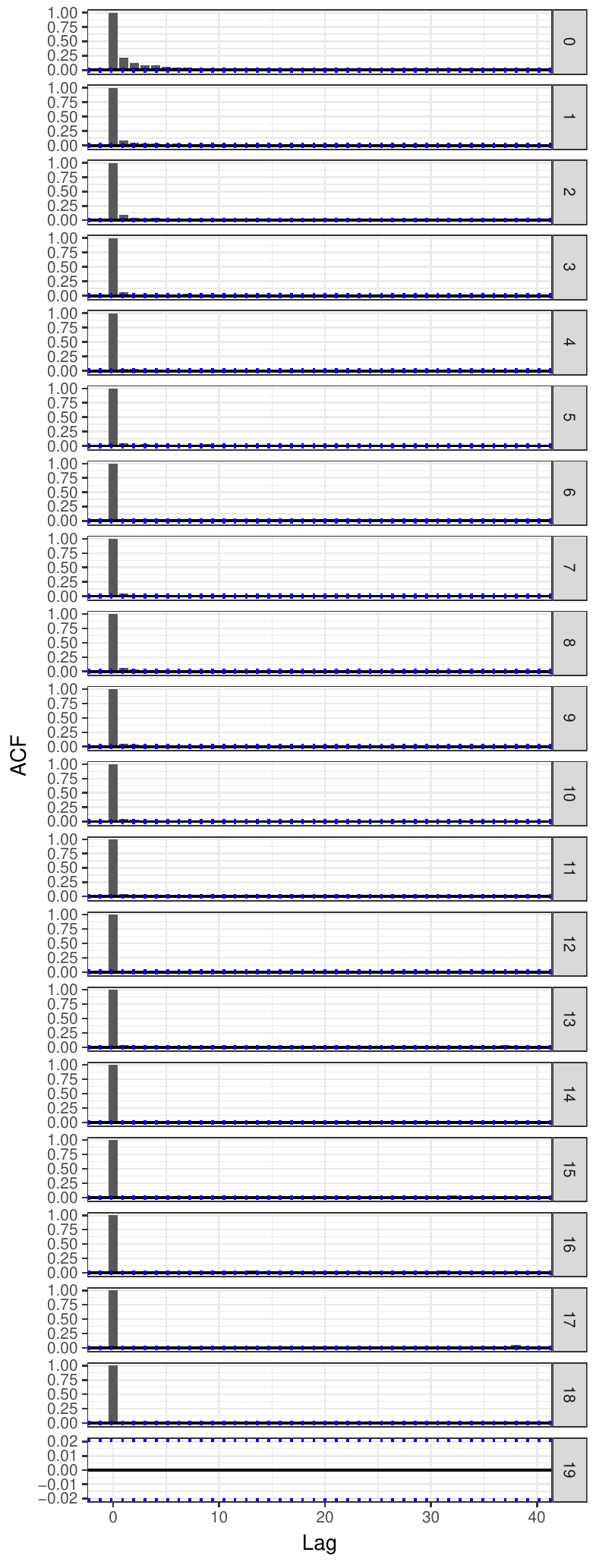}
  \caption{Cluster size distribution for \texttt{SHIW0810}}
\end{figure}
\begin{figure}[H]
  \includegraphics[width=0.48\linewidth]{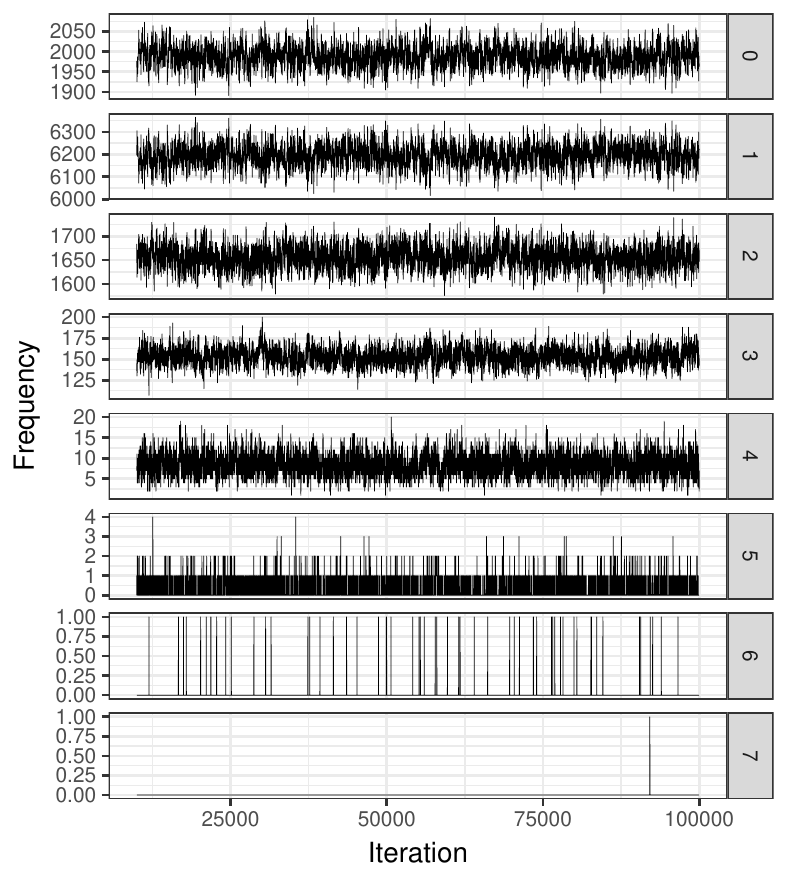} \hfill
  \includegraphics[width=0.48\linewidth]{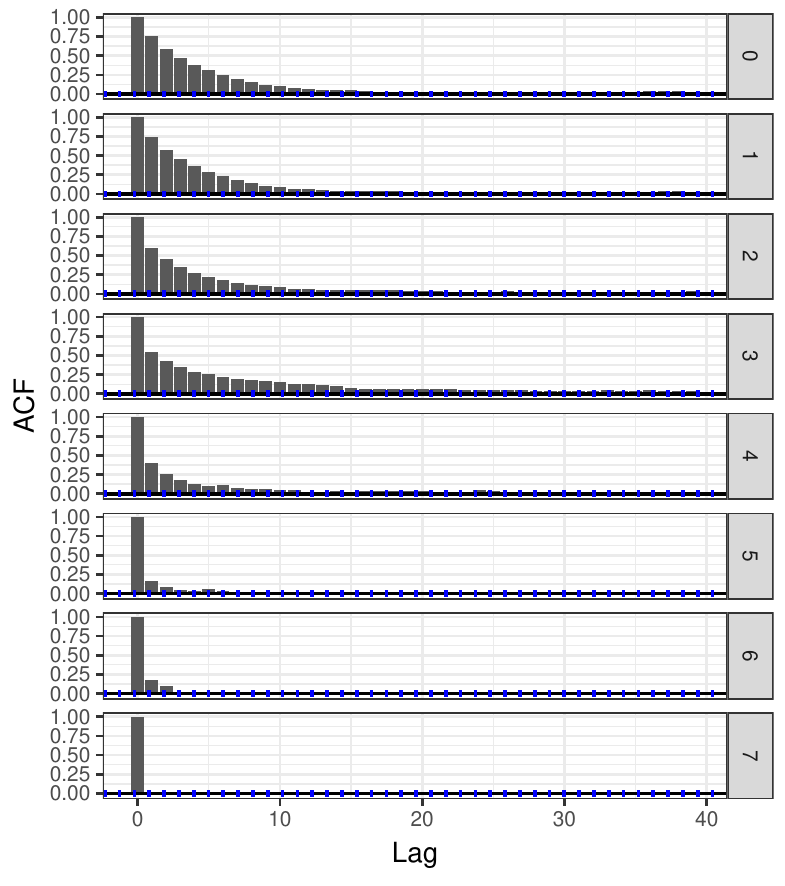}
  \caption{Cluster size distribution for \texttt{RLdata10000}}
\end{figure}

{
\footnotesize
\bibliographystyleApp{jasa}
\bibliographyApp{dblink}
}

\fi

\end{document}